\documentclass[11pt,letterpaper]{article}
\usepackage[letterpaper,margin=1in]{geometry}
\usepackage{graphicx}

\usepackage{mathtools}
\usepackage{amsthm}
\usepackage{url}
\usepackage{amsmath}
\usepackage{hyperref}
\usepackage{xcolor}
\usepackage{enumitem}
\setlist{nosep}
\usepackage{amssymb}
\usepackage{tikz}

\newtheorem{lemma}{Lemma}
\newtheorem{definition}[lemma]{Definition}
\newtheorem{theorem}[lemma]{Theorem}

\newtheorem{cor}[lemma]{Corollary}

\newcommand{\alg}{\ensuremath{\mathcal{A}}}
\newcommand{\mrs}{\text{prev}}

\newcommand{\nf}{\text{notFirst}}
\newcommand{\sts}[1]{|\triangle(#1)|}

\title{Dynamically optimal weight-balanced binary search trees}
\author{John Iacono\thanks{Université libre de Bruxelles. Research supported by the F.R.S.-FNRS.} \and Akash Pareek\thanks{Ashoka University. Research supported by the F.R.S.-FNRS while the author was at the Université libre de Bruxelles.}}
\date{}

\begin{document}

\maketitle

\begin{abstract}
In STOC 1972, J{\"{u}}rg Nievergelt and Edward M.~Reingold introduced the $BB[\alpha]$-tree, also known as a \emph{weight-balanced} tree, which for any $\alpha \in (0,\frac{1}{2}]$ is a binary search tree where every node with induced subtree size $s$ has children with induced subtree sizes at least $\alpha(s+1)-1$.
A new rotation-based binary search tree, the \emph{PopTree}, is introduced. The PopTree stores $n$ elements and is a $BB[\tfrac1{20}]$-tree after the execution of each search and a $BB[\tfrac1{729}]$-tree at all times, notably while rebalancing. For any
constant $\alpha$, its performance on search sequences of length
$\Omega(n \log n)$ is proved to be within a constant factor (which can depend
on $\alpha$) of the optimal offline rotation-based binary search tree that
maintains the $BB[\alpha]$ balance invariant.
\end{abstract}

\section{Introduction}

The dynamic optimality conjecture for general rotation-based binary search trees (BST)~\cite{DBLP:journals/jacm/SleatorT85} posits that there exists an online rotation-based BST algorithm whose performance is within a constant factor of the optimal offline BST algorithm for every sufficiently long sequence; i.e., there exists a $O(1)$-competitive rotation-based BST.
This conjecture has received much attention with numerous partial results, for example~\cite{DBLP:journals/siamcomp/Wilber89,DBLP:conf/soda/WangDS06,DBLP:journals/siamcomp/DemaineHIP07,DBLP:conf/birthday/Iacono13,DBLP:conf/esa/Munro00,DBLP:conf/soda/DemaineHIKP09,DBLP:conf/wads/Fox11,DBLP:conf/icalp/DemaineILO13,DBLP:conf/focs/ChalermsookG0MS15,DBLP:conf/esa/ChalermsookG0MS15,DBLP:conf/wads/ChalermsookG0MS15,DBLP:conf/isaac/ChalermsookG0MS18,DBLP:conf/approx/ChalermsookCS20,DBLP:conf/soda/LevyT19,DBLP:conf/soda/IaconoL16,DBLP:conf/soda/ChalermsookGJAP23,DBLP:journals/corr/abs-2607-18498,DBLP:conf/icalp/Chalermsook0JPY24}.
In addition to the non-constant competitive ratio results~\cite{DBLP:conf/soda/WangDS06,DBLP:journals/siamcomp/DemaineHIP07,DBLP:conf/soda/LevyT19,DBLP:journals/corr/abs-2607-18498},
the failure to obtain dynamic optimality has resulted in the study of adaptive, or beyond-worst-case analysis of general binary search trees to determine which kind of patterns in search sequences can be taken advantage of, and which can not, which has given rise to the study of closed-form bounds including
the working set bound~\cite{DBLP:journals/jacm/SleatorT85,DBLP:conf/icalp/DemaineILO13},
the sequential access bound~\cite{DBLP:conf/wads/Fox11,DBLP:conf/wads/ChalermsookG0MS15},
the dynamic finger bound and weighted dynamic finger bound~\cite{DBLP:conf/soda/IaconoL16},
the unified bound~\cite{DBLP:conf/birthday/Iacono13,DBLP:conf/icalp/DemaineILO13},
and bounds for pattern-avoiding sequences~\cite{DBLP:conf/focs/ChalermsookG0MS15,DBLP:conf/isaac/ChalermsookG0MS18,DBLP:conf/soda/ChalermsookGJAP23,DBLP:conf/icalp/Chalermsook0JPY24}.
This procession of bounds gives some the hope that improving these explicit bounds might lead to a tight bound matching the lower bounds, which so far have been limited to the decades-old bounds of Wilber~\cite{DBLP:journals/siamcomp/Wilber89,DBLP:conf/approx/ChalermsookCS20} and the perhaps-equivalent independent rectangle bounds~\cite{DBLP:conf/soda/DemaineHIKP09}.

However, such rich and intensive study is almost entirely absent from balanced binary search trees, which given their ubiquitousness and the undoubted practical importance of maintaining a logarithmic worst-case running time, is surprising.

This work is meant to be a starting point in answering questions of the form: ``Given binary search trees obeying some balance invariant, can you characterize the best rotation-based BSTs that maintain that invariant from the perspective of competitive ratio, dynamic optimality, or beyond worst case bounds?''
We emphasize that our view here is to maintain the invariants of balanced BSTs while ignoring the classical update methods, instead allowing any rotation-based implementation of the operations that maintains the balance invariant. In this paper, we look at the class $BB[\alpha]$ trees, as balance is their explicit invariant. In doing so we develop a closed-form expression with matching upper and lower bounds which completely characterizes the asymptotic dynamically optimal running time of  $BB[\alpha]$-trees. Unsurprisingly, our dynamic optimality characterization shows that there are many classes of sequences for which balanced BSTs are slower than optimal general BSTs, but this is of course useful information telling you when staying in the class of $BB[\alpha]$-trees is sufficient and when for performance reasons you must leave it. What is more surprising is the explicit bound for dynamically optimal $BB[\alpha]$-trees is not one of the extant BST bounds but rather something new.

\paragraph{The $BB[\alpha]$-tree}
The $BB[\alpha]$-tree is a type of weight-balanced binary search tree (BST) introduced by J{\"{u}}rg Nievergelt and Edward M.~Reingold \cite{DBLP:journals/siamcomp/NievergeltR73}. We begin with their definition:
\begin{definition}
\label{d:balanced}
\textbf{Definition of a $BB[\alpha]$-tree.}
Let $\Delta(x)$ be the size of the subtree induced at node $x$ in a BST and define the weight of a node $x$, $w(x)$, to be $\Delta(x)+1$.
Let $\alpha$ be a constant in the range $(0,\frac{1}{2}]$ and let $\beta \coloneqq 1-\alpha$.
A $BB[\alpha]$-Tree is a binary search tree which for all nodes $x$ with parent $y$: $$\alpha w(y)
\leq w(x) \leq \beta w(y).$$
A missing child has a subtree size of zero and is thus treated as having weight one.
\end{definition}

The reason for using weight instead of subtree size is that even small examples, such as a two-node BST, would not meet such a definition. However, with weights, a two-node tree is a $BB[\frac{1}{3}]$-tree since the root has weight 3 and the missing child of the root has weight 1.
It was shown that no BSTs are $BB[\alpha]$-trees for $\frac{1}{3} < \alpha\leq \frac{1}{2}$ except perfectly-balanced BSTs which are $BB[\frac{1}{2}]$-trees~\cite{DBLP:journals/siamcomp/NievergeltR73}.
One nice side effect of weight is that the weight of a node is the sum of the weights of its children.
The class of $BB[\alpha]$ trees does have its limitations. It is easy to see that a length-$m$ repeated sequential access on a tree with $n$ items takes time $\Theta(m \log n)$ on any $BB[\alpha]$ tree, while on splay trees they take $O(m)$ time \cite{DBLP:journals/combinatorica/Tarjan85}.

We completely characterize the asymptotic running time of the offline optimal $BB[\alpha]$ trees with rotation up to constant factors for sufficiently long sequences of searches. We present a new BST, the \emph{PopTree}, which is a $BB[\tfrac1{729}]$-tree whose performance is proved to be within a constant factor of the optimal offline rotation-based $BB[\alpha]$-tree for any constant $\alpha\in(0,\frac{1}{2}]$ and a sufficiently long sequence of searches.

\paragraph{The model.}
Before presenting this characterization, we define precisely what we mean by a rotation-based $BB[\alpha]$ algorithm, which we will henceforth call a $BB[\alpha]$-algorithm, following the standard formalization of rotation-based BSTs with the addition of the $BB[\alpha]$ balance invariant.

As we do not consider insertions or deletions, we assume that the BST stores the integers from 1 to $n$.
The $BB[\alpha]$ algorithm executes a sequence of searches $X=\langle x_1, x_2, \ldots x_m \rangle$, each of which is an item of $[n]$.
The BST algorithm has a single pointer, which at unit cost begins each search at the root and at unit cost can move the pointer to the parent, left child, or right child, or it can perform a single rotation involving the node pointed to and the parent of the node pointed to.
Any computational time or data stored to determine which unit-cost computations are to be performed are not counted, and an algorithm in the model is free to be offline and make its choices as any function of the entire search sequence.
To execute the search for $x_i$, the algorithm performs any sequence of unit cost operations provided that (i) the pointer at some point touches the node containing $x_i$
and (ii) the tree is a valid $BB[\alpha]$ at all times during the execution.
Each BST algorithm may require an initial starting tree or a class of allowable starting trees.

\paragraph{The optimal running time.}
Let the set $\nf$ be all $i$ such that $x_i$ is not the first search of that item.
Let the \emph{previous search} at time $i$ and range $r$, $\mrs_i(r)$, be defined for any $i \in \nf$ to be $$\mrs_i(r) \coloneqq\max\{j \in [i-1]| x_j \in [x_i-r,x_i+r]  \}.$$
Intuitively, this is the time of the most recent search before $x_i$ with an item within a range $\pm r$ from $x_i$.
As $x_i$ was searched before, $\mrs_i(r)$ is well-defined as it will return the time of the previous search to $x_i$ if no other item within the range $\pm r$ from $x_i$ was searched between the time of $x_i$'s previous search and $i$.

We now define our optimal running time for a search sequence $X$:
\begin{equation} \label{e:opt} OPT(X)
\coloneqq
\sum_{i \in \nf}
\left(
\log \min\{r \geq 1|
x_{\mrs_i(\frac{n}{r})} = x_i\}
+
\log
\frac{n}{\min(x_i,n-x_i+1)}
\right)
\end{equation}

\paragraph{Lower bound.} Our main lower bound result shows that any $BB[\alpha]$ algorithm, including offline algorithms, has a running time of $\Omega(OPT(X))$, where the constant in the $\Omega$ depends on $\alpha$. Our upper bound is the presentation of the PopTree data structure, which is an online $BB[\frac{1}{729}]$-algorithm and whose running time is $O(OPT(X))$ for sequences of length $\Omega(n \log n)$. Together, they show that \eqref{e:opt} is the complete asymptotic characterization of the running time of optimal $BB[\alpha]$ algorithms, and that our PopTree data structure is $O(1)$-competitive with the optimal offline algorithm, otherwise known as being dynamically optimal.

We provide some intuition for \eqref{e:opt}. One simple observation is that not every item in a $BB[\alpha]$ tree can be the root, which implies that the closer an item is to the minimum or maximum, the deeper it must be in the tree. This is formalized in Lemma~\ref{l:extreme} and is responsible for the second term in \eqref{e:opt}.
For the first term, observe that in a $BB[\alpha]$-tree, given two items with consecutive key values, one of them must be at logarithmic depth. By generalizing this observation to items whose keys are $r$ apart, one gets that one of them must be at depth $\Omega(\log \frac{n}{r})$ (Lemma~\ref{l:dun}). So, to execute a search on some item $x$ in time $o(\log \frac{n}{r})$ will be problematic if there was some item $x'$ with $|x-x'|\leq r$ searched recently. Getting the accounting to work for this case is tricky and requires a charging argument. Such an argument is needed because at the time $x$ is searched, $x$ could be shallow and thus cheap to search, and $x'$ depth would be $\Omega(\log \frac{n}{r})$, which superficially is not a problem but gives hints of contours of the charging scheme we develop.

Our $OPT$ formula is new and, in some ways, the inverse of the bounds known for general rotation-based BSTs.
For such BSTs, the bounds usually capture the concepts of spatial locality (sequential access~\cite{DBLP:journals/combinatorica/Tarjan85}, dynamic finger~\cite{DBLP:journals/siamcomp/ColeMSS00,DBLP:journals/siamcomp/Cole00}) and temporal locality (working set bound~\cite{DBLP:journals/jacm/SleatorT85}) and were combined in the unified bound~\cite{DBLP:conf/soda/Iacono01a,DBLP:journals/tcs/BadoiuCDI07}. The unified bound says that a search is fast if it is close (in key-space) to something that was searched recently. Compare this to our $OPT$ bound, which intuitively says that a search is fast if nothing close (in key-space) was searched recently (since the previous search of the item). It is fascinating that arbitrary BSTs reward locality in key-space while weight-balanced BSTs punish it!

\paragraph{Upper bound. } Our PopTree data structure is presented first as a type of skip list~\cite{DBLP:journals/cacm/Pugh90}, where each level has invariants
that imply the minimum and maximum number of keys that can be spanned between items. This contrasts with the typical skip list's concern with bounding between two items, and with the number of items one level down.
We also have carefully crafted invariants that result in recently searched items being closer to the top of the list if possible, but where the converse does not hold, thus allowing a certain flexibility in restructuring.
We promote (pop!) the searched item to the top level of the skip list and then do a series of promotions and demotions to restore the invariants of the structure.
We convert the skip list to a BST structure via the intermediate step of a B-tree.
See Figure~\ref{fig:skippoplist-to-bst} for an illustration of a SkipPopTree and the conversion process to a BST via the intermediate step of a B-tree.
This conversion process is largely standard, and our invariants are designed to show easily that the resulting BST is a $BB[\tfrac1{729}]$-tree at all times and a $BB[\tfrac1{20}]$-tree after the completion of any operation. The $\alpha=\frac{1}{729}$ is only needed during transient states in the middle of an execution, but it serves to show that the $BB[\alpha]$ invariant can be maintained at all times, which matches what we require in our lower bound.
\begin{figure*}[p]
  \centering
  \resizebox{\textwidth}{!}{
}\\[4pt]
  {\small (a)~The SkipPopList on $n=3^4=81$ keys. The buckets are defined based on powers of three. Extreme buckets must be empty, occupied buckets cannot be adjacent, and three vacant buckets in a row are prohibited.}\\[14pt]

  \resizebox{\textwidth}{!}{%
}\\[4pt]
  {\small (b)~The PopBTree obtained from (a): each maximal run of consecutive
   top items of a level forms one node.}\\[14pt]

  \resizebox{\textwidth}{!}{%
}\\[4pt]
  {\small (c)~The PopTree (BST): the right-chain expansion of (b), with each
   former node shown as a shaded oval.}

  \caption{From a SkipPopList to a weight-balanced BST, on one instance.}
  \label{fig:skippoplist-to-bst}
\end{figure*}

\paragraph{Key-independent optimality. }
In~\cite{DBLP:journals/algorithmica/Iacono05}, Iacono introduced the notion of key-independent optimality. Given a search sequence $X$ and a random bijection $f:[n]\mapsto [n]$, let $f(X)=\langle f(x_1), f(x_2), \ldots, f(x_m) \rangle$. The intuitive idea is that if we do not expect any correlations in our data based on proximate key values, for example, if key values are some sort of randomly assigned ID, the distributional properties of $X$ and $f(X)$ should not differ. In~\cite{DBLP:journals/algorithmica/Iacono05} it was shown that the expected value of executing $f(X)$ on the offline optimal BST with rotations was asymptotically the same as what is known as the working set property. As splay trees were shown to have the working set property~\cite{DBLP:journals/jacm/SleatorT85}, this showed the existence of a BST that, given any sequence $f(X)$, executes the sequence in time asymptotically the same as the best offline optimal. Such structures were called key-independently optimal.
We show in Section~\ref{s:ws} that in expectation $OPT(f(X))$ is
asymptotically equal to the working-set bound.
Thus, we show our PopTree is key-independently optimal and more significantly show that $BB[\alpha]$ trees, while much more restricted than general rotation-based BSTs, are all that is needed in the key-independently optimal setting.

\subsection{Related work}

\paragraph{Weight-balanced trees.}
The $BB[\alpha]$-tree, or weight-balanced tree, was introduced by Nievergelt
and Reingold \cite{DBLP:journals/siamcomp/NievergeltR73}, who proposed maintaining balance by
subtree weight rather than by height, achieving logarithmic search and update
times using single and double rotations. Most subsequent work on the
structure has concerned not new algorithmic capabilities but the correct
choice of the balance parameter and of the rebalancing rules. Blum and
Mehlhorn \cite{DBLP:journals/tcs/BlumM80} identified an error in the original
analysis, established the admissible parameter range, and proved that the
amortized number of rebalancing operations per update is constant. Lai and
Wood \cite{DBLP:journals/ijfcs/LaiW93} gave a top-down rebalancing procedure.
The definitive treatment of the parameter question was given by Hirai and
Yamamoto \cite{DBLP:journals/jfp/HiraiY11}, who precisely characterized the
valid parameter space and formally verified the rotation conditions in the
Coq proof assistant, showing that the rules as originally stated fail for
some parameters previously believed valid. Weight-balanced trees found
renewed relevance in functional programming through the variant of Adams
\cite{DBLP:journals/jfp/Adams93}, which underlies the standard persistent set
and map libraries of several functional languages. Recently Jug{\'e}
\cite{DBLP:conf/wads/Juge25} introduced a relaxation in which the weight
condition is imposed between nodes and their grandchildren. In all of this
work, the object of study is the maintenance of the invariant itself. The
question we ask here: the best running time achievable by \emph{any}
sequence of rotations preserving the invariant appears not to have been
studied.

\paragraph{Dynamic optimality for binary search trees.}
Sleator and Tarjan \cite{DBLP:journals/jacm/SleatorT85} introduced splay
trees and conjectured that they are $O(1)$-competitive with the optimal
offline rotation-based BST. This \emph{dynamic optimality conjecture} remains
open, the best known competitive ratio for any BST is the $O(\log \log n)$ of
tango trees \cite{DBLP:journals/siamcomp/DemaineHIP07}. See
\cite{DBLP:conf/birthday/Iacono13} for a survey. Our result can be read as a
resolution of the dynamic optimality question for the class of BST algorithms
that maintain the $BB[\alpha]$ invariant.

\paragraph{Dynamically optimal skip lists and B-trees.}
The work closest in spirit to ours is that of Bose, Dou{\"i}eb, and
Langerman \cite{DBLP:conf/soda/BoseDL08} on dynamic optimality for skip
lists and B-trees. They showed that for skip lists satisfying a weak
balancing property, the working-set bound is a lower bound on the cost of
executing any search sequence. They constructed a deterministic
self-adjusting skip list, which promotes each searched item to the topmost
level and then restores the balancing property, whose cost matches the
bound. The structure is therefore dynamically optimal within its class, with
consequences for $(a,b)$-trees, B-trees, and their BST encodings via
standard dualities \cite{DBLP:conf/ACMse/DeanJ07}. Their result and ours
exhibit the same phenomenon: while dynamic optimality for unrestricted
rotation-based BSTs remains open, imposing a balance invariant on the class
of structures collapses the offline optimum to an explicit formula that a
concrete online structure can match. Indeed, our PopTree takes direct
inspiration from their promotion-based structures: it too pops the searched
item to the top level of a skip-list representation and repairs its
invariants with promotions and demotions.
The two characterizations,
however, are strikingly different. For balanced skip lists, the optimum is
the working-set bound, which is well known from the analysis of splay trees and which rewards temporal locality, recently searched
items are cheap. For $BB[\alpha]$-trees, the optimum instead
punishes proximity in key space. A search is cheap only if nothing nearby
was searched recently, and includes a term forcing items near the minimum
or maximum key to be expensive.

We further note that in the \emph{unrestricted} B-tree model, where no
balance condition is imposed, only $O(\log \log n)$-competitiveness is known
\cite{DBLP:journals/mst/DemaineIKL21}, reinforcing that it is the balance
restriction that makes full dynamic optimality tractable.

\paragraph{Distribution-sensitive balanced structures.}
A parallel line of work asks not for the offline optimum but for specific
distribution-sensitive bounds achievable \emph{while} maintaining balance.
Bent, Sleator, and Tarjan \cite{DBLP:journals/siamcomp/BentST85} introduced
biased search trees, a weight-balanced structure whose access time is
$O(1+\log(W/w_i))$ for an item of weight $w_i$ in a tree of total weight $W$.
Bagchi, Buchsbaum, and Goodrich \cite{DBLP:journals/algorithmica/BagchiBG05}
gave a skip-list counterpart, including a deterministic worst-case variant.
Closer to the self-adjusting setting, Bose, Dou{\"i}eb, Dujmovi{\'c}, and Howat
\cite{DBLP:journals/algorithmica/BoseDDH12} constructed a height-balanced BST
that attains the working-set bound in the worst case and the unified bound
amortized, and Martel \cite{DBLP:journals/ipl/Martel91}, and Sherk
\cite{DBLP:journals/jal/Sherk95} studied self-adjusting multiway and $k$-ary
trees. In each case, a balance invariant is retained, and a fixed runtime goal
(weighted, working-set, or unified) is matched. None characterizes, as we do,
the offline optimum over all rotation sequences preserving the invariant.

\paragraph{Organization of the paper.}
In Section~\ref{s:lower} we prove the lower bound: every
$BB[\alpha]$-algorithm, online or offline, requires time $\Omega(OPT(X))$
to execute any search sequence $X$, where the constant depends on
$\alpha$. Section~\ref{s:upper} contains the matching upper bound. We
first present the SkipPopList, a skip list with occupancy invariants that
tie recently searched items to the top levels, and show that it executes
any search sequence of length $\Omega(n \log n)$ in time $O(OPT(X))$ in
the skip list cost model. Section~\ref{s:popbtree} transforms the
SkipPopList into weight-balanced multiway search tree, the PopBTree, and
Section~\ref{s:poptree} transforms the PopBTree into our final structure,
the PopTree, a BST maintained by single rotations that is a
$BB[\tfrac1{20}]$-tree after every search, a $BB[\tfrac1{729}]$-tree
after every unit-cost operation, and runs in time $O(OPT(X))$.

Section~\ref{s:ws} we show that in expectation $OPT$ of a search sequence
under a random bijection of the keys equals the working-set bound,
establishing that PopTrees are key-independently optimal.

We note that the two transformations are placed in the appendix to reflect that these are basically standard transformations, albeit ones where the balance factor needs to be carefully propagated.
Our SkipPopList was designed to have these transformations be intuitive.
As noted in the acknowledgments, this cleanly separates the paper into the entirely human conceived and written main body of the paper, and the machine written but human edited and verified grind of the two transformations in the appendices which we feel most readers would find believable after looking at Figure~\ref{fig:skippoplist-to-bst}.
\section{Lower Bound}
\label{s:lower}

This section is about $BB[\alpha]$-trees that store the integers in $[n]$ and are executing a sequence of $m$ searches $X=\langle x_1, x_2, \ldots x_m \rangle$ to items in $[n]$.

\subsection{Depth lemmas}

We begin by proving some easy lemmas about the depth of items in a $BB[\alpha]$-tree.
The \emph{depth} of a node $x$ in a BST, $d(x)$, is the length of the path from the root to $x$. By repeatedly applying the balance condition in Definition~\ref{d:balanced} we obtain:

\begin{lemma}
\label{l:sts}
\textbf{Weights and depths.}
$$
\log_{\frac{1}{\alpha}} \frac{n+1}{w(x)}
\leq d(x) \leq
\log_{\frac{1}{\beta}} \frac{n+1}{w(x)} $$
The same bounds hold when $x$ is a missing child treated as a weight-one node, per Definition~\ref{d:balanced}.
\end{lemma}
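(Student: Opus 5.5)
We argue along the root-to-$x$ path. Let $v_0,v_1,\ldots,v_d$ be this path, with $v_0$ the root, $v_d=x$, and $d=d(x)$. The root's subtree is the whole tree, so $w(v_0)=n+1$. For a missing child $x$ we take $v_d$ to be the empty position itself, with weight one as in Definition~\ref{d:balanced}; the parent of $v_d$ is then the node $v_{d-1}$ that lacks this child.

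First we apply Definition~\ref{d:balanced} to each edge $(v_{k-1},v_k)$. It gives $\alpha\, w(v_{k-1}) \le w(v_k) \le \beta\, w(v_{k-1})$ for $k=1,\ldots,d$. The definition explicitly covers the case where $v_k$ is a missing child, because it treats such a child as having weight one. So the same chain of inequalities holds when $x$ is a missing child, and this gives the second sentence of the lemma for free.

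Next we telescope the inequalities by induction on $k$, obtaining
$$\alpha^{k}(n+1) \le w(v_k) \le \beta^{k}(n+1).$$
Setting $k=d$ gives $\alpha^{d}(n+1)\le w(x)\le \beta^{d}(n+1)$. We rearrange the left inequality to $(1/\alpha)^{d}\ge (n+1)/w(x)$. Since $1/\alpha>1$, taking $\log_{1/\alpha}$ preserves the inequality and yields $d\ge \log_{1/\alpha}\frac{n+1}{w(x)}$. Similarly, the right inequality gives $(1/\beta)^{d}\le (n+1)/w(x)$. Since $\beta=1-\alpha<1$, we have $1/\beta>1$, so taking $\log_{1/\beta}$ yields $d\le \log_{1/\beta}\frac{n+1}{w(x)}$.

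There is no real obstacle here. The only points needing care are the base case, where the root has weight exactly $n+1$, and checking that both logarithm bases exceed $1$. The first holds because the root's subtree contains all $n$ keys. The second holds because $\alpha\in(0,\tfrac12]$ gives $1/\alpha\ge 2$ and $1/\beta\in(1,2]$, so neither log flips its inequality.
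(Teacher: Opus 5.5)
Your proof is correct and is the same argument the paper intends: the paper just says the bounds follow by repeatedly applying the balance condition of Definition~\ref{d:balanced}, which is your telescoping $\alpha^{d}(n+1)\le w(x)\le \beta^{d}(n+1)$ along the root-to-$x$ path. You also spell out what the paper leaves implicit, namely that the balance condition covers missing children and that both logarithm bases exceed $1$.
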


\begin{lemma}\label{l:extreme}\textbf{Extreme depths.}
The depth of $x$ is at least $\log_{\frac{1}{\alpha}} \frac{n}{\min(x,n-x+1)}-1$.
\end{lemma}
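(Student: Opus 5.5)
By the symmetry $x \mapsto n-x+1$, which reverses the key order and preserves the $BB[\alpha]$ property and depths, I will assume $x \le n-x+1$, so $\min(x,n-x+1)=x$. The idea is that $x$ is one of the $x$ smallest keys. Every node on the root-to-$x$ path either contains few keys below $x$ in its subtree, or sends the path into a left subtree. So I will find a node $v$ on the path whose subtree is small, and then lower-bound the depth of $v$ using Lemma~\ref{l:sts}.

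The plan is as follows. Walk down the path $u_0,u_1,\dots,u_d=x$ from the root. Let $v$ be the first node on this path, from the top, at which the path never again goes to a left child. Equivalently, the path from $v$ to $x$ consists only of right-child steps. Then every node of this final segment is an ancestor of $x$ from which the path goes right, so each of those nodes has key less than $x$, except the endpoint $x$ itself. Moreover, the whole subtree of $v$ lies within $[\mathrm{key}(v), \ldots]$, and the path to $v$ arrives at $v$ by a left step, or $v$ is the root.

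I expect the cleaner route to be a direct weight argument. I will look instead at the left subtree $L$ of $x$, together with all left subtrees hanging off the path at nodes where the path goes right. Every key in these subtrees is less than $x$, and so is every key on the path at nodes where the path goes right. Hence the total of these weights is at most about $x$.

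Concretely, I will use the weight of the missing child at the leftmost position. Consider the node $y$ with key $1$, or more simply the empty left child slot $e$ of the minimum. That slot has weight $1$, and Lemma~\ref{l:sts} gives $d(e)\ge \log_{1/\alpha}(n+1)$. This is not directly about $x$, so I will instead take the subtree rooted at the highest ancestor $a$ of $x$ whose subtree contains only keys $\le x$ plus the right subtree of $x$.

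This is getting complicated, so I will settle on the following simplest choice. Let $a$ be the left child of $x$, treated as a missing child if absent. Every key in the subtree of $a$ is less than $x$, so $w(a)\le x$. Since $a$ is a child of $x$, $d(x)=d(a)-1$. Lemma~\ref{l:sts} applied to $a$, including the missing-child case, gives
\[
d(x) = d(a)-1 \;\ge\; \log_{\frac1\alpha}\frac{n+1}{w(a)}-1 \;\ge\; \log_{\frac1\alpha}\frac{n}{x}-1,
\]
which is exactly the claimed bound.

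The only subtle point is to check that Lemma~\ref{l:sts} indeed applies to a missing child of weight one. It does, as explicitly stated in that lemma. I also need the bound $w(a)=\Delta(a)+1\le (x-1)+1=x$, which holds because the subtree of $a$ consists of keys in $[1,x-1]$. The symmetric case uses the right child of $x$, whose subtree has keys in $[x+1,n]$ and so weight at most $n-x+1$.
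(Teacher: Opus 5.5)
Your final argument is correct and is essentially the paper's proof: apply Lemma~\ref{l:sts} to the (possibly missing) left child of $x$, of weight at most $x$, and to the right child, of weight at most $n-x+1$; then use $\frac{n+1}{w}\ge\frac{n}{w}$ and subtract one for the edge from $x$ to that child. The exploratory paragraphs before you settle on the child of $x$ play no role and can simply be deleted.
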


\begin{proof}
Node $x$'s left child $x_L$ (possibly a missing child treated as a weight-one node) has depth $d(x_L)=d(x)+1$ and subtree size $\Delta(x_L)\leq x-1$ and weight $w(x_L)\leq x$.
Node $x$'s right child $x_R$ has depth $d(x_R)=d(x)+1$ and subtree size $\Delta(x_R)\leq n-x$ and thus weight $w(x_R)\leq n-x+1$.
Apply Lemma~\ref{l:sts}, using these weights to get the bounds on the depths of $x_L$ and $x_R$ (using $\frac{n+1}{w} \geq \frac{n}{w}$) and subtract one to get the claimed bound on the depth of $x$.
\end{proof}

\begin{lemma}
\label{l:dac}
\textbf{Depth ancestor/child.}
Given any node $x$ with ancestor  $z$:
$$ d(x) \geq \log_{\frac{1}{\alpha}} \frac{n}{|x-z|}-1.$$
\end{lemma}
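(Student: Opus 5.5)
Assume without loss of generality that $x<z$, so that $x$ lies in the left subtree of $z$; the case $x>z$ is symmetric. The plan mirrors the proof of Lemma~\ref{l:extreme}: find a descendant of $x$ whose weight is at most $|x-z|$, apply the lower bound of Lemma~\ref{l:sts} to it, and subtract one to return to $x$.

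\textbf{Choosing the child.} Take the right child $x_R$ of $x$, which may be a missing child treated as a weight-one node. It has depth $d(x_R)=d(x)+1$. The claim is that every key in the subtree of $x_R$ lies strictly between $x$ and $z$. Every such key is greater than $x$. Since $x$ is in the left subtree of $z$, every descendant of $x$ is also in that subtree and so is less than $z$. The keys are the integers in $[n]$, so at most $z-x-1$ of them lie strictly between $x$ and $z$. Hence $\Delta(x_R)\leq |x-z|-1$, and therefore $w(x_R)\leq |x-z|$.

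\textbf{Applying the weight bound.} Lemma~\ref{l:sts} applies to $x_R$, including when it is missing, and gives
\[
d(x)+1 = d(x_R) \geq \log_{\frac{1}{\alpha}}\frac{n+1}{w(x_R)} \geq \log_{\frac{1}{\alpha}}\frac{n}{|x-z|}.
\]
The second inequality uses $n+1\geq n$ and the bound on $w(x_R)$. Subtracting one gives the claimed bound.

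\textbf{Main obstacle.} The only step needing care is the case analysis on which side of $z$ the node $x$ lies, which determines whether to use $x_R$ or $x_L$. If $x>z$, then $x$ is in the right subtree of $z$, and the left child $x_L$ has all its keys in the open interval $(z,x)$, so the same argument applies to it. A reader might also worry that $z$ could be $x$ itself, but $z$ is a proper ancestor, so $|x-z|\geq 1$ and the logarithm is well defined.
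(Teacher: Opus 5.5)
Your proposal is correct and follows the paper's proof essentially verbatim: assume $x<z$, take the (possibly missing) right child of $x$, bound its weight by $|x-z|$ because its keys lie strictly between $x$ and $z$, apply Lemma~\ref{l:sts}, and subtract one. Your write-up spells out the steps the paper leaves implicit, namely why those keys are below $z$, the use of $n+1\geq n$, and the symmetric case.
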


\begin{proof}
Assume w.l.o.g.~$x<z$. Let $y$ be the right child of $x$ (possibly a missing child treated as a weight-one node). We know $\sts{y} \leq  |x-z|-1$ and thus $w(y)\leq |x-z|$. We then apply Lemma~\ref{l:sts} to $y$ and subtract 1 to get the bound on $x$.

\end{proof}

\begin{lemma} \label{l:dun}
\textbf{Close nodes imply one is deep.}
Given any different nodes $x$ and $z$:
$$\max(d(x),d(z)) \geq \log_{\frac{1}{\alpha}} \frac{n}{|x-z|}- 1.$$
\end{lemma}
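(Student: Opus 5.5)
The plan is to reduce Lemma~\ref{l:dun} to Lemma~\ref{l:dac} by looking at the lowest common ancestor. Let $x \neq z$ be two nodes, and let $y$ be their lowest common ancestor in the BST. There are two cases.

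\textbf{Case 1: $y$ is one of the two nodes.} Say $y=z$, so $z$ is an ancestor of $x$. Lemma~\ref{l:dac} applies directly and gives
$$d(x) \geq \log_{\frac{1}{\alpha}} \frac{n}{|x-z|}-1.$$
Since $\max(d(x),d(z)) \geq d(x)$, the claim follows. The case $y=x$ is symmetric, with the roles of $x$ and $z$ exchanged.

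\textbf{Case 2: $y$ is a proper ancestor of both nodes.} Then $x$ and $z$ lie in different subtrees of $y$. By the BST property, $y$'s key lies strictly between them; without loss of generality $x<y<z$. Hence $|x-y| < |x-z|$. Now apply Lemma~\ref{l:dac} to the node $x$ and its ancestor $y$:
$$d(x) \geq \log_{\frac{1}{\alpha}} \frac{n}{|x-y|}-1 \geq \log_{\frac{1}{\alpha}} \frac{n}{|x-z|}-1.$$
The second inequality holds because the logarithm is monotone and $|x-y| \leq |x-z|$. Again $\max(d(x),d(z)) \geq d(x)$ finishes the case.

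There is no real obstacle here. The only points needing care are that $x$ and $z$ are distinct, so $|x-z|\geq 1$ and the logarithm is defined, and that the ancestor relation in Lemma~\ref{l:dac} is used in the correct direction: the bound is on the depth of the \emph{descendant}, and the key difference is measured to the ancestor. In Case 2 this gives the bound for both $x$ and $z$, though one of them suffices.
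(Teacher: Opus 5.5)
Your proof is correct and is the same argument as the paper's. The paper likewise applies Lemma~\ref{l:dac} directly when one node is an ancestor of the other; otherwise it uses the lowest common ancestor $y$, which lies strictly between $x$ and $z$. The paper applies that lemma to both pairs $(x,y)$ and $(z,y)$, whereas you note that one pair already suffices.
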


\begin{proof}
If $x$ and $z$ have an ancestor/descendant relationship then apply Lemma~\ref{l:dac}.
Otherwise let $y$ be the LCA of $x$ and $z$. We know that $|x-y|+|y-z|=|x-z|$. We then apply Lemma~\ref{l:dac} to $y$ and $x$,  and $y$ and $z$ which shows
$d(x) \geq \log_{\frac{1}{\alpha}} \frac{n}{|x-y|}- 1$
and
$d(z) \geq \log_{\frac{1}{\alpha}} \frac{n}{|z-y|}- 1$,
both of which imply the lemma since $|x-y|<|x-z|$ and $|z-y|<|x-z|$.
\end{proof}

\subsection{Levels}

Given a $BB[\alpha]$ tree $T$, the \emph{level} of a node $\ell(x)$ in the tree is defined to be $\lceil \log_{\frac{1}{\alpha}} \frac{n+1}{w(x)} \rceil$. As the root has weight $n+1$, level is a nonnegative integer with the root having level exactly 0.
The main feature of levels is that while they largely behave like depths up to constant factors, when performing rotations only the nodes involved can change level while the rest remain the same. This property does not hold for depths.

From the definition of a $BB[\alpha]$ tree we now obtain the following:

\begin{lemma} \label{l:childlevel}
\textbf{Level of a child.}
Given node $x$ with child $y$,
either $\ell(x)=\ell(y)$ or $\ell(x)+1=\ell(y)$.
\end{lemma}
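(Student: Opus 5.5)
The plan is to prove the statement straight from the definition of level together with the balance condition of Definition~\ref{d:balanced}. Let $y$ be a child of $x$, and put $a \coloneqq \log_{\frac{1}{\alpha}} \frac{n+1}{w(x)}$ and $b \coloneqq \log_{\frac{1}{\alpha}} \frac{n+1}{w(y)}$, so that $\ell(x)=\lceil a\rceil$ and $\ell(y)=\lceil b\rceil$. The statement should then follow from the two-sided bound $a < b \leq a+1$.

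First, the balance condition gives $w(y)\leq \beta w(x)<w(x)$. The strict inequality holds because $\alpha>0$ and weights are positive; more simply, $w(x)=w(y)+w(y')$ for the sibling $y'$ of $y$, and $w(y')\geq 1$. Hence $\frac{n+1}{w(y)}>\frac{n+1}{w(x)}$, and since $\log_{\frac1\alpha}$ is increasing (because $\frac1\alpha\geq 2>1$), we get $b>a$. In particular $b\geq a$, so $\lceil b\rceil\geq\lceil a\rceil$.

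Second, the lower half of the balance condition gives $w(y)\geq \alpha w(x)$. Therefore
\[
\frac{n+1}{w(y)} \leq \frac{1}{\alpha}\cdot\frac{n+1}{w(x)},
\]
and taking $\log_{\frac1\alpha}$ of both sides yields $b\leq a+1$. It follows that $\lceil b\rceil\leq\lceil a+1\rceil=\lceil a\rceil+1$.

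Combining the two steps, $\ell(x)\leq \ell(y)\leq \ell(x)+1$, and since levels are integers this means $\ell(y)\in\{\ell(x),\ell(x)+1\}$. The argument needs no earlier lemma beyond Definition~\ref{d:balanced}, and I do not expect a real obstacle. The one point to check is that the balance condition also applies when $y$ is a missing child of weight one, which the definition explicitly allows. One should also confirm that only monotonicity of the ceiling and of the logarithm is used, so that the rounding in the definition of level causes no off-by-one issue.
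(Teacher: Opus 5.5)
Your proposal is correct and follows essentially the same argument as the paper: you use $w(y)\le\beta w(x)<w(x)$ with monotonicity of the ceiling to get $\ell(y)\ge\ell(x)$, and $w(y)\ge\alpha w(x)$ with $\lceil t+1\rceil=\lceil t\rceil+1$ to get $\ell(y)\le\ell(x)+1$. Your extra remarks on strictness (via $w(x)=w(y)+w(y')$ with $w(y')\ge 1$) and on $\frac{1}{\alpha}\ge 2$ are fine but not needed beyond what the paper states.
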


\begin{proof}
By Definition~\ref{d:balanced}, $\alpha w(x) \leq w(y) \leq \beta w(x) < w(x)$.
From $w(y) < w(x)$ we get $\log_{\frac{1}{\alpha}} \frac{n+1}{w(y)} > \log_{\frac{1}{\alpha}} \frac{n+1}{w(x)}$, and as the ceiling is nondecreasing, $\ell(y) \geq \ell(x)$.
From $w(y) \geq \alpha w(x)$ we get
$\log_{\frac{1}{\alpha}} \frac{n+1}{w(y)} \leq \log_{\frac{1}{\alpha}} \frac{n+1}{w(x)} + 1$, and as $\lceil t+1 \rceil = \lceil t \rceil + 1$, $\ell(y) \leq \ell(x)+1$.
\end{proof}

We can also show that depths and levels only differ by a constant.

\begin{lemma} \label{l:ldiff}
\textbf{Depths and levels.}
For any node $x$, $\ell(x) -1  \leq d(x) \leq \frac{\ell(x)}{\log_\alpha \beta}$.

\end{lemma}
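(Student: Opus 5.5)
We derive both inequalities directly from Lemma~\ref{l:sts} and the definition $\ell(x)=\lceil \log_{\frac{1}{\alpha}} \frac{n+1}{w(x)} \rceil$. Throughout, write $t(x) \coloneqq \log_{\frac{1}{\alpha}} \frac{n+1}{w(x)}$, so that $\ell(x)=\lceil t(x)\rceil$ and $t(x) \le \ell(x) < t(x)+1$.

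\emph{Lower bound on depth.} By Lemma~\ref{l:sts}, $d(x) \ge t(x)$. Since $\ell(x) < t(x)+1$, we get $d(x) > \ell(x)-1$, which gives $\ell(x)-1 \le d(x)$.

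\emph{Upper bound on depth.} Lemma~\ref{l:sts} gives $d(x) \le \log_{\frac{1}{\beta}} \frac{n+1}{w(x)}$. We convert bases:
\[
\log_{\frac{1}{\beta}} y = \frac{\ln y}{\ln(1/\beta)} = \frac{\ln(1/\alpha)}{\ln(1/\beta)}\,\log_{\frac{1}{\alpha}} y = \frac{\log_{\frac{1}{\alpha}} y}{\log_\alpha \beta},
\]
using $\log_\alpha\beta = \frac{\ln \beta}{\ln\alpha} = \frac{\ln(1/\beta)}{\ln(1/\alpha)}$. This quantity is positive because $0<\alpha\le\beta<1$. Hence $d(x) \le \frac{t(x)}{\log_\alpha\beta}$. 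Combining this with $t(x)\le \ell(x)$ and positivity of the denominator yields $d(x) \le \frac{\ell(x)}{\log_\alpha \beta}$.

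The only point needing care is the sign and direction of the base-change factor, namely that $\log_\alpha\beta\in(0,1]$. This ensures that dividing preserves the inequality and that the constant $1/\log_\alpha\beta \ge 1$ is meaningful. Note also that $w(x)\le n+1$, so $t(x)\ge 0$ and nothing degenerates. Otherwise the argument is a routine rewriting of Lemma~\ref{l:sts}.
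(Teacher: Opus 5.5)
Your proposal is correct and follows the paper's proof exactly: the lower bound comes from $d(x)\ge t(x)>\lceil t(x)\rceil-1$ via Lemma~\ref{l:sts}, and the upper bound from the base change $\log_{\frac{1}{\beta}}y=\log_{\frac{1}{\alpha}}y/\log_\alpha\beta$ together with $t(x)\le\lceil t(x)\rceil$. Your explicit check that $\log_\alpha\beta\in(0,1]$, so the division preserves the inequality, is a detail the paper leaves implicit.
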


\begin{proof} In the following, the first step is from Lemma~\ref{l:sts}, followed by math and the definition of $\ell(x)$.
$$ d(x) \geq \log_{\frac{1}{\alpha}} \frac{n+1}{w(x)}
> \left\lceil \log_{\frac{1}{\alpha}} \frac{n+1}{w(x)} \right\rceil -1 = \ell(x)-1 $$

$$ d(x)
\leq
\log_{\frac{1}{\beta}} \frac{n+1}{w(x)}
=
\frac{1}{\log_{\frac{1}{\alpha}}\frac{1}{\beta}}
\log_{\frac{1}{\alpha}} \frac{n+1}{w(x)}
\leq
\frac{1}{\log_{\frac{1}{\alpha}}\frac{1}{\beta}}
\left\lceil  \log_{\frac{1}{\alpha}} \frac{n+1}{w(x)}   \right\rceil
= \frac{\ell(x)}{\log_\alpha \beta}$$

\end{proof}

We thus take the depth result of Lemma~\ref{l:dun} and use the $\ell(x) \geq \log_\alpha \beta \cdot d(x)$ of Lemma~\ref{l:ldiff} to give a level-based result.

\begin{lemma} \textbf{Close nodes imply high level. } \label{l:onedeep}
Given any different nodes $x$ and $z$, one has level at least $ \log_\alpha \beta \log_{\frac{1}{\alpha}} \frac{n}{|x-z|}-1$.
\end{lemma}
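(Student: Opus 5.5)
The plan is to combine Lemma~\ref{l:dun} with the right-hand inequality of Lemma~\ref{l:ldiff}, and to absorb the resulting additive constant using the fact that $\log_\alpha\beta\le 1$.

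Write $L \coloneqq \log_{\frac{1}{\alpha}} \frac{n}{|x-z|}$ and $c \coloneqq \log_\alpha \beta$. By Lemma~\ref{l:dun}, one of $x$ and $z$ satisfies $d \geq L-1$; call it $y$.

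First I will check that $0 < c \leq 1$. Since $0<\alpha\leq\frac12\leq\beta<1$, both $\ln\alpha$ and $\ln\beta$ are negative. Moreover $|\ln\beta|\leq|\ln\alpha|$, so $c=\ln\beta/\ln\alpha\in(0,1]$.

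Next, the right-hand inequality of Lemma~\ref{l:ldiff}, $d(y)\le \ell(y)/c$, rearranges to $\ell(y)\geq c\, d(y)$. Multiplying $d(y)\ge L-1$ by the positive constant $c$ then gives
$$\ell(y) \;\geq\; c\,d(y) \;\geq\; c(L-1) \;=\; cL - c \;\geq\; cL - 1,$$
where the last step uses $c\leq 1$. This is exactly the claimed bound for $y$.

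There is essentially no obstacle. The only points needing care are the sign of $c$, which is needed so that multiplying by $c$ preserves the inequality even when $L-1$ is negative, and the bound $c\le 1$, which lets the additive $-c$ be replaced by $-1$. Both follow from $\alpha\le\beta$.
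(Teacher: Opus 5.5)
Your proof is correct and is essentially the paper's argument: both apply Lemma~\ref{l:dun}, convert depth to level via $\ell \ge \log_\alpha\beta \cdot d$ from Lemma~\ref{l:ldiff}, multiply by the positive constant $\log_\alpha\beta$, and use $\log_\alpha\beta \le 1$ to replace $-\log_\alpha\beta$ by $-1$. Your explicit verification that $0<\log_\alpha\beta\le 1$, which the paper only asserts, is a welcome detail.
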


\begin{proof}
    \begin{align*}
\max(\ell(x),\ell(z))
&\geq
\max(\log_\alpha \beta \cdot d(x),
\log_\alpha \beta\cdot d(z))
& \text{Lemma~\ref{l:ldiff}}
\\
&=
\log_\alpha \beta \max(  d(x),
 d(z))
\\
&\geq
\log_\alpha \beta \left(\log_{\frac{1}{\alpha}} \frac{n}{|x-z|}- 1 \right)
& \text{Lemma~\ref{l:dun}}
\\
&\geq \log_\alpha \beta  \log_{\frac{1}{\alpha}}  \frac{n} {|x-z|} -1
& 0< \log_\alpha \beta \leq 1
\end{align*}
\end{proof}

\begin{lemma}
\label{c:deeplevel}
\textbf{$\gamma(\ell)$: the threshold of closeness to be at level $\ell$. }
Let $\gamma(\ell)\coloneqq n
\alpha^{\frac{\ell+1}{\log_\alpha \beta} }$.
Given any two nodes $x$ and $z$, if $|x-z| \leq  \gamma(\ell)$, the level of at least one of $x$ and $z$ is at least $\ell$.
\end{lemma}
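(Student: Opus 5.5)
This follows directly from Lemma~\ref{l:onedeep} by inverting the threshold. Let $x$ and $z$ be distinct nodes with $|x-z| \leq \gamma(\ell)$. (If $x=z$ the statement is vacuous or trivially about a single node, so we assume they differ, as Lemma~\ref{l:onedeep} requires.) Lemma~\ref{l:onedeep} gives
$$\max(\ell(x),\ell(z)) \geq \log_\alpha \beta \, \log_{\frac{1}{\alpha}} \frac{n}{|x-z|} - 1.$$
Since $\log_{\frac1\alpha}$ is increasing and $|x-z|\leq\gamma(\ell)$, the right-hand side is at least $\log_\alpha\beta\,\log_{\frac1\alpha}\frac{n}{\gamma(\ell)}-1$.

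It then suffices to compute $\frac{n}{\gamma(\ell)} = \alpha^{-\frac{\ell+1}{\log_\alpha\beta}} = \left(\frac1\alpha\right)^{\frac{\ell+1}{\log_\alpha\beta}}$. This gives $\log_{\frac1\alpha}\frac{n}{\gamma(\ell)} = \frac{\ell+1}{\log_\alpha\beta}$, and multiplying by $\log_\alpha\beta$ gives $\ell+1$. Hence $\max(\ell(x),\ell(z))\geq \ell+1-1=\ell$, as claimed.

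The only point needing care is that $\log_\alpha\beta>0$. This holds because $0<\alpha\le\beta<1$, which was already used in Lemma~\ref{l:onedeep}. So multiplying the inequality by it preserves direction, and there is no real obstacle.
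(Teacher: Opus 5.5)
Your proof is correct and is essentially the paper's argument: both invert Lemma~\ref{l:onedeep}. The paper rewrites that lemma's conclusion as $|x-z|\ge\gamma(\max(\ell(x),\ell(z)))$ and uses that $\gamma$ is decreasing, while you substitute $\gamma(\ell)$ for $|x-z|$ on the right-hand side and compute that it equals exactly $\ell$. One small correction to your parenthetical: the case $x=z$ is not vacuous but false, since $|x-z|=0\le\gamma(\ell)$ for every $\ell$, so ``two nodes'' must be read as two distinct nodes, which is how the lemma is applied in the Type~3 mark.
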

\begin{proof}
The inequality
$\max(\ell(x),\ell(z))
 \geq \log_\alpha \beta  \log_{\frac{1}{\alpha}}  \frac{n} {|x-z|}-1$
from Lemma~\ref{l:onedeep}
and
$|x-z| \geq \gamma(\max(\ell(x),\ell(z)))$ are equivalent via algebra. Thus we have
$$ \gamma(\ell) \geq |x-z| \geq \gamma(\max(\ell(x),\ell(z))).$$

This implies that $\ell \leq \max(\ell(x),\ell(z))$ since $\gamma$ is decreasing in $\ell$.
\end{proof}

\subsection{Algorithm}

In this section, we assume an algorithm $\alg$ which executes the search sequence $X=\langle x_1,x_2, \ldots x_m \rangle $ where each search is an item in $[n]$. Let $T_i$ denote the tree before the execution of $x_i$.

An algorithm begins each search by initializing the pointer to be the root at unit cost and does a series of unit cost operations by moving the pointer left, right, up, or doing a single rotation with the parent with the single rule that at some point in the execution of the $i$th search, $x_i$, the pointer must touch $x_i$. Additionally, to be a valid $BB[\alpha]$-algorithm, the tree must be a valid $BB[\alpha]$-tree at all times.

Let $A(\alg,T_0,X)$ be the cost of the algorithm to execute $X$ starting from the initial tree $T_0$, in terms of the unit cost operations previously presented.

All other notation from this point on in this section may implicitly depend on $\alg$, $X$, and $T_0$.
Other notation previously defined with reference to a BST may now be subscripted to refer to $T_i$, e.g.~$\ell_i(x_i)$ is the level of the $i$th item to be searched just before the algorithm commences. For brevity, we refer to the node $p$ points to, as well as the nodes themselves, by the value of the data in the node. As each node stores a unique value in $[n]$, this is unambiguous.

We now introduce notation to give a unique identifier to each unit-cost operation.
Note that it is not sufficient to simply identify the items the pointer points to during the course of executing a search, as a pointer could point to an item several times.

We define the set $A$ to be all 4-tuples $(i,j,p,\ell)$ where $p$ is the value stored in the node pointed to by the pointer after the $(j+1)$st unit-cost operation during the execution of the $i$th search and $\ell$ is the level of $p$ in the tree at that moment. Note that $i$ is in the range $[1..m]$ as there are $m$ searches, and $j$ is in the range from 0 to one less than the number of unit-cost operations used to execute the $i$th search. The value $j=0$ will represent the state after the first unit-cost operation of the $i$th search, the initialization of the pointer to the root of the tree.
Thus, $A$ provides a full transcript of the execution of the algorithm, and also encodes the level of the pointer at every step.

\begin{lemma} \textbf{Relating the size of $A$ and the running time of the algorithm.} \label{l:asize}
$|A|=A(\alg,T_0,X)$.
\end{lemma}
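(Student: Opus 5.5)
The plan is to exhibit a bijection between $A$ and the set of unit-cost operations performed by $\alg$ on $X$ starting from $T_0$. The cost $A(\alg,T_0,X)$ is, by definition, the total number of these operations. It can be written as $\sum_{i=1}^m c_i$, where $c_i$ is the number of unit-cost operations used while executing the $i$th search. The first of these is the initialization of the pointer to the root.

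I would define the map $\phi$ that sends the $(j+1)$st unit-cost operation of search $i$ to the tuple $(i,j,p,\ell)$. Here $p$ is the value in the node pointed to right after that operation, and $\ell$ is its level in the current tree at that moment. The steps are as follows.

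\begin{enumerate}
\item \emph{The map is well defined.} After every unit-cost operation the pointer points to exactly one node. This holds for initialization and for moves; for a rotation the pointer stays on the node it pointed to. That node stores a unique value of $[n]$. Its level $\lceil \log_{\frac{1}{\alpha}} \frac{n+1}{w(p)}\rceil$ is determined by the current tree, which is a valid $BB[\alpha]$-tree at all times. So $p$ and $\ell$ are uniquely determined.
\item \emph{The map is injective.} Distinct operations differ either in the search index $i$ or in the position $j$ within the search. Hence they differ in the first two coordinates of their tuples. This is the key point: the pair $(i,j)$ alone identifies the operation, even if the pointer visits the same node several times.
\item \emph{The map is surjective.} By the definition of $A$, every tuple in $A$ arises from some operation, with $1\le i\le m$ and $0\le j\le c_i-1$.
\end{enumerate}

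Together these give $|A|=\sum_i c_i = A(\alg,T_0,X)$.

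There is no real obstacle. The only subtlety worth stating explicitly is why the tuples must include the index $j$ rather than just the visited node. Without it, repeated visits to the same node would collapse into one element, and only an inequality would hold.
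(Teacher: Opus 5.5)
Your proposal is correct and is the same argument as the paper's: both send the $(j+1)$st unit-cost operation of search $i$ to the tuple $(i,j,p,\ell)$, and the $(i,j)$ coordinates alone make this injective. You simply spell out the well-definedness, injectivity, and surjectivity checks that the paper's one-line proof leaves implicit.
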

\begin{proof}
    The elements of $A$ are in bijection with the unit-cost operations performed by the algorithm: $(i,j,p,\ell)$ corresponds to the $(j+1)$st unit-cost operation of the $i$th search.
\end{proof}

\subsection{Levels and algorithm}

\begin{lemma} \label{l:levelchange}
\textbf{Only rotations change levels. }
In a rotation, only the levels of the two nodes involved can change. Thus, the level of a node can not change during the execution of a search before the first time the pointer touches the item.
\end{lemma}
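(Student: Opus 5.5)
The plan is to use the fact that the level $\ell(x)=\lceil \log_{\frac{1}{\alpha}} \frac{n+1}{w(x)} \rceil$ depends only on $n$, which is fixed, and on the weight $w(x)=\Delta(x)+1$. It therefore suffices to show that a rotation changes the subtree sizes of only the two nodes involved. Pointer moves (to the parent, left child, or right child) and the initialization of the pointer at the root do not alter the tree, so they change no weights and hence no levels. The only tree-modifying unit-cost operation is a rotation between the node $p$ pointed to and its parent $q$.

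First I will fix the standard picture of a rotation. Say $p$ is the left child of $q$, with $p$ having subtrees $A$ and $B$ and $q$ having right subtree $C$; the mirror case is symmetric. After the rotation, $p$ takes $q$'s former place as the child of $q$'s old parent $r$ (or becomes the root), $q$ becomes $p$'s right child, and $q$ gets subtrees $B$ and $C$. The check then proceeds node by node. Every node inside $A$, $B$, or $C$ keeps exactly the same set of descendants, so its weight is unchanged. For $r$ and every other proper ancestor of $q$, the descendant set is the same node set $\{p,q\}\cup A\cup B\cup C$ before and after, so those weights are also unchanged. All other nodes are unrelated to the modified part of the tree, so nothing about them changes. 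Hence only $w(p)$ and $w(q)$ can change, and so only $\ell(p)$ and $\ell(q)$ can change.

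For the second sentence, I will note that during the execution of a search every rotation involves the node currently pointed to. So if the pointer has never touched some item $x$, then $x$ has never been one of the two nodes involved in a rotation. One subtlety is that the parent $q$ is also involved in a rotation, and the pointer has not necessarily been on $q$. Read literally, then, "touched" must include the parent in a rotation. I will handle this either by interpreting "touch" to mean that $x$ is either the pointed-to node or the other node of a rotation, or by observing that reaching $p$ from the root requires passing through its parent $q$. The latter holds because the pointer starts at the root and moves only along edges, and rotations preserve that the pointer's node was reached via ancestors. I will argue it with an inductive invariant: every ancestor of the pointer's current node has been touched during this search. This invariant holds at initialization, is preserved by child moves, and is preserved by parent moves. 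After a rotation of $p$ with $q$, the ancestors of $p$ form a subset of the old ancestors of $p$ minus $q$, so the invariant is preserved there too.

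Consequently, any rotation involves only touched nodes, and combining this with the first part shows that untouched nodes retain their levels. The main obstacle is this ancestor-invariant step. The weight bookkeeping is routine.
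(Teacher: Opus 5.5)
Your proof is correct and follows the paper's argument: the level depends only on weight, a rotation changes only the weights of the pointed node and its parent, and the inductive invariant that every ancestor of the pointed node has been touched during the current search (preserved by child moves, parent moves, and rotations, which remove the parent from the ancestor set) shows that both nodes of any rotation have already been touched. Commit to that invariant rather than to your alternative of redefining ``touch''. Later lemmas such as Lemma~\ref{l:contpath} need touches to be pointer positions recorded in $A$, so the redefinition would weaken the statement.
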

\begin{proof}
The level is a function of weight, which changes only at the two nodes involved in a rotation. For the second claim, we show that at all times during a search, every ancestor of the pointed node has been touched during that search. The claim follows as the two nodes involved in a rotation are the pointed node $x$ (touched by definition) and its parent $y$ (an ancestor). The invariant holds by induction on the unit-cost operations, it holds after the initialization of the pointer to the root, which has no ancestors. Moving the pointer to a child adds only the previously pointed node to the set of ancestors of the pointed node. Moving the pointer to the parent removes it from the set, and a rotation of the pointed node $x$ with its parent $y$ removes $y$ from the set.
\end{proof}

\begin{lemma} \label{l:contpath}
\textbf{The pointer touches nodes at all levels smaller than the level of what it is searching. }
Given any $0 \leq \ell' \leq  \ell_i(x_i)$, there must be a $p$ and $j$ such that there is a $(i,j,p,\ell') \in A$.
\end{lemma}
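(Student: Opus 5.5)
We argue by a discrete intermediate-value argument on the levels of the nodes touched by the pointer, starting at the root and ending at $x_i$. The first unit-cost operation of the $i$th search (the one indexed by $j=0$) places the pointer at the root. The root has weight $n+1$ and hence level $0$, so $(i,0,\text{root},0)\in A$.

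Next, let $j^*$ be the first index at which the pointer touches $x_i$; by the rules of a search such a $j^*$ exists. By Lemma~\ref{l:levelchange}, the level of $x_i$ cannot change before the pointer first touches it. Hence the fourth coordinate of the tuple with index $j^*$ is exactly $\ell_i(x_i)$.

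It remains to show that consecutive entries of the transcript change level by at most one, in the direction that matters: between $j$ and $j+1$, the level of the pointed node increases by at most one. We check the three kinds of moves.
\begin{itemize}
\item \emph{Moving to a child.} This changes the level by $0$ or $+1$, by Lemma~\ref{l:childlevel}, since the tree is a valid $BB[\alpha]$-tree at all times.
\item \emph{Moving to the parent.} This changes the level by $0$ or $-1$, again by Lemma~\ref{l:childlevel}.
\item \emph{A rotation of the pointed node $x$ with its parent $y$.} The pointer stays at $x$, and $x$ takes over $y$'s place, so $x$'s new weight equals $y$'s old weight. Thus $x$'s new level equals $y$'s old level, which by Lemma~\ref{l:childlevel} applied before the rotation is $\ell(x)$ or $\ell(x)-1$. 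So the level of the pointed node does not increase.
\end{itemize}
This rotation case is the only step needing care. The weight identity can be verified directly, since the subtree of the rotated node after the rotation contains exactly the items of the old subtree of $y$.

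Therefore the sequence of levels $\ell^{(0)},\ell^{(1)},\ldots,\ell^{(j^*)}$ recorded in $A$ for search $i$ satisfies $\ell^{(0)}=0$, $\ell^{(j^*)}=\ell_i(x_i)$, and $\ell^{(j+1)}\le \ell^{(j)}+1$ for every $j$. Fix any $\ell'$ with $0\le\ell'\le\ell_i(x_i)$, and let $j$ be the first index with $\ell^{(j)}\ge \ell'$; it exists because $\ell^{(j^*)}\ge\ell'$. If $j=0$, then $\ell'=0=\ell^{(0)}$. Otherwise $\ell^{(j-1)}<\ell'$, and the bounded-increase property gives $\ell^{(j)}\le \ell^{(j-1)}+1\le\ell'$, so $\ell^{(j)}=\ell'$. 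In either case we obtain the required tuple $(i,j,p,\ell')\in A$.
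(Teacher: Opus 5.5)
Your proof is correct and takes essentially the same approach as the paper: the root gives level $0$ at $j=0$, Lemma~\ref{l:levelchange} gives level $\ell_i(x_i)$ at the first touch of $x_i$, and Lemma~\ref{l:childlevel} shows each child move, parent move, or rotation changes the pointed node's level by $0/{+1}$, $0/{-1}$, and $0/{-1}$ respectively, after which a discrete intermediate-value step finishes. Your only refinements are observing that the one-sided bound $\ell^{(j+1)}\le\ell^{(j)}+1$ suffices and writing out the first-crossing argument that the paper leaves implicit.
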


\begin{proof}
Lemma~\ref{l:childlevel}, showing a child is at the same level or +1 compared to the parent almost is the proof but a bit more detail is needed to make sure that as the pointer performs rotations that this can not be used to skip a level.
The tuple $(i,0,\text{root},0)$ covers the case $\ell'=0$. Each unit-cost operation changes the level of the pointed node by at most one: moving to a child changes it by $0$ or $+1$ (Lemma~\ref{l:childlevel}), moving to the parent by $0$ or $-1$, and in a rotation the pointed node takes the position, and thus the weight and level, previously held by its parent, a change of $0$ or $-1$ (Lemma~\ref{l:childlevel}). By Lemma~\ref{l:levelchange}, the first time the pointer touches $x_i$ during search $i$, its level is still $\ell_i(x_i)$. The sequence of levels in the tuples $(i,0,\cdot,\cdot), (i,1,\cdot,\cdot), \ldots$ therefore starts at $0$, at some point reaches $\ell_i(x_i)$, and changes by at most one per step, and so takes every value $\ell' \in [0,\ell_i(x_i)]$.
\end{proof}

\begin{lemma}
\label{l:conttime}
\textbf{Downward continuity of the level of an item. }
Given, times $i$ and  $i''$ where $i<i''$, item $y$, and level $\ell$ with $\ell_{i''}(y) < \ell \leq \ell_i(y)$. There is a time $i'\in[i,i'')$ and a $j$ such that $(i',j,y,\ell)\in A$.

\end{lemma}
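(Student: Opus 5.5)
We track the level of the fixed item $y$ across all unit-cost operations between the start of search $i$ and the start of search $i''$. By Lemma~\ref{l:levelchange}, the level of $y$ changes only during a rotation in which $y$ is one of the two involved nodes. In a rotation the two involved nodes are the pointed node $p$ and its parent $q$, so the plan is to show that the level of $y$ can only \emph{decrease} while $y$ is the pointed node, and then to record such a moment at level exactly $\ell$.

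First we determine how a rotation changes levels. When the pointed node $p$ is rotated with its parent $q$, $p$ takes over $q$'s former position and hence $q$'s weight. Its level therefore decreases by $0$ or $1$, by Lemma~\ref{l:childlevel}. The parent $q$ becomes a child of $p$, and its weight drops to that of a subtree of the old $q$. Its level therefore increases, and it is the pointed node that may move up in level only by a step of at most one. Hence every decrease of $y$'s level happens in a rotation where $y$ is the pointed node, and each such decrease is by exactly one.

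Next we use discrete continuity. Consider the sequence of levels of $y$ after each unit-cost operation of searches $i, i+1, \ldots, i''-1$. It starts at $\ell_i(y) \geq \ell$, since no operation has happened before the first one of search $i$. It ends with $y$'s level in $T_{i''}$, which is $\ell_{i''}(y) < \ell$. Take the first operation after which $y$'s level is below $\ell$; just before it, the level was $\geq \ell$. That operation must be a decrease, so by the previous step it decreases the level by exactly one, and the level just before it was exactly $\ell$. It is also a rotation in which $y$ is pointed.

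It remains to match the tuple convention. A tuple $(i',j,p,\ell)$ records the pointed node and its level \emph{after} the $(j+1)$st operation. We therefore look at the operation immediately preceding the decreasing rotation within the same search $i'$. Such an operation exists, because the first operation of a search is the pointer initialization, not a rotation. After that preceding operation, $y$ is already the pointed node: the rotation acts on the pointed node, and pointer positions change only by moves. Its level at that moment is $\ell$, since only the rotation changes it. This gives $(i',j,y,\ell) \in A$ with $i' \in [i,i'')$.

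The main obstacle is to pin down exactly that the parent's level cannot decrease in a rotation. Its new weight is the weight of a node now below $p$ at $q$'s old position, so it is strictly less than the old $w(q)$, and the level is monotone in weight. This is also where the bookkeeping of tuples "after" operations must be handled carefully.
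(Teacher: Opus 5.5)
Your proof is correct and follows essentially the same route as the paper: the level of $y$ changes only in rotations involving $y$, it can only decrease (and then by exactly one) when $y$ is the pointed node, so the first drop below $\ell$ is a rotation starting from level exactly $\ell$, and the pointer's position just before that rotation gives the tuple $(i',j,y,\ell)$. Your explicit handling of the ``after the $(j+1)$st operation'' convention (a preceding operation exists because the first operation of a search is the initialization) is care the paper leaves implicit. You also use only that the parent's level does not decrease, rather than the paper's ``moves in unit steps'', which is slightly more economical; your earlier phrase ``its level therefore increases'' should read ``does not decrease'', as your last paragraph already says.
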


\begin{proof}
By Lemma~\ref{l:levelchange}, the level of $y$ changes only in rotations involving $y$, and any such change is by exactly one: consider a rotation of a node $x$ with parent $y'$, and set $\ell'=\ell(y')$ before the rotation. Before the rotation, $\ell(x)$ was $\ell'$ or $\ell'+1$ and after it will be $\ell'$. Before the rotation $\ell(y')$ is $\ell'$ and after it will be $\ell'$ or $\ell'+1$ (Lemma~\ref{l:childlevel}). Thus, when $y$ is the parent in a rotation, its level stays the same or increases by one. Its level \emph{decreases} only when $y$ is the pointed node and is rotated up with its parent, and in that case, the pointer is at $y$ immediately before the rotation, so a tuple $(\cdot,\cdot,y,\ell(y))$ with $y$'s pre-rotation level is in $A$.
The level of $y$ is at least $\ell$ at time $i$ and less than $\ell$ at time $i''$ and moves in unit steps, so during some search $i' \in [i,i'')$ it decreases from $\ell$ to $\ell-1$. Thus, the touch of $y$ at level $\ell$ immediately before that rotation is the required tuple.
\end{proof}

\subsection{Previous search}

Recall that $\nf$ is the subset of $[m]$ consisting of $i$ where $x_i$ is not the first search of $x_i$ in $X$.
As we will often be looking at the previous search of an item, it will be helpful to exclude items being searched for the first time from our analysis.
Observe $|\nf|\geq m-n$ as each item in $[n]$ can only be searched for the first time once.

Given an $i \in \nf$, define $\mrs_i$ to be the time of the previous search to $x_i$ before time $i$:
$$
\mrs_i \coloneqq
\max\{i'| i'<i\text{ and }x_{i'}=x_i\}.$$

Given an $i \in \nf$, define $\mrs_i(r)$ to be the $j$ such that $x_j$ is the most recently searched item in
$[x_i-r,x_i+r]$
before time $i$:
$$
\mrs_i(r) \coloneqq
\max \{i' | i'<i\text{ and }x_{i'}\in  [x_i-r,x_i+r]\}.
$$
This is well-defined as $x_i$ has been searched before and this previous search ensures that the set will have at least one element, $\mrs_i$. This observation also gives the following lemma:

\begin{lemma} \label{l:recent}
For any $i \in \nf$, and any $r \geq 0$,  $\mrs_i \leq \mrs_i(r)$
\end{lemma}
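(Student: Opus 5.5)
The plan is to argue directly from the definitions of $\mrs_i$ and $\mrs_i(r)$: both are maxima of sets of earlier times, and the set defining $\mrs_i$ sits inside the set defining $\mrs_i(r)$.

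Fix $i \in \nf$ and $r \geq 0$.

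\begin{itemize}
\item Because $i \in \nf$, the set $\{i' \mid i'<i \text{ and } x_{i'}=x_i\}$ is nonempty. So $\mrs_i$ is well-defined, and it satisfies $\mrs_i<i$ and $x_{\mrs_i}=x_i$.
\item Since $r \geq 0$, we have $x_i \in [x_i-r,x_i+r]$. Therefore $x_{\mrs_i}=x_i$ lies in this interval, and $\mrs_i$ belongs to the set $\{i' \mid i'<i \text{ and } x_{i'}\in[x_i-r,x_i+r]\}$ that defines $\mrs_i(r)$.
\item This set is finite, being a subset of $[i-1]$. Its maximum is therefore at least any of its elements, and in particular at least $\mrs_i$.
\end{itemize}

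These three points give $\mrs_i \leq \mrs_i(r)$.

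More generally, every time $i'$ that enters the definition of $\mrs_i$ also enters the definition of $\mrs_i(r)$, because $x_{i'}=x_i \in [x_i-r,x_i+r]$. One set contains the other, so its maximum is at least as large.

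There is no real obstacle here. The only point that needs care is $r \geq 0$, which guarantees that the interval contains $x_i$ itself; the statement assumes this.
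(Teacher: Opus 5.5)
Your proof is correct and is essentially the paper's argument. The paper observes that $\mrs_i(r)$ is nondecreasing in $r$ and equals $\mrs_i$ at $r=0$. That is the same set inclusion you use, routed through the $r=0$ case instead of exhibiting $\mrs_i$ directly as a member of the set defining $\mrs_i(r)$.
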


\begin{proof}
    Observe that for fixed $i$, $\mrs_i(r)$ is nondecreasing as a function of $r$. Furthermore for $r=0$, $\mrs_i(r)=\mrs_i$.
\end{proof}

\subsection{Marking scheme}

For every $i\in \nf$ and every nonnegative integer $\ell$ where $x_i \not = x_{\mrs_i\left( \gamma(\ell) \right)}$ we define the following marking scheme which maps $(i,\ell)$ to a touch $(i',j,p,\ell)$ in $A$.

\begin{description}
\item[Type 1 mark.] If $\ell_i(x_i) \geq  \ell$
then by Lemma~\ref{l:contpath}
there is $j$ and $p$ so that $(i,j,p,\ell) \in A$. We mark this touch with a type-1 mark. We use the minimum $j$ if there are multiple possibilities. Intuitively, when searching for something deeper than level $\ell$, one must touch a node at level $\ell$ on the way.

\item[Type 2 marks.] In this case we assume a type 1 mark is not possible
($\ell_i(x_i) <  \ell$) and \\
$\ell_{\mrs_i(\gamma(\ell))}(x_{\mrs_i(\gamma(\ell))}) \geq  \ell$.
By Lemma~\ref{l:contpath}
there is $j$ and $p$ so that $(\mrs_i(\gamma(\ell)),j,p,\ell) \in A$. We use the minimum $j$ if there are multiple possibilities.
Intuitively, this is the same logic as the type 1 mark, except in this case the deep search is not $x_i$ but rather $x_{\mrs_i(\gamma(\ell))}$.

\item[Type 3 mark.] If type 1 and type 2 marks are not possible, we have $\ell_i(x_i) < \ell$ and $\ell_{\mrs_i(\gamma(\ell))}(x_{\mrs_i(\gamma(\ell))})< \ell$.
Observe that $|x_i - x_{\mrs_i(\gamma(\ell))}|\leq \gamma(\ell)$ by the definition of $\mrs$.
We now apply Lemma~\ref{c:deeplevel} to the tree $T_{\mrs_i(\gamma(\ell))}$ and at least one of $x_i$ and $x_{\mrs_i(\gamma(\ell))}$ must be at level at least $\ell$ in this tree. Since $\ell_{\mrs_i(\gamma(\ell))}(x_{\mrs_i(\gamma(\ell))})< \ell$ we conclude that it must be $x_{i}$ as $\ell_{\mrs_i(\gamma(\ell))}(x_{i})\geq \ell$.

Now we have at time $i$ the level of $x_{i}$ was less than $\ell$ and at time $\mrs_i(\gamma(\ell))$ the level of $x_{i}$ was at least $\ell$, i.e.,
$\ell_{\mrs_i(\gamma(\ell))}(x_{i})\geq \ell$ and
$\ell_i(x_{i}) < \ell$.
Thus by the downward continuity of levels, Lemma~\ref{l:conttime} applied with times $\mrs_i(\gamma(\ell))$ and $i$, $x_{i}$ must have been touched at level $\ell$ during the execution of some search $i'$ with $\mrs_i(\gamma(\ell))\leq i' < i$. It is the first such touch $(i',j,x_{i},\ell)\in A$ that we mark with a type 3 mark.
\end{description}

\begin{lemma} \label{l:fourmarks}
No touch in $A$ receives more than four marks.
\end{lemma}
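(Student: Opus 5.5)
The plan is to fix a touch $t=(i',j,p,\ell)\in A$ and count the pairs $(i,\ell)$ whose mark lands on $t$. Every mark assigned to $(i,\ell)$ is a touch whose level coordinate is exactly $\ell$, so the level is determined by $t$. It therefore suffices to bound, separately for each mark type, the number of indices $i\in\nf$ whose mark of that type is $t$. The target is at most one type-1 mark, at most two type-2 marks, and at most one type-3 mark, for a total of four.

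\textbf{Type 1.} A type-1 mark for $(i,\ell)$ is a touch made during search $i$ itself, so $i=i'$. Hence at most one type-1 mark lands on $t$.

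\textbf{Type 2.} Write $k=\mrs_i(\gamma(\ell))$. A type-2 mark for $(i,\ell)$ is the minimum-$j$ touch at level $\ell$ during search $k$. This touch depends only on $k$ and $\ell$, so the marks landing on $t$ come from indices $i$ with $\mrs_i(\gamma(\ell))=i'$ and $x_i\neq x_{i'}$. I would show there is at most one such $i$ with $x_i>x_{i'}$, and symmetrically at most one with $x_i<x_{i'}$.

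Suppose $i_1<i_2$ both qualify with $x_{i_1},x_{i_2}\in(x_{i'},x_{i'}+\gamma(\ell)]$. Then $|x_{i_1}-x_{i_2}|<\gamma(\ell)$, and $i'<i_1<i_2$. So $x_{i_1}$ is a search in the window $[x_{i_2}-\gamma(\ell),x_{i_2}+\gamma(\ell)]$ made strictly between $i'$ and $i_2$. This forces $\mrs_{i_2}(\gamma(\ell))\geq i_1>i'$, a contradiction. This gives at most two type-2 marks.

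\textbf{Type 3.} A type-3 mark for $(i,\ell)$ is a touch of $x_i$ itself, so $p=x_i$, and it is made during a search $i'$ with $\mrs_i(\gamma(\ell))\leq i'<i$. By Lemma~\ref{l:recent}, $\mrs_i\leq\mrs_i(\gamma(\ell))$. Equality is impossible, because a type-3 mark requires $x_{\mrs_i(\gamma(\ell))}\neq x_i$, whereas $x_{\mrs_i}=x_i$. So $\mrs_i<\mrs_i(\gamma(\ell))\leq i'<i$, which means $p$ is not searched at any time in $[i',i)$. Thus $i$ must be the first search of $p$ strictly after $i'$, which is unique given $t$. This gives at most one type-3 mark.

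Summing gives $1+2+1=4$. The only step needing a real argument is the type-2 bound: the key observation is that two candidates on the same side of $x_{i'}$ lie within $\gamma(\ell)$ of each other, so the earlier one would become the most recent nearby search for the later one. The type-3 step hinges on the strict inequality $\mrs_i<\mrs_i(\gamma(\ell))$.
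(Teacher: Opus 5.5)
Your proposal is correct and follows the paper's proof essentially step for step: the same $1+2+1$ split by mark type, the same observation that the touch fixes the level coordinate, and the same type-2 argument. That argument says two marking searches on the same side of $x_{i'}$ would make the earlier one a more recent search inside the later one's window. For type 3 the paper phrases uniqueness as disjointness of the intervals $[\mrs_i,i)$ over searches of the same item, which is equivalent to your ``first search of $p$ after $i'$'' formulation; incidentally, that formulation works even without the strict inequality $\mrs_i<\mrs_i(\gamma(\ell))$ you emphasize.
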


\begin{proof}
Specifically, we show that no touch in $A$ receives more than one Type-1 mark, two Type-2 marks and one Type-3 mark.
\begin{description}
\item[Type 1:]
Given $i$ and $\ell$, this marks $(i,j,p,\ell)$ for some $p$ and $j$ and thus a different choice of $i$ and/or $\ell$ will type-1 mark a different element of $A$.

\item[Type 2 marks:] We show that no touch receives more than two type-2
marks.

A type-2 mark generated by the pair $(i,\ell)$ marks a touch of the form
$(\mrs_i(\gamma(\ell)),j,p,\ell)$. In particular the fourth coordinate of
the marked touch is $\ell$. Thus pairs with different values of $\ell$ mark
different touches, and we fix $\ell$ for the remainder of the argument. For
brevity, let $r\coloneqq\gamma(\ell)$, and for a pair $(i,\ell)$ generating a
type-2 mark, let $q_i\coloneqq\mrs_i(r)$ be the time of the marked touch.
Two pairs $(i,\ell)$ and $(\hat{i},\ell)$ can mark the same touch only if
$q_i=q_{\hat{i}}$. Therefore, it suffices to show that at most two indices $i$
generating type-2 marks share the same value of $q_i$.

Suppose $(i,\ell)$ and $(\hat{i},\ell)$ both generate type-2 marks, with
$i\neq\hat{i}$ and $q_i=q_{\hat{i}}\eqqcolon q$, without loss of generality assume
$i<\hat{i}$. By the definition of $\mrs$, $q<i<\hat{i}$. We establish three
facts.
\begin{itemize}

    \item \emph{Fact 1: $|x_i-x_q|\leq r$ and $|x_{\hat{i}}-x_q|\leq r$.} This is
immediate from the definition of $\mrs_i(r)$ and $\mrs_{\hat{i}}(r)$: the
item searched at time $q$ has key in $[x_i-r,x_i+r]$ and in
$[x_{\hat{i}}-r,x_{\hat{i}}+r]$.

\item \emph{Fact 2: $x_i\neq x_q$ and $x_{\hat{i}}\neq x_q$.} The marking scheme
is only defined for pairs $(i,\ell)$ with $x_i\neq x_{\mrs_i(\gamma(\ell))}$.

\item \emph{Fact 3: $|x_i-x_{\hat{i}}|>r$.} Suppose not, so
$x_i\in[x_{\hat{i}}-r,x_{\hat{i}}+r]$. Since $q<i<\hat{i}$, time $i$ is then
a candidate in the maximum defining $\mrs_{\hat{i}}(r)$, giving
$q=\mrs_{\hat{i}}(r)\geq i$. This contradicts $q<i$.

\end{itemize}

By Fact~2, neither $x_i$ nor $x_{\hat{i}}$ equals $x_q$, leaving two cases
for their positions relative to $x_q$.

\emph{Case 1: $x_i$ and $x_{\hat{i}}$ lie on the same side of $x_q$.}
Assume both are smaller than $x_q$, the case where both are larger is
symmetric. By Fact~1 both lie in $[x_q-r,x_q)$, an interval of length $r$,
so $|x_i-x_{\hat{i}}|<r$, contradicting Fact~3. This case is impossible.

\emph{Case 2: $x_i$ and $x_{\hat{i}}$ lie on opposite sides of $x_q$.} This
is the only remaining possibility.

Therefore, among all pairs $(i,\ell)$ generating type-2 marks with $q_i=q$,
at most one has $x_i<x_q$ and at most one has $x_i>x_q$, since two on the
same side would realize the impossible Case~1. Thus at most two pairs mark
touches at time $q$ with level $\ell$, and no touch receives more than two
type-2 marks.

\item[Type 3 marks:] Given $i$ and $\ell$, this always marks $(i',j,x_i,\ell)$ for some $i'$ in the range $[\mrs_i(\gamma(\ell)),i)$.  By Lemma~\ref{l:recent} $[\mrs_i(\gamma(\ell)),i) \subseteq [\mrs_i,i)$. Observe that if $x_i=x_{\hat{i}}$, $i\not=\hat{i}$ that the intervals $[\mrs_i,i)$ and $[\mrs_{\hat{i}},\hat{i})$ are disjoint.
Thus no other search $x_{\hat{i}}$ and $\ell$ could mark $(i',j,x_i,\ell)$ as the touch it would mark would not have an $i'$-value in $[\mrs_i,i)$.
\end{description}

\end{proof}

\begin{lemma} \label{l:prefinal} The time to execute $X$ on a $BB[\alpha]$-tree is at least
$$
\frac{1}{4} |\{ (i,\ell) | i \in \nf \text{ and }  x_{\mrs_i(\gamma(\ell))} \not = x_i \}|.
$$
\end{lemma}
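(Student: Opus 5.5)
The plan is a double-counting argument using the marking scheme, Lemma~\ref{l:fourmarks}, and Lemma~\ref{l:asize}. Let $P \coloneqq \{ (i,\ell) \mid i \in \nf,\ \ell \geq 0 \text{ integer},\ x_{\mrs_i(\gamma(\ell))} \neq x_i \}$; this is exactly the set whose cardinality appears in the statement, and it is the domain on which the marking scheme was defined. I will show that the scheme gives a well-defined map $\mu : P \to A$, assigning to each pair a touch in $A$. Lemma~\ref{l:fourmarks} then says every fiber of $\mu$ has size at most four. Hence $|P| \leq 4|A|$. By Lemma~\ref{l:asize}, $|A| = A(\alg,T_0,X)$, the cost of executing $X$. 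Together these give the bound in Lemma~\ref{l:prefinal}.

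The main step is checking that $\mu$ is total, meaning every pair in $P$ receives at least one mark. The three types form an exhaustive case split on two quantities: whether $\ell_i(x_i) \geq \ell$, and whether $\ell_{\mrs_i(\gamma(\ell))}(x_{\mrs_i(\gamma(\ell))}) \geq \ell$.
\begin{itemize}
\item If $\ell_i(x_i) \geq \ell$, Lemma~\ref{l:contpath} gives a type-1 touch.
\item If instead the second condition holds, Lemma~\ref{l:contpath} applied at time $\mrs_i(\gamma(\ell))$ gives a type-2 touch.
\item In the remaining case, the argument in the type-3 description applies. Lemma~\ref{c:deeplevel} in tree $T_{\mrs_i(\gamma(\ell))}$ forces $\ell_{\mrs_i(\gamma(\ell))}(x_i) \geq \ell > \ell_i(x_i)$. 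Lemma~\ref{l:conttime} then produces the type-3 touch.
\end{itemize}
For type 3, I need $x_i \neq x_{\mrs_i(\gamma(\ell))}$, so that Lemma~\ref{c:deeplevel} is applied to two different nodes. This is exactly the defining condition of $P$. Each case also needs a strict time inequality: $\mrs_i(\gamma(\ell)) < i$ holds by the definition of $\mrs$.

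To make $\mu$ single-valued, I assign each pair the mark of the first applicable type, with the minimum $j$ within that type, as the scheme prescribes. Lemma~\ref{l:fourmarks} bounds the total number of marks on any touch across all types, so it bounds the fiber sizes of $\mu$ in particular.

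One technical point must be handled: $P$ could a priori be infinite in $\ell$. Taking $\ell$ large makes $\gamma(\ell) < 1$, so $\mrs_i(\gamma(\ell)) = \mrs_i$ and $x_{\mrs_i(\gamma(\ell))} = x_i$; such pairs are therefore excluded from $P$. In any case, the injection-up-to-four argument yields $|P| \leq 4|A|$ even without first proving finiteness, since $A$ is finite.

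I expect no real obstacle here. The only care needed is to confirm that the case analysis is exhaustive and that each case's hypotheses match those of the invoked lemma.
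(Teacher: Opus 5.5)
Your proposal is correct and follows the paper's proof: the marking scheme gives a map from the set of pairs $(i,\ell)$ to $A$, Lemma~\ref{l:fourmarks} bounds each fiber by four, and Lemma~\ref{l:asize} converts $|A|$ into the running time. Your explicit check that every pair receives exactly one mark, through the exhaustive type-1/2/3 case split, is something the paper establishes in the definition of the marking scheme rather than repeating in this proof.
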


\begin{proof}
Each element of $\{ (i,\ell) | i \in \nf \text{ and }  x_{\mrs_i(\gamma(\ell))} \not = x_i \}$ marks an element of $A$. By Lemma~\ref{l:fourmarks}, each element of $A$ receives at most 4 marks, so at least
$$\frac{1}{4} |\{ (i,\ell) | i \in \nf \text{ and }  x_{\mrs_i(\gamma(\ell))} \not = x_i \}|$$ distinct elements of $A$ receive marks, giving a lower bound on the size of $A$. Finally by Lemma~\ref{l:asize}
$|A|=A(\alg,T_0,X)$, so a lower bound on $|A|$ is a lower bound on the execution time of the algorithm.
\end{proof}

\begin{theorem} \label{t:lb}
The time to execute $X$ on a $BB[\alpha]$-tree with constant $\alpha$ is $\Omega(OPT(X))$:
$$
\Omega\left(
\sum_{i \in \nf}
\left(
\log \min\left\{r\geq 1 \,\middle|\,
x_{\mrs_i(\frac{n}{r})} = x_i\right\}
+
\log
\frac{n}{\min(x_i,n-x_i+1)}
\right)
\right).
$$
\end{theorem}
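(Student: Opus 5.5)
The proof splits the sum into its two terms and bounds each separately by a constant times the running time; the maximum of the two lower bounds is at least half their sum. The second term comes from Lemma~\ref{l:extreme}. For each $i\in\nf$, the search must touch $x_i$, and the pointer only moves one step at a time starting from the root. So search $i$ costs at least $d_i(x_i)+1$. By Lemma~\ref{l:extreme} this is at least $\log_{\frac1\alpha}\frac{n}{\min(x_i,n-x_i+1)}$. Summing over $i\in\nf$, and noting that the searches use disjoint unit-cost operations, bounds the second term up to the factor $\log\frac1\alpha$, which is a constant.

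The first term comes from Lemma~\ref{l:prefinal}. For a fixed $i\in\nf$, let $r^*_i\coloneqq\min\{r\geq1 \mid x_{\mrs_i(n/r)}=x_i\}$. I would show that the number of levels $\ell\geq 0$ with $x_{\mrs_i(\gamma(\ell))}\neq x_i$ is $\Omega(\log r^*_i)-O(1)$. The key observation is monotonicity: as the radius $s$ grows, $\mrs_i(s)$ is nondecreasing. Moreover, once some item other than $x_i$ is the most recent search in the window, enlarging the window can never make $x_i$ the most recent again. This holds because $\mrs_i(s)\geq \mrs_i$, with equality exactly when $x_{\mrs_i(s)}=x_i$. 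So the set of radii $s$ with $x_{\mrs_i(s)}=x_i$ is an initial interval $[0,s^*)$ (or closed at $s^*$). By minimality of $r^*_i$, for all $r<r^*_i$ the radius $n/r$ exceeds $n/r^*_i$ and gives a different item. Hence every radius $s> n/r^*_i$ (up to integrality, roughly $s\geq n/(r^*_i-1)$) satisfies $x_{\mrs_i(s)}\neq x_i$.

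Next I would translate this into a count of levels. Since $\gamma(\ell)=n\alpha^{(\ell+1)/\log_\alpha\beta}$ is decreasing in $\ell$, the condition $\gamma(\ell)> n/r^*_i$ holds exactly when $\frac{\ell+1}{\log_\alpha\beta}<\log_{\frac1\alpha} r^*_i$. That gives at least $\log_\alpha\beta\cdot\log_{\frac1\alpha} r^*_i-2$ valid levels $\ell\geq0$, all of which are counted in Lemma~\ref{l:prefinal}. Summing over $i$ shows that the running time is at least $\frac14\sum_{i\in\nf}(c_\alpha\log r^*_i-2)$, where $c_\alpha$ is a positive constant.

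The main obstacle is the additive $-2$ per search, together with the integrality and boundary issues (the case $r^*_i$ near 1, and the edge of the interval at $s^*$). I would absorb these using the bound from the first paragraph: every search costs at least $1$, so the running time is at least $|\nf|$. Adding a constant multiple of $|\nf|$ to the level-count bound therefore cancels the $-2\,|\nf|$ loss, and only the positive multiple of $\sum\log r^*_i$ remains. Combining this with the second-term bound yields $\Omega(OPT(X))$, with constants depending only on $\alpha$.
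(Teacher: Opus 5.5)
Your proposal is correct and follows the paper's proof. The second term comes from Lemma~\ref{l:extreme} plus a path-length bound, and the first from Lemma~\ref{l:prefinal} by counting the levels $\ell$ with $\gamma(\ell)>n/r^*_i$. The paper phrases that count as $k_i$, the size of a downward-closed set, and notes that $n/\gamma(k_i)$ is a candidate in the minimization; this is the same inequality. In both, the additive losses are absorbed by the trivial bound of one operation per search, and your use of the initialization cost to get $d_i(x_i)+1$ removes the paper's $-m$ term in the second bound.

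The one step you leave unjustified is that bound of $d_i(x_i)+1$ operations per search. It requires that a rotation, not only a pointer move, changes the tree distance between the pointer and $x_i$ by at most one. The paper checks this explicitly.
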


\begin{proof}
\textbf{The second term:} From Lemma~\ref{l:extreme},
the depth of $x_i$ in the tree $T_i$ is at least $\log_{\frac{1}{\alpha}} \frac{n}{\min(x_i,n-x_i+1)}-1$.
Each unit-cost operation changes the distance in the current tree between the pointed node and $x_i$ by at most one. This is immediate for the pointer movements, and in a rotation of the pointed node $x$ with its parent $y$, the distance from $x$ increases by one for nodes in the subtree of the child of $x$ that becomes a child of $y$, decreases by one for nodes outside the subtree formerly rooted at $y$, and is unchanged otherwise. As search $i$ begins with the pointer at the root, at distance $d_{T_i}(x_i)$ from $x_i$, the pointer must touch $x_i$ and search $i$ using at least $d_{T_i}(x_i)$ unit-cost operations. Thus
$$A(\alg,T_0,X) \geq \sum_{i \in \nf} \log_{\frac{1}{\alpha}} \frac{n}{\min(x_i,n-x_i+1)} - m.$$

\textbf{The first term:} This is a manipulation of Lemma~\ref{l:prefinal}:
\begin{align*}
A(\alg,T_0,X) &\geq \frac{1}{4} |\{ (i,\ell) \mid i \in \nf \text{ and }  x_{\mrs_i(\gamma(\ell))} \not = x_i \}|
\\
&= \frac{1}{4} \sum_{i \in \nf} |\{ \ell \mid   x_{\mrs_i(\gamma(\ell))} \not = x_i \}|.
\\
\intertext{Observe that if $x_{\mrs_i(\gamma(\ell))} \not = x_i$ then $\mrs_i(\gamma(\ell)) > \mrs_i$ (Lemma~\ref{l:recent}), and thus the same holds for all smaller nonnegative integer values of $\ell$. Note that $\gamma$ is decreasing in $\ell$ and $\mrs_i(r)$ is nondecreasing in $r$, so $\mrs_i(\gamma(\ell')) \geq \mrs_i(\gamma(\ell)) > \mrs_i$ for all $\ell' \leq \ell$, and no search at a time in $(\mrs_i,i)$ is a search of $x_i$. The set $\{ \ell \mid x_{\mrs_i(\gamma(\ell))} \not = x_i \}$ is therefore downward closed, and its size is precisely the smallest $\ell$ such that $x_{\mrs_i(\gamma(\ell))} = x_i$, which we denote $k_i$. This is well defined because for large enough $\ell$, $\gamma(\ell)<1$ and the only integer in $[x_i-\gamma(\ell),x_i+\gamma(\ell)]$ is $x_i$ itself, from which $\mrs_i(\gamma(\ell))=\mrs_i$.}
&= \frac{1}{4} \sum_{i \in \nf} k_i
\end{align*}
It remains to relate $k_i$ to the minimization over $r$ in the statement of the theorem. Let
$$r_i \coloneqq \frac{n}{\gamma(k_i)} = \left(\frac{1}{\alpha}\right)^{\frac{k_i+1}{\log_\alpha \beta}}.$$
Then $r_i \geq 1$ and $x_{\mrs_i(n/r_i)} = x_{\mrs_i(\gamma(k_i))} = x_i$, so $r_i$ is a candidate in the minimization and
$$\min\left\{r\geq 1 \,\middle|\, x_{\mrs_i\left(\frac{n}{r}\right)} = x_i\right\} \leq r_i = \left(\frac{1}{\alpha}\right)^{\frac{k_i+1}{\log_\alpha \beta}}.$$
Taking $\log_{\frac{1}{\alpha}}$ of both sides and rearranging gives
$$ k_i \geq \log_\alpha \beta \cdot \log_{\frac{1}{\alpha}} \min\left\{r\geq 1 \,\middle|\, x_{\mrs_i\left(\frac{n}{r}\right)} = x_i\right\} - 1.$$
Summing over $i \in \nf$ and using $|\nf| \leq m$,
$$A(\alg,T_0,X) \geq \frac{1}{4} \sum_{i \in \nf} k_i \geq \frac{\log_\alpha \beta}{4}\, S_1 - \frac{m}{4},
\qquad \text{where } S_1 \coloneqq \sum_{i \in \nf}
\log_{\frac{1}{\alpha}}
\min\left\{r \geq 1 \,\middle|\,
x_{\mrs_i\left(\frac{n}{r}\right)}
= x_i
\right\}.
$$

\textbf{Combining:} Let $S_2 \coloneqq \sum_{i \in \nf} \log_{\frac{1}{\alpha}} \frac{n}{\min(x_i,n-x_i+1)}$, so that the second-term bound reads $A(\alg,T_0,X) \geq S_2 - m$.
Both $S_1$ and $S_2$ are nonnegative. Each term of $S_1$ is at least $\log_{\frac{1}{\alpha}} 1 = 0$ by the restriction $r \geq 1$, and each term of $S_2$ is nonnegative as $\min(x_i,n-x_i+1) \leq n$.
Trivially $A(\alg,T_0,X) \geq m$, as each search uses at least one unit-cost operation.
Taking $\frac{4}{5}$ of the first-term bound plus $\frac{1}{5}$ of the trivial bound gives
$$ A(\alg,T_0,X) \geq \frac{4}{5}\left( \frac{\log_\alpha \beta}{4}\, S_1 - \frac{m}{4}\right) + \frac{1}{5}\, m = \frac{\log_\alpha \beta}{5}\, S_1,$$
and taking half of the second-term bound plus half of the trivial bound gives $A \geq \frac{1}{2} S_2$. Averaging these two bounds,
$$ A(\alg,T_0,X) \geq \frac{\log_\alpha \beta}{10}\, S_1 + \frac{1}{4}\, S_2 \geq \frac{\log_\alpha \beta}{10} \left(S_1 + S_2\right).$$
As $\alpha$ is constant, $\log_{\frac{1}{\alpha}}$ and $\log$ agree up to a constant factor, and thus $A(\alg,T_0,X) = \Omega(OPT(X))$.
\end{proof}
\section{Upper Bound}
\label{s:upper}

In this section, we begin by defining the SkipPopList, which is a variant of the skip-list data structure of Pugh~\cite{DBLP:journals/cacm/Pugh90}. After proving SkipPopList has a running time of $O(OPT(X))$, in the following subsections we will show how via a detailed transform drawn from \cite{DBLP:conf/ACMse/DeanJ07} to our BST algorithm, the PopTree, a $BB[\frac{1}{729}]$-algorithm with a running time of $O(OPT(X))$ can be obtained. This transformation does not proceed directly but rather via a multiway search tree, which we call the PopBTree.
The SkipPopList is the most interesting part, as the transformations are largely standard but require somewhat tedious arguments to ensure that constant balance is preserved every step of the way.
In the transform from SkipPopList to the PopBTree, one nontrivial detail is that in a SkipPopList, one can move freely left and right in a level. However, in the PopBTree, only parent and child pointers are present. Hence, an argument that one can effectuate the specific changes of the SkipPopList without blowing up the running time needs to be presented.

\subsection{SkipPopList}

The structure is defined over a static set of data of size $n$, which, for convenience, we assume to be a power of three and which contains the integers in $[n]$.

\paragraph{Skip list.} We begin with the usual ingredients of a Skip List:
\begin{enumerate}
    \item There are levels, which we number from the top down, starting at level 1. Note that the concept of level used when discussing skip lists is different from that in the previous section, which applies to trees.
    \item Each level stores a subset of the data.
    \item Level $\ell$ stores a subset of the data in level $\ell+1$.
    \item There is a level $\ell_{\max}$ which stores $[n]$, all the data.
    \item The level of an item, $\ell(x)$ is the smallest level that it appears in.
\end{enumerate}

\paragraph{Skip list cost model. }
We make the skip list cost model precise. There is a single pointer that always points to an item in a particular level and starts each search at the smallest-valued item in level 1. At unit cost, the pointer can move to the predecessor or the successor on the current level, or stay with the same item and move up or down a level, so long as the item is present in the destination level. Additionally, if the pointer points to the top most level an item belongs to, it may \emph{demote} the item which involves removing the item from that level and moving the pointer down one level. Items in $\ell_{\max}$ can not be demoted. The inverse operation of \emph{promotion} is also allowed, which inserts the item into the level above and moves the pointer up, and is permitted only when the pointer is at the item's topmost level, and it is not already at level~1. A skip list successfully executes a search for $x_i$ if the pointer touches a node containing $x_i$ at any level during the execution of the search.

\paragraph{SkipPopList Invariants. }
We now describe the basic structural invariants of a SkipPopList.
\begin{enumerate}[resume]
    \item In level $\ell$ we view all keys (all of them, not just the ones stored at that level) as being partitioned into $3^\ell$ \emph{buckets}. The $3^\ell$ buckets each contain $n/3^\ell$ consecutive keys. We adopt the tree terminology of sibling, parent, and child. In general, a bucket has two sibling buckets,
    a parent bucket (above) and three child buckets (below).
    \item Each bucket at level $\ell$ is either disjoint from the items stored at level
    $\ell$, which we refer to as \emph{vacant}, or the bucket intersects a single item stored at level $
    \ell$ in which case we refer to it as \emph{occupied}.
    \item No adjacent buckets on the same level can be occupied
    \item \label{i:empty} At most two consecutive buckets on the same level can be vacant.
    \item The minimum and maximum buckets on every level must be vacant, these are referred to as the \emph{extreme} buckets.
\end{enumerate}
See Figure~\ref{fig:skippoplist-to-bst} for an illustration.
The SkipPopList structure also has temporal invariants with respect to a sequence of searches being performed on items in $[n]$.
These invariants are the most novel part of our construction. Intuitively, they require recently searched items that do not have a more recently searched item nearby to be in lower-numbered levels as a function of ``nearby'' and ``recently,'' but yet give flexibility to allow rebalancing to work.

\begin{enumerate}[resume]
    \item Each nonextreme bucket is either \emph{important} or \emph{unimportant}. Extreme buckets are simply extreme and must be vacant.
    \item A nonextreme bucket is important if its most recently searched item was searched more recently than the most recently searched items in its two sibling buckets.
    \item An important bucket must be occupied by the most recently accessed item in the bucket.
    \item Unimportant buckets may be either occupied or vacant and if they are occupied there is no requirement on which item occupies it.
\end{enumerate}

\paragraph{The bottom level.}
Set $\ell_{\max} \coloneqq \log_3 n + 1$. For levels $\ell \leq \log_3 n$ the buckets are as defined above: $3^\ell$ buckets of $\frac{n}{3^\ell}$ consecutive keys each. In particular, at level $\log_3 n$ each bucket consists of a single key. At level $\ell_{\max}$ there are $3n$ buckets, three children per level-$(\log_3 n)$ bucket. The key of the parent bucket belongs to the \emph{middle} child, and the two outer children contain no keys. Level $\ell_{\max}$ stores all of $[n]$. (Assigning keys by the natural subdivision of key ranges would instead place each key in an outer child, and in particular place key $n$ in the maximum bucket of level $\ell_{\max}$, violating the requirement that extreme buckets be vacant, hence the middle-child convention.)

\paragraph{Assorted observations}

\begin{lemma}{\bf The bottom level}\label{l:bottom}
The occupancy of level $\ell_{\max}=\log_3 n+1$ is fully determined, i.e., the middle child of each level-$(\log_3 n)$ bucket is occupied by that bucket's key, and every other bucket is vacant, so exactly every third bucket is occupied. This occupancy satisfies all SkipPopList invariants and never changes during the execution of any search sequence.
\end{lemma}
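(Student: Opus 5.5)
The plan is to first establish the forced occupancy, then check each invariant against it, and finally argue that nothing in the cost model can alter it.

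\textbf{Forced occupancy.} By the bottom-level convention, level $\ell_{\max}$ stores all of $[n]$. The buckets at this level are arranged as $n$ consecutive triples, one triple per key. Each key lies only in the middle bucket of its own triple, and the two outer buckets contain no keys at all.

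A bucket is occupied exactly when it intersects an item stored at that level. Every item is stored here, so every middle bucket is occupied by its unique key. The outer buckets contain no keys, so they can intersect no item and are vacant. The occupancy pattern is therefore vacant, occupied, vacant, repeated $n$ times, so exactly every third bucket is occupied.

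\textbf{Structural invariants.} Each of these is a direct check on the periodic pattern.
\begin{itemize}
\item Each bucket intersects at most one stored item, since buckets hold at most one key.
\item No two occupied buckets are adjacent, since occupied buckets are three apart.
\item The longest run of vacant buckets is two: the right outer bucket of one triple followed by the left outer bucket of the next.
\item The extreme buckets are the first and last outer buckets, so they are vacant.
\end{itemize}

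\textbf{Temporal invariants.} The only nontrivial requirement is that an important bucket must be occupied by its most recently searched item. An outer bucket contains no keys, so it has no most recently searched item. Under the natural convention, an empty bucket is treated as never searched, so it cannot be more recent than its siblings and is never important.

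A middle bucket contains a single key. If it is important, its most recently accessed item is that key, which does occupy it. Unimportant buckets carry no constraint.

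The step needing the most care is exactly this convention for empty buckets. I would state it explicitly, comparing last-search times with $-\infty$ for buckets containing no searched key. With that convention, an empty bucket can never be important. The same convention also covers buckets at higher levels whose keys have not yet been searched.

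\textbf{Invariance over time.} By the cost model, items in $\ell_{\max}$ cannot be demoted, so every key always remains stored at level $\ell_{\max}$. Promotion adds an item to a higher level and leaves membership at level $\ell_{\max}$ unchanged.

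The bucket partition of level $\ell_{\max}$ is static, and its occupancy depends only on the set of stored items. Hence the occupancy never changes during any search sequence. The verification above holds for every search history, since it relied only on each middle bucket containing a single key. So all invariants hold at level $\ell_{\max}$ at all times.
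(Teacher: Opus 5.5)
Your proof is correct. For the forced occupancy pattern and the structural invariants it matches the paper's argument essentially line by line. The routes differ in two places.

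\emph{Why the occupancy never changes.} You appeal directly to the cost model's rule that an item cannot be demoted out of level $\ell_{\max}$, together with the fact that promotions only add items to higher levels. The paper instead argues from the restoration method itself. A clearing demotion at a level only removes the occupant of the searched key's bucket or an adjacent one. At level $\ell_{\max}$ the adjacent buckets are permanently keyless and distinct keys sit three buckets apart, so no cascade ever reaches $\ell_{\max}$. Your version is shorter and holds for any execution that is legal in the cost model, independently of the algorithm. The paper's version additionally shows that the maintenance method never even requests the forbidden demotion, i.e., that the method respects the model at the bottom level. Your argument takes that for granted; it matters for the correctness of the restoration (Lemma~\ref{l:maintain}) rather than for this statement.

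\emph{Temporal invariants.} Here you are more explicit than the paper. The paper notes only that each middle bucket's single key is its most recently searched item and that its siblings are keyless. It leaves implicit why a keyless outer bucket can never be important. Your convention of assigning last-search time $-\infty$ to buckets with no searched key, compared with strict inequality, settles that case cleanly. It also handles middle buckets whose key has not yet been searched.
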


\begin{proof}
Level $\ell_{\max}$ stores all of $[n]$ and each key belongs to the middle child of its parent, so the occupied buckets are exactly the $n$ middle children. The outer children contain no keys and are therefore permanently vacant. Read left to right, the level has the pattern, vacant, occupied, vacant, repeated $n$ times.

We verify the invariants. Every occupied bucket contains exactly one stored item, its key. Consecutive occupied buckets are separated by exactly two vacant buckets, so no two adjacent buckets are occupied, and no three consecutive buckets are vacant. The minimum and maximum buckets are outer children and thus vacant. For the temporal invariants, each occupied bucket contains a single item, which is therefore its most recently searched item whenever any of its items has been searched, and its sibling buckets contain no keys. So an important bucket at level $\ell_{\max}$ is always occupied by its most recently searched item.

Finally, the occupancy never changes. Promotions insert an item into a level above its topmost level and remove nothing from the levels below. A demotion at a level removes the occupant of a bucket that contains, or is adjacent to, the bucket containing the searched key at that level. At level $\ell_{\max}$, the buckets adjacent to the searched key's bucket contain no keys and are permanently vacant, and distinct keys occupy buckets at a distance of at least three. So no conflict ever arises at level $\ell_{\max}$ and no clearing cascade ever demotes an item from it.
\end{proof}

\begin{lemma} \label{l:below}
For any bucket $b$ at level $\ell < \ell_{\max}$, at least one of the three child buckets of $b$ at level $\ell+1$ is occupied. Moreover, if $b$ is vacant, the occupant of any occupied child bucket of $b$ has $\ell+1$ as its topmost level.
\end{lemma}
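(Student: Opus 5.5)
The plan is to use only the structural invariants at level $\ell+1$: no two adjacent buckets occupied, no three consecutive vacant, extreme buckets vacant. Temporal invariants are not needed. The level $\ell+1$ buckets partition keys into $3^{\ell+1}$ blocks, and the children of $b$ are three consecutive buckets. For $\ell=\log_3 n$ they are the special bottom-level children, and Lemma~\ref{l:bottom} settles that case directly, since the middle child of each level-$(\log_3 n)$ bucket is occupied. So assume $\ell+1\le \log_3 n$.

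\textbf{First claim.} Invariant~\ref{i:empty} says at most two consecutive buckets on level $\ell+1$ are vacant. Hence the three consecutive children of $b$ cannot all be vacant, and at least one is occupied. Nothing about extreme buckets is needed, because the three children are consecutive even when $b$ is extreme.

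\textbf{Second claim.} Let $b$ be vacant and let $c$ be an occupied child of $b$, occupied by item $y$ stored at level $\ell+1$. Then $y$'s key lies in $c\subseteq b$ as a key range. Suppose $y$'s topmost level were some $\ell'\le \ell$. Since level $\ell$ stores a subset of level $\ell+1$ and levels nest downward, $y$ would be stored at level $\ell$ (items stored at level $\ell'$ are stored at every level below). Its key lies in $b$, so $b$ would intersect an item stored at level $\ell$ and would be occupied, a contradiction. Therefore $y$ is not stored at level $\ell$. It is stored at level $\ell+1$, so its topmost level is exactly $\ell+1$.

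The main obstacle is only bookkeeping. We must be sure the bucket hierarchy is nested: each level-$(\ell+1)$ bucket's key range lies inside its parent's range. This is clear for levels up to $\log_3 n$, where buckets of size $n/3^{\ell+1}$ subdivide buckets of size $n/3^\ell$. At the bottom level the middle-child convention keeps every key inside its parent's range.
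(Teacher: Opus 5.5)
Your proof is correct and follows the paper's argument essentially verbatim. Invariant~\ref{i:empty}, applied to the three consecutive children, gives the first claim, and the subset property gives the contradiction for the second; your separate handling of $\ell=\log_3 n$ via Lemma~\ref{l:bottom} is harmless but unnecessary, since that lemma already certifies that the invariants hold at level $\ell_{\max}$.
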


\begin{proof}
The three child buckets of $b$ are consecutive at level $\ell+1$ and partition the keys of $b$. By Invariant~\ref{i:empty}, they cannot all be vacant, so at least one is occupied. For the second claim, if the occupant $w$ of an occupied child bucket were stored at some level $\ell' \leq \ell$, then by the subset property, $w$ would be stored at level $\ell$. As $w$'s key lies in $b$, bucket $b$ would then be occupied, a contradiction.
\end{proof}

\begin{lemma} \label{l:deeplevel}
Let $i \in \nf$ and let $\ell$ be a level, $1 \leq \ell \leq \ell_{\max}$.
If $x_{\mrs_i(\frac{2n}{3^\ell})}=x_i$ and $\min(x_i,n-x_i+1)> \frac{n}{3^\ell}$, then $x_i$ is stored in level $\ell$ of the SkipPopList just before the execution of the $i$th search.
\end{lemma}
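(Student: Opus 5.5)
Fix $i\in\nf$ and a level $\ell$ satisfying the two hypotheses, and let $b$ be the level-$\ell$ bucket containing $x_i$. The plan is to show that, in the state just before search $i$, the bucket $b$ is nonextreme and important, and that its most recently searched item is $x_i$. The temporal invariant then says $b$ is occupied by $x_i$, which means $x_i$ is stored at level $\ell$.

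\textbf{Step 1 ($b$ is nonextreme).} Level-$\ell$ buckets hold $n/3^\ell$ consecutive keys each. The extreme buckets are $[1,n/3^\ell]$ and $[n-n/3^\ell+1,n]$. The hypothesis $\min(x_i,n-x_i+1)>n/3^\ell$ says exactly that $x_i$ lies in neither, so $b$ is nonextreme and has two sibling buckets at level $\ell$. For $\ell=\ell_{\max}$ this degenerates: since $n/3^{\ell_{\max}}=1/3$, the hypothesis is vacuous for position. However, $x_i$ is stored at level $\ell_{\max}$ anyway because that level stores all of $[n]$, so this case is trivial and I will set it aside.

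\textbf{Step 2 (locating the window).} Take the two siblings $b^-$ and $b^+$ of $b$; some care is needed with the word ``siblings'' here. The union $b^-\cup b\cup b^+$ consists of $3n/3^\ell$ consecutive keys. Every key in it lies within distance $2n/3^\ell$ of $x_i$, since $x_i\in b$ and each bucket has width $n/3^\ell$. Hence the whole union is contained in $[x_i-\tfrac{2n}{3^\ell},\,x_i+\tfrac{2n}{3^\ell}]$. The hypothesis $x_{\mrs_i(2n/3^\ell)}=x_i$ says that the most recent search before time $i$ with key in that window is a search of $x_i$ itself. In other words, no key in the window other than $x_i$ was searched in the interval $(\mrs_i,i)$.

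\textbf{Step 3 (importance and the occupant).} Since $x_i$ was searched at time $\mrs_i$, bucket $b$ has been searched before, and its most recent search is $x_i$ at time $\mrs_i$; any later search in $b$ would lie in the window. Each sibling bucket's most recent search, if any, occurred strictly before $\mrs_i$, again by Step 2. An unsearched sibling is treated as infinitely old. Thus $b$'s most recent search is strictly more recent than both siblings', so $b$ is important, and its most recently searched item is $x_i$. The invariant requiring an important bucket to be occupied by its most recently accessed item now gives that $x_i$ is stored at level $\ell$. This holds at the moment just before search $i$, because the invariants hold between searches.

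\textbf{Main obstacle.} The delicate point is what ``sibling'' means. If siblings are the two horizontally adjacent buckets, the argument above goes through directly. If instead siblings mean the other children of the same parent, the relevant buckets are still within the same three-bucket span of width $3n/3^\ell$ that contains $b$. All of their keys are then within $2n/3^\ell$ of $x_i$, so the same argument works. That $2n/3^\ell$ suffices in both readings is precisely why the hypothesis uses the radius $\frac{2n}{3^\ell}$, so I expect this check to be routine.
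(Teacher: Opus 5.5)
Your Steps 1--3 are correct and essentially reproduce the paper's proof. The boundary hypothesis makes $b$ nonextreme. The radius-$\frac{2n}{3^\ell}$ window contains $b$ and both its siblings, so $b$ is important with $x_i$ as its most recently searched item, and the temporal invariant then places $x_i$ in $b$. Your separate treatment of $\ell=\ell_{\max}$, where buckets do not have width $\frac{n}{3^\ell}$, is a small piece of care the paper omits.

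The ``Main obstacle'' paragraph, however, is false and should be removed. Consider the reading ``siblings are the other children of the same parent.'' Let $b$ be an outer child and $x_i$ the key of $b$ farthest from the other two children. Then the far sibling contains a key at distance $\frac{3n}{3^\ell}-1$ from $x_i$. This exceeds $\frac{2n}{3^\ell}$ whenever $\frac{n}{3^\ell}\ge 2$. A search of that key after time $\mrs_i$ leaves the hypothesis $x_{\mrs_i(2n/3^\ell)}=x_i$ intact, yet makes $b$ unimportant. So under that reading the conclusion no longer follows from the invariants.

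The radius $\frac{2n}{3^\ell}$ suffices precisely because siblings are the two horizontally adjacent buckets, centered on $b$. This is how the paper uses the word. In Lemma~\ref{l:weakinv}, $B_\ell$ together with its ``(at most two) sibling buckets'' is a three-bucket window containing every bucket adjacent to $B_\ell$. In Lemma~\ref{l:maintain}, a bucket adjacent to $x_i$'s bucket becomes unimportant ``as a sibling now holds the most recently searched item.'' Commit to the adjacency reading, which your Step~2 already uses via $b^-\cup b\cup b^+$, rather than claiming both readings work.
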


\begin{proof}
Since $\min(x_i,n-x_i+1)> \frac{n}{3^\ell}$ we have $x_i > \frac{n}{3^\ell}$ and $x_i \leq n-\frac{n}{3^\ell}$, so $x_i$ lies in neither the minimum nor the maximum bucket of level $\ell$. Therefore, its level-$\ell$ bucket is not extreme.

The range $[x_i-\frac{2n}{3^\ell},x_i+\frac{2n}{3^\ell}]$ includes all keys of the size-$\frac{n}{3^\ell}$ level-$\ell$ bucket containing $x_i$ as well as all keys of that bucket's two sibling buckets. As $x_{\mrs_i(\frac{2n}{3^\ell})}=x_i$ means $x_i$ is the most recently searched item in $[x_i-\frac{2n}{3^\ell},x_i+\frac{2n}{3^\ell}]$ before time $i$, it is the most recently searched item in its bucket and more recently searched than any item in the two sibling buckets. Its bucket is therefore important, and by the temporal invariants an important bucket is occupied by its most recently searched item, namely $x_i$. Hence $x_i$ is stored in level $\ell$.
\end{proof}

\begin{lemma} \label{l:singledeepopt}
Given an $i \in \nf$,
let $\ell_i(x_i)$ be the minimum level of $x_i$ in the SkipPopList before the execution of the $i$th operation. We have
$$
\ell_i(x_i)
\leq
\max
\left(
\log_3
\min \left\{ r \geq 1 \,\middle|\,
x_{\mrs_i\left( \frac{n}{r} \right)}  = x_i
\right\}
,
\log_3
\frac{n}{\min(x_i,n-x_i+1)}
\right)+3
$$
\end{lemma}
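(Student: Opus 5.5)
The plan is to apply Lemma~\ref{l:deeplevel} at one well-chosen level $\ell$. Since levels are nested, being stored at level $\ell$ gives $\ell_i(x_i)\le \ell$. Write $r^*\coloneqq\min\{r\ge 1 \mid x_{\mrs_i(n/r)}=x_i\}$ and $E\coloneqq \frac{n}{\min(x_i,n-x_i+1)}$. We need an integer $\ell$ in $[1,\ell_{\max}]$ satisfying two conditions: first $x_{\mrs_i(2n/3^\ell)}=x_i$, and second $\min(x_i,n-x_i+1)>n/3^\ell$.

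For the first condition, we use the monotonicity of $\mrs_i(\cdot)$ in its argument, as in Lemma~\ref{l:recent} and the proof of Theorem~\ref{t:lb}. If $x_{\mrs_i(n/r^*)}=x_i$, then $\mrs_i(n/r^*)=\mrs_i$, because no search of $x_i$ occurs strictly between $\mrs_i$ and $i$. For any $r'\le n/r^*$ we have $\mrs_i\le \mrs_i(r')\le \mrs_i(n/r^*)=\mrs_i$, so $x_{\mrs_i(r')}=x_i$. It therefore suffices that $2n/3^\ell\le n/r^*$, that is, $3^\ell\ge 2r^*$. This holds once $\ell\ge \log_3 r^*+\log_3 2$. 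There is one subtlety: the minimum defining $r^*$ need not be attained, since $\mrs_i(n/r)$ is a step function. I would handle this by noting that the condition holds for all $r$ exceeding $r^*$ by a little, and the slack below absorbs this; alternatively, one observes that the set is closed upward and uses $r^*+\epsilon$.

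For the second condition, we need $3^\ell> E$, i.e.\ $\ell>\log_3 E$. Now take $\ell\coloneqq\lceil \max(\log_3 r^*,\log_3 E)\rceil+1$. This choice gives $3^\ell\ge 3\max(r^*,E)$, which is strictly more than $2r^*$ and strictly more than $E$. It also satisfies $\ell\le \max(\cdot)+2$, well within the claimed $+3$, and $\ell\ge 1$ because $r^*\ge1$.

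The step I expect to be the main obstacle is the constraint $\ell\le\ell_{\max}=\log_3 n+1$. Since $E\le n$ always, $\log_3 E\le \log_3 n$, so that term is safe. However, $r^*$ may exceed $n$ when $x_i$ must be isolated even from its immediate neighbours. In that case I would simply use that $x_i\in[n]$ is stored at level $\ell_{\max}$, so $\ell_i(x_i)\le \ell_{\max}=\log_3 n+1\le \log_3 r^*+3$. Similarly, if the chosen $\ell$ exceeds $\ell_{\max}$ for any reason, the trivial bound $\ell_i(x_i)\le\ell_{\max}\le \ell$ again gives the claim. Otherwise $\ell\le\ell_{\max}$, Lemma~\ref{l:deeplevel} applies, and the bound follows.
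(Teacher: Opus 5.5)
Your proof is correct and follows essentially the paper's route: you apply Lemma~\ref{l:deeplevel} at a level where both of its hypotheses hold, using the monotonicity of $\mrs_i(\cdot)$. The difference is that you exhibit the explicit witness level $\ell=\lceil\max(\log_3 r^*,\log_3 E)\rceil+1$ instead of the paper's chain of min/max manipulations; this even gives $+2$ and correctly treats the defining ``min'' as an infimum, which the paper glosses over. Your worry that $r^*$ might exceed $n$ is unfounded: any obstructing key is at distance at least $1$ from $x_i$, so $r^*\le n$, and with $\log_3 n$ an integer this gives $\ell\le\ell_{\max}$ automatically. Your fallback covers that case anyway.
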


\begin{proof}
Follows from Lemma~\ref{l:deeplevel} and some math. Throughout, $\ell$ ranges over the nonnegative integers both conditions of Lemma~\ref{l:deeplevel} are upward closed in $\ell$ (shrinking the range of $\mrs_i$ preserves $x_i$ being most recent, and $\frac{n}{3^\ell}$ is decreasing), $\ell=0$ fails the second condition as $\min(x_i,n-x_i+1)\leq n$, and if the minimum below exceeds $\ell_{\max}$ the first inequality holds trivially since $\ell_i(x_i)\leq \ell_{\max}$.

\begin{align*}
\ell_i(x_i)
&\leq  \min_\ell \left(
x_{\mrs_i(\frac{2n}{3^\ell})}=x_i\text{ and }\min(x_i,n-x_i+1)> \frac{n}{3^\ell}\right)
\intertext{By upward closure, this is the maximum level where the condition fails, plus one:}
&= 1+ \max_\ell \left(
x_{\mrs_i(\frac{2n}{3^\ell})}\not =x_i\text{ or }\min(x_i,n-x_i+1)\leq  \frac{n}{3^\ell}\right)
\\
&= 1+ \max \left( \max_\ell
x_{\mrs_i(\frac{2n}{3^\ell})}\not =x_i,\ \max_\ell \left[ \ell \leq \log_3 \frac{n}{ \min(x_i,n-x_i+1)}
\right]
\right)
\\
\intertext{Since $\ell$ is an integer, $\max_\ell[\ell \leq a]=\lfloor a \rfloor \leq a$.}
& \leq  1+ \max \left( \max_\ell
x_{\mrs_i(\frac{2n}{3^\ell})}\not =x_i,\ \log_3 \frac{n}{ \min(x_i,n-x_i+1)}
\right)
\\
\intertext{The maximum integer where something fails is the minimum where it holds, minus one.}
& =  1+ \max \left(-1+\min_\ell
x_{\mrs_i(\frac{2n}{3^\ell})} =x_i,\ \log_3 \frac{n}{ \min(x_i,n-x_i+1)}
\right)
\intertext{We now allow the level variable to be a nonnegative real, denoted $\hat\ell$, to be more flexible with the math. As the condition is upward closed, the minimum over the integers exceeds the minimum over the reals by at most one, so:}
& \leq
1+\max \left(
\min\left\{\hat{\ell} \,\middle|\,
x_{\mrs_i(\frac{2n}{3^{\hat{\ell}}})} =x_i\right\},\
\log_3 \frac{n}{ \min(x_i,n-x_i+1)}
\right)
\\
\intertext{As $\hat\ell = \log_3 \frac{3^{\hat{\ell}}}{2} + \log_3 2$ and $\frac{3^{\hat{\ell}}}{2}$ is monotone increasing in $\hat\ell$:}
& =
1+\max \left(
\log_3\min\left\{ \tfrac{3^{\hat{\ell}}}{2} \,\middle|\,
x_{\mrs_i(\frac{2n}{3^{\hat{\ell}}})} =x_i\right\}+\log_3 2,\
\log_3 \frac{n}{ \min(x_i,n-x_i+1)}
\right)
\\
\intertext{Set $r=\frac{3^{\hat{\ell}}}{2}$, so $\frac{2n}{3^{\hat\ell}}=\frac{n}{r}$. As $\hat\ell$ ranges over the nonnegative reals, $r$ ranges over $[\frac12,\infty)$. Further restricting to $r\geq 1$ can only increase the minimum. Using $1+\log_3 2 < 2 < 3$:}
&\leq  \max \left(
\log_3\min\left\{r\geq 1 \,\middle|\,
x_{\mrs_i(\frac{n}{r})} =x_i\right\},\
\log_3 \frac{n}{ \min(x_i,n-x_i+1)}
\right)+3
\end{align*}
\end{proof}

Thus we can conclude that the sum of all depths of searched nodes is asymptotically the same as the lower bound of $O(OPT(X))$.

\begin{cor} \textbf{Depth sums.}\label{cor:depth}
For any search sequence $X$ of length $\Omega(n \log n)$
$$
\sum_{i} \ell_i(x_i) = O(OPT(X)).
$$
\end{cor}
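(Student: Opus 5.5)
We split the sum $\sum_i \ell_i(x_i)$ into the searches with $i\in\nf$ and the first-time searches with $i\notin\nf$, and bound each part separately.

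\textbf{Searches in $\nf$.} Here we invoke Lemma~\ref{l:singledeepopt} term by term. The maximum of two nonnegative quantities is at most their sum. Also $\log_3$ and $\log$ differ only by a constant factor. Together these give
$$\ell_i(x_i)\le c\left(\log \min\left\{r\ge 1 \,\middle|\, x_{\mrs_i(\frac nr)}=x_i\right\}+\log\frac{n}{\min(x_i,n-x_i+1)}\right)+3$$
for an absolute constant $c$. Summing over $i\in\nf$ yields $\sum_{i\in\nf}\ell_i(x_i)\le c\cdot OPT(X)+3|\nf|$. The additive $3|\nf|\le 3m$ must still be absorbed into $O(OPT(X))$.

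\textbf{First-time searches.} There are at most $n$ indices $i\notin\nf$, and each satisfies $\ell_i(x_i)\le \ell_{\max}=\log_3 n+1$. Their total contribution is therefore $O(n\log n)$, which is $O(m)$ because $m=\Omega(n\log n)$.

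\textbf{Main obstacle: showing $m=O(OPT(X))$.} Both leftover terms are $O(m)$, so the corollary follows once we show $m=O(OPT(X))$. This is the step I expect to be delicate, since individual terms of $OPT$ can be zero. For example, $x_i=\lceil n/2\rceil$ gives a second term of only about $\log 2$, and the first term vanishes when $r=1$ works. I would handle it through the second term of \eqref{e:opt}. Since $\min(x_i,n-x_i+1)\le (n+1)/2$, each $i\in\nf$ contributes at least $\log\frac{2n}{n+1}$, which is at least a positive constant for $n\ge 2$ (with $\log$ base $2$ it is at least $\log\frac43$ once $n\ge 2$). Hence $OPT(X)\ge c'|\nf|\ge c'(m-n)$.

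Because $m=\Omega(n\log n)$, for $n$ larger than a constant we have $n\le m/2$, so $OPT(X)=\Omega(m)$. The degenerate case $n=1$, where every term of $OPT$ is zero, has to be set aside: there the level-$\ell_{\max}$ structure makes each $\ell_i(x_i)$ a constant, and I would simply note this, or adopt the convention that $n$ exceeds a fixed constant. Combining the three bounds gives $\sum_i\ell_i(x_i)=O(OPT(X))+O(m)=O(OPT(X))$.
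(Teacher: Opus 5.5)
Your proposal is correct and follows the paper's proof essentially step for step. You split off the first-time searches (at most $n$ of them, each at level at most $\ell_{\max}=O(\log n)$), bound each $i\in\nf$ via Lemma~\ref{l:singledeepopt} using max $\le$ sum and $\log_3\le\log$, and absorb the leftover $O(m)$ terms using $OPT(X)=\Omega(m)$, which comes from the second term of \eqref{e:opt} contributing at least $\log\frac43$ per non-first search together with $|\nf|\ge m-n$; your explicit caveat about $n=1$, where $OPT\equiv 0$, is a fair point the paper leaves implicit in its assumption $n\ge 2$.
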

\begin{proof}
First observe $OPT(X)=\Omega(m)$. Each term of \eqref{e:opt} is nonnegative, and its second summand is at least $\log \frac{n}{\min(x_i,n-x_i+1)} \geq \log\frac{2n}{n+1} \geq \log \frac{4}{3}$ for $n \geq 2$, so $OPT(X) = \Omega(|\nf|)$. As $|\nf| \geq m-n$ and $m=\Omega(n\log n)$, $OPT(X)=\Omega(m)$.

For $i \in \nf$, Lemma~\ref{l:singledeepopt} bounds $\ell_i(x_i)$ by the maximum of two nonnegative quantities plus $3$. The maximum is at most the sum, and as both arguments of the logarithms are at least $1$, each base-3 logarithm is at most the corresponding base-2 logarithm, so the sum is at most the $i$th term of \eqref{e:opt}. Thus
$\sum_{i \in \nf}\ell_i(x_i) \leq OPT(X) + 3|\nf| = O(OPT(X))$.

Also $\sum_{i \in [m]\setminus\nf} \ell_i(x_i) = O(n\log n)$ because there are at most $n$ first searches and every item is at level at most $\ell_{\max} = O(\log n)$. Since $OPT(X)=\Omega(m)$ and $m=\Omega(n\log n)$, this is also $O(OPT(X))$.
\end{proof}

\paragraph{Search algorithm.}

We have just completed the hard work of linking the depths of the searched items to $OPT$. Now we just need to show that we can perform unit cost operations to maintain the invariants of the SkipPopList in time proportional to the depth of the item being searched. Fortunately, this is straightforward.

When executing a search sequence, changes need to be made to the structure to restore any temporal invariants that are now violated. In particular, after executing $x_i$, any non-extreme bucket that $x_i$ belongs to becomes important and needs to have $x_i$ as its representative.

 Here is the method. Let $t$ be the level at which the promotion of $x_i$ will stop, the highest level whose bucket containing $x_i$ is not extreme.
\begin{itemize}
    \item \textbf{Clearing demotions.} For each item $y \neq x_i$, that at some level in $[t,\ell_i(x_i))$, occupies either the bucket containing $x_i$ or a bucket adjacent to it. We demote $y$ repeatedly, starting at $y$'s topmost level and proceeding downward, until $y$ is no longer stored at any level at which its bucket contains, or is adjacent to, the bucket containing $x_i$'s key. (The levels at which $y$ conflicts form a contiguous range ending at some level $c_y < \ell_i(x_i)$, since buckets only grow when moving up. So this means demoting $y$ until its topmost level is $c_y+1$.)
    \item \textbf{Promotion.} Promote $x_i$ one level at a time until the bucket above is extreme, or level 1 is reached.
    \item \textbf{Gap repair.} Wherever an item was demoted, if the resulting vacancy creates three consecutive vacant buckets on that level, promote into the middle of the three the occupant of one of its child buckets, which exists and has the level below as its topmost level by Lemma~\ref{l:below}. This process should proceed in a bottom-up order.
\end{itemize}

\begin{lemma}\label{l:maintain}
The above method restores all SkipPopList invariants.
\end{lemma}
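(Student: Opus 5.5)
We argue by induction over searches: assuming all invariants hold before search $i$, we show they hold after the three phases. The key observation is that the only buckets whose importance status can change are those containing $x_i$ or siblings of such. Indeed, importance of a bucket depends only on the most recent searches in it and its two siblings, and only the most recent search time of buckets containing $x_i$ changes.

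After search $i$, every non-extreme bucket containing $x_i$ is important and must be occupied by $x_i$. Every other bucket that might change status is a sibling of such a bucket and can only lose importance, becoming unimportant. Unimportant buckets impose no temporal requirement. So the temporal invariants reduce to showing that, at every level in $[t,\ell_{\max}]$ whose bucket $B_\ell(x_i)$ is non-extreme, $x_i$ is stored there and is the unique occupant of $B_\ell(x_i)$. We also need that no other bucket's importance or occupant is disturbed. The only risk is the demotions and gap-repair promotions, and we must check these never remove the required occupant of an important bucket.

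For the structural invariants, we check level by level $\ell\in[t,\ell_i(x_i))$.

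\textbf{Promotion of $x_i$.} This makes $x_i$ occupy $B_\ell(x_i)$. The clearing demotions beforehand removed every other item from $B_\ell(x_i)$ and from its two neighbours. Hence uniqueness of occupant and non-adjacency of occupied buckets hold at that level. We use the remark in the method that conflict levels of each $y$ form a contiguous range ending above $c_y$, so demoting $y$ to $c_y+1$ removes exactly the conflicts and nothing at lower levels. Extreme buckets stay vacant because promotion stops at $t$, and gap repair only promotes into middle-of-three buckets. A middle-of-three bucket is never extreme, since it has neighbours on both sides.

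\textbf{Demotions and important neighbours.} A demoted $y$ occupied a bucket adjacent to $B_\ell(x_i)$, which is a sibling or a cousin. If it was a sibling, it is now unimportant. If it is a non-sibling neighbour that was important, we must argue it keeps an occupant. I would handle this via gap repair, observing which child occupant is promoted. The cleanest route is to choose, in gap repair, the occupant of the child bucket that is important at level $\ell+1$ when the parent is important. Its most recent item is the most recent item of that parent, by the sibling comparison at the child level. That occupant exists at level $\ell+1$ by the induction order (bottom-up) and Lemma~\ref{l:below}.

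\textbf{Gap repair.} Promoting into the middle of three vacant buckets never creates adjacency, and it eliminates the triple. Since it fills a bucket whose neighbours are vacant, it cannot create a new violation at that level. It does not alter level $\ell+1$ occupancy, because promotion keeps the item below. Processing bottom-up ensures that when we repair level $\ell$, level $\ell+1$ is already valid, so Lemma~\ref{l:below} applies. Lemma~\ref{l:bottom} anchors the bottom level.

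\textbf{Main obstacle.} I expect the hardest step to be the interaction between clearing demotions and important buckets that are adjacent to $B_\ell(x_i)$ but not siblings of it. Such buckets remain important yet may have their occupant demoted. The method as stated only says ``the occupant of one of its child buckets,'' so the proof must either show that this case forces a three-vacancy gap, or argue that the required item can be chosen. I would first try to show that the demoted neighbour's vacancy, together with $B_\ell(x_i)$ being occupied, means the repair must be done one level down with the correct child. If needed, I would refine the choice rule in gap repair to pick the most recently searched child occupant.
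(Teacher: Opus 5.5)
Your argument stalls at the case you flag as the main obstacle, and that case comes from reading ``sibling'' as ``other child of the same parent''. In this paper the siblings of a bucket are its (at most two) neighbours on the same level. Three places show this: Lemma~\ref{l:deeplevel} needs it, since the window of radius $\frac{2n}{3^\ell}$ around $x_i$ is only guaranteed to cover $x_i$'s bucket and the two buckets adjacent to it; Lemma~\ref{l:weakinv} speaks of ``at most two'' siblings; and the paper's proof of this lemma rests on every bucket adjacent to $B_\ell(x_i)$ becoming unimportant. Under your reading the lemma is false, so no repair rule can rescue it. Take two adjacent nonextreme buckets with different parents, each containing a searched item, with their other tree-siblings never searched. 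Both are important, so both must be occupied, which violates the no-adjacent-occupied invariant. In particular, once $x_i$ sits in $B_\ell(x_i)$, an important ``cousin'' neighbour can never be refilled. Changing which child occupant gap repair promotes does not help, because gap repair only fills the middle of three vacant buckets and never a neighbour of the occupied $B_\ell(x_i)$. With the adjacency reading the obstacle disappears. Every nonextreme neighbour of $B_\ell(x_i)$ has $B_\ell(x_i)$ as a sibling, and $B_\ell(x_i)$ now contains the globally most recent search. So that neighbour is unimportant and may be left vacant by the clearing demotions.

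The rest of your structural argument matches the paper's, but two further steps are missing. First, you never verify the temporal invariant for buckets filled by gap repair. The point is that they are always unimportant, so any child occupant may be promoted and no special choice rule is needed. Consider a repaired bucket:
\begin{itemize}
\item It cannot contain $x_i$'s key. At levels $\ge t$ that bucket already holds $x_i$ when repair runs, and above $t$ it is extreme and never repaired.
\item If it is adjacent to $B_\ell(x_i)$, it is unimportant as above.
\item Otherwise, neither it nor its neighbours contain $x_i$'s key, so its status is unchanged by the search. It was also vacant before the search, since demotions at level $\ell$ vacate only $B_\ell(x_i)$ and its neighbours. Hence it was already unimportant, because important buckets are occupied.
\end{itemize}
Second, ``it eliminates the triple'' does not show that Invariant~\ref{i:empty} is restored. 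You need that after clearing and the promotion of $x_i$, every maximal vacant run has length at most four. At levels in $[t,\ell_i(x_i))$ there are at most three vacant buckets on each side of $x_i$. A run of four arises only above $t$, where it includes the extreme bucket containing $x_i$'s key. Hence one or two repairs per level suffice.
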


\begin{proof}
We first consider the temporal invariants. After the search of $x_i$, the important buckets that change status are: every nonextreme bucket containing $x_i$ becomes important (as $x_i$ is now the most recently searched item overall), and
every bucket adjacent to a bucket containing $x_i$'s key, extreme or not,
becomes unimportant.
At levels $\geq \ell_i(x_i)$, $x_i$ was already stored and hence already occupied every bucket containing it, and no adjacent bucket was occupied, so nothing needs to change there.

At levels $[t,\ell_i(x_i))$, the promotion places $x_i$ in every newly important bucket. Above level $t$, every bucket containing $x_i$'s key is extreme (the ancestors of an extreme bucket are extreme) and thus exempt from importance.
The clearing demotions remove items only from buckets that contain, or are
adjacent to, the bucket containing $x_i$'s key at that level. A vacated
bucket adjacent to the bucket containing $x_i$'s key is unimportant after
the search of $x_i$, as a sibling now holds the most recently searched item,
and unimportant buckets may be vacant. A vacated bucket that itself contains
$x_i$'s key can occur only at a level in $[t,\ell_i(x_i))$ because above $t$ this
bucket is extreme and hence vacant, and at levels $\geq \ell_i(x_i)$ it is
occupied by $x_i$ itself. Such a bucket is important after the search, and
the subsequent promotion phase fills it with $x_i$, its most recently
searched item, restoring the invariant.

For the structural invariants, the clearing demotions ensure that when $x_i$ is promoted into a bucket, that bucket is otherwise unoccupied and its neighbors are vacant, so occupancy-uniqueness and the no-adjacent-occupied invariant hold at every level $x_i$ is promoted to. Demotions and gap-repair promotions never place two items in one bucket or occupy a bucket adjacent to an occupied one (the repaired bucket's two neighbors are vacant by construction).
Extreme buckets are never promoted, as promotion stops below them. Gap repair fills only a vacant bucket, both of whose neighbors are vacant,
which is an extreme bucket, having a single neighbor.

Finally, since the invariants held before the search, after the demotions and
the promotion of $x_i$, every maximal run of vacant buckets has length at most
four. Moreover, a run of length four can arise only at a level above $t$, where it
includes the extreme bucket containing the key of $x_i$.
Each such run is eliminated
by a gap-repair promotion into a vacant bucket, both of whose neighbors are
vacant. At most two such promotions are needed per level.
Each repaired bucket is unimportant after the search of $x_i$, so any
occupied child's occupant may be promoted into it. First, it does not
contain $x_i$'s key: at levels at least $t$ that bucket is occupied by $x_i$.
By the time gap repair runs, and repair fills only vacant buckets, while
above level $t$, it is extreme, and extreme buckets are never repaired. If a
sibling of the repaired bucket contains $x_i$'s key, the repaired bucket is
unimportant, as that sibling holds the most recently searched item.
Otherwise, its importance status is unchanged by the search, and before the
search, it was vacant, a bucket vacated by a demotion contains $x_i$'s key
or is adjacent to it, cases already excluded. Hence, unimportant, as
important buckets must be occupied.
\end{proof}

\begin{lemma} \label{l:skipcost}
In the skip list model, searching an item at level $\ell$ and restoring the invariants of the SkipPopList takes time $O(\ell)$.
\end{lemma}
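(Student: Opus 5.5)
The plan is to charge every unit-cost operation to a level at or below $\ell=\ell_i(x_i)$, and to show that each such level receives only $O(1)$ operations. The search itself is standard skip-list search. The pointer starts at the smallest item of level 1 and moves right while the successor is at most $x_i$, then drops down a level. By the invariants, at every level a bucket has at most one occupant, and among any three consecutive buckets at least one is occupied. Each level-$\ell'$ bucket has three children at level $\ell'+1$. So between consecutive level-$\ell'$ stored items (which lie within a window of $O(1)$ buckets) there are only $O(1)$ level-$(\ell'+1)$ stored items. Hence the horizontal scan at each level costs $O(1)$. This holds at level 1 too, since there are only three buckets there. We reach $x_i$ at level $\ell$ after $O(\ell)$ steps. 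On the way down, the pointer passes within $O(1)$ positions of every item that occupies a bucket containing or adjacent to $x_i$'s bucket at each level $<\ell$.

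Next I bound the three repair phases.

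\textbf{Clearing demotions.} An item $y$ is demoted only at levels in $[t,\ell)$ where it conflicts. At each such level, at most three buckets are involved: $x_i$'s bucket and its two neighbours. So the total number of (item, level) demotion pairs is $O(\ell)$. The order matters for efficiency. I will perform these demotions during the descent itself, at the moment the pointer is at level $\ell'$ and adjacent to such a $y$. The pointer moves $O(1)$ steps sideways, demotes $y$ down one level, and returns, which costs $O(1)$ per level. Because the conflict range of each $y$ is contiguous and ends before $\ell$, demoting one level per level of descent removes $y$ exactly from its conflicting levels.

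\textbf{Promotion.} Once $x_i$ is found at level $\ell$, the pointer promotes it up to level $t$. This takes $\ell-t\le \ell$ unit-cost operations.

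\textbf{Gap repair.} Lemma~\ref{l:maintain} shows that at most two gap-repair promotions are needed per level. They are needed only at levels where a demotion happened or at the run containing the extreme bucket, and these are all levels $\le \ell$. Each repair promotes an occupant of a child bucket. That occupant lies within $O(1)$ positions of $x_i$'s tower at the level below. So each repair costs $O(1)$ pointer moves, provided the pointer is nearby when it is performed.

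The main obstacle is scheduling the pointer so that it is always near the required positions. The repairs must run bottom-up, yet the pointer finishes the descent at the bottom of the affected range. My remedy is to interleave the repairs with the upward promotion of $x_i$. At each level $\ell'$, going from $\ell$ up to $1$, the pointer does three things: it fixes any run of three vacant buckets at level $\ell'$ by walking $O(1)$ steps to a child occupant at level $\ell'+1$, it promotes that occupant, and it walks back to $x_i$'s tower. It then promotes $x_i$ (when $\ell'>t$) or simply moves up. This order is bottom-up, and each level costs $O(1)$. I also need to check that a repair at level $\ell'$ cannot create a new violation at level $\ell'+1$. It cannot, because promoting an item does not remove it from lower levels. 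Summing $O(1)$ over the $O(\ell)$ levels gives $O(\ell)$ in total.
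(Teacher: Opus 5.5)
Your proposal is correct and follows essentially the paper's argument: a constant number of search steps per level, then at most a constant number of clearing demotions, one promotion of $x_i$, and at most two gap repairs per level, all within $O(1)$ buckets of $x_i$'s bucket, so $O(1)$ pointer movement per level and $O(\ell)$ in total. Your explicit schedule (clearing during the descent, gap repairs interleaved bottom-up with the ascent of $x_i$) is a more concrete version of the paper's terse routing claim and produces the same final structure, since a repair at level $\ell'$ sees that level in the same state as in the paper's phase order; the one slip is your claim that demotions occur only in $[t,\ell)$, because an item occupying the bucket next to the extreme bucket that contains $x_i$'s key at level $t-1$ is also cleared there, but your descent passes that level anyway, so the bound is unaffected.
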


\begin{proof}For the search, consecutive occupied buckets on any level are at most three
buckets apart. So two consecutive items at level $\ell'$ lie in a span of at
most four consecutive buckets of level $\ell'$, and the key range between
them is thus contained in at most twelve buckets of level $\ell'+1$, of
which at most six are occupied, as no two adjacent buckets are occupied.
Descending one level, therefore, requires at most a constant number of
predecessor/successor moves, and reaching an item at level $\ell$ takes
$O(\ell)$ time.

For the restoration, we bound the total number of unit-cost demotions and promotions. At each level, at most three buckets, the bucket containing $x_i$'s key and its two neighbors can contain an occupant that must be demoted, and each gap repair performs one promotion adjacent to a demotion site. The promotion of $x_i$ contributes one operation per level. Hence, the total number of structural operations is $O(\ell)$. Every operation site is within a constant number of buckets of the bucket containing $x_i$'s key at its level, so moving the pointer between consecutive sites costs $O(1)$ horizontal moves plus vertical moves. Since demotions themselves move the pointer down one level and promotions up one level, and the sites are processed in a single top-down pass per item, the total vertical movement is $O(\ell)$.
\end{proof}

\begin{lemma}{\bf Weak invariants during restoration. }\label{l:weakinv}
Consider the execution of the restoration method for the search of $x_i$, performed in the order: clearing demotions, then promotion of $x_i$, then gap repair. At every point during this execution (i.e., after every unit-cost operation) and at every level $\ell < \ell_{\max}$:
\begin{enumerate}
    \item no two adjacent buckets are occupied, and
    \item at most seven consecutive buckets are vacant.
\end{enumerate}
\end{lemma}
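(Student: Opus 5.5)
We argue by tracking each level separately through the three phases, using that before the search all invariants hold (at most two consecutive vacant buckets, no two adjacent occupied). Levels are affected only at a bounded number of buckets near the bucket containing $x_i$'s key, which we call the \emph{focal} bucket of that level. Level $\ell_{\max}$ is excluded, and in any case never changes by Lemma~\ref{l:bottom}.

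\textbf{Property 1 (no adjacent occupied).} Demotions only vacate buckets, so they cannot create adjacent occupied buckets. A demotion of $y$ from level $\ell$ adds $y$ to nothing new at level $\ell$ (it was already stored below), so lower levels are untouched. During the promotion phase, $x_i$ is inserted at level $\ell$ only after all clearing demotions are done. By the clearing rule, the focal bucket and its two neighbours at every level in $[t,\ell_i(x_i))$ are then vacant, so inserting $x_i$ creates no adjacency. A gap-repair promotion fills the middle of three consecutive vacant buckets, so its two neighbours are vacant at that moment. Since each unit operation changes one bucket at one level, property~1 holds after every operation.

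\textbf{Property 2 (at most seven vacant in a row).} Before the search, every maximal vacant run has length at most two, and occupied buckets are isolated. At a fixed level, clearing demotions vacate at most three buckets: the focal bucket and its two neighbours. These three consecutive buckets can contain at most two occupants, either the two neighbours or the focal bucket alone. Vacating them merges the runs on both sides, giving at worst two vacant, plus the three cleared buckets, plus two vacant, which is seven. This count holds regardless of the order in which the demotions happen.

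We must also check that a demotion at level $\ell$ does not vacate buckets elsewhere. Demoting $y$ removes it only from its topmost level, where it occupies a bucket near the focal one. We have to confirm that $y$'s buckets at the levels it leaves are themselves within the focal neighbourhood. This follows from the clearing rule, since $y$ is demoted only while it conflicts, and the conflicting levels form a contiguous range.

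The promotion of $x_i$ only fills buckets, so it never lengthens a vacant run. Gap repair also only fills buckets, so after the demotions every run only shrinks. Thus the worst intermediate state is the post-clearing state, and the bound of seven holds throughout.

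\textbf{Main obstacle.} The delicate step is the counting at a single level in property~2. First, the cleared buckets must be exactly the focal bucket and its neighbours. Second, two different demoted items cannot vacate non-contiguous sites at the same level, which would produce two long runs that a naive count might merge further. I would handle this by noting that at each level at most one item occupies each of the three buckets, and every demotion site lies in that contiguous triple. Hence all vacated buckets at that level form a subset of one window of three, flanked by runs of length at most two that existed before the search.

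Extreme buckets need a separate check. Near the boundary, the window may include the extreme bucket, which was already vacant, so it contributes no new vacancy. The count is then even smaller, six, which is consistent with the four-run remark in Lemma~\ref{l:maintain}.
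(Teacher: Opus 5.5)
Your proposal is correct and follows essentially the same route as the paper. Every demotion at level $\ell$ vacates a bucket in the window of three buckets around the bucket containing $x_i$'s key, so occupied buckets outside that window stay occupied and any vacant run has length at most $2+3+2=7$; meanwhile both kinds of promotion fill a bucket whose neighbours are already vacant, and your monotonicity phrasing (vacancies only grow during clearing and only shrink afterwards) is just a repackaging of the paper's observation that the flanking occupied buckets persist throughout.
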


\begin{proof}
For a level $\ell$, let $B_\ell$ denote the bucket at level $\ell$ containing the key $x_i$, and let $W_\ell$ consist of $B_\ell$ together with its (at most two) sibling buckets, a window of at most three consecutive buckets. We first record the key structural fact: \emph{every demotion performed at level $\ell$ removes the occupant of a bucket in $W_\ell$.} Indeed, an item $y$ is demoted from level $\ell$ only during its top-down clearing cascade, and at every level at which $y$ is demoted, its bucket contains, or is adjacent to, the bucket containing the key $x_i$, which lies in $W_\ell$. Consequently, any bucket outside $W_\ell$ that was occupied before the search remains occupied throughout the execution, as demotions are the only operations that vacate buckets.

\emph{Claim 1.} Occupancy is created only by promotions, of which there are two kinds. When $x_i$ is promoted into $B_\ell$, the clearing phase has already vacated both siblings of $B_\ell$ (any occupant of a sibling of $B_\ell$ at a level into which $x_i$ is promoted is, by definition, cleared), and no promotion refills them before the gap-repair phase. Hence, at the moment of promotion, both neighbors of $B_\ell$ are vacant. A gap-repair promotion fills the middle bucket of three consecutive vacant buckets, so both neighbors of the filled bucket are vacant at that moment. (Note that an extreme bucket, having no neighbor on one side, is never the middle of three buckets and is thus never filled.) Demotions cannot create a pair of adjacent occupied buckets. Since no two adjacent buckets are occupied before the search, the property is preserved by every operation.

\emph{Claim 2.} Fix a level $\ell$ and any moment during the execution, consider a maximal run of consecutive vacant buckets at level $\ell$ at that moment. By the invariants holding before the search, every maximal run of vacant buckets was then of length at most two. By the structural fact above, every bucket outside $W_\ell$ that bounds such a run remains occupied throughout. Hence, the run under consideration is contained in the union of at most two vacant buckets immediately to the left of $W_\ell$, the window $W_\ell$ itself, and at most two vacant buckets immediately to the right of $W_\ell$, which gives at most $2+3+2=7$ buckets in total.
\end{proof}
\section{Working Set}\label{s:ws}

Let $X = \langle x_1, x_2, \dots, x_m\rangle $ be a search sequence and let $f: [n] \to [n]$ be a uniformly random permutation. Let $X' = f(X) = \langle x'_1, x'_2, \dots, x'_m \rangle$, where $x'_i = f(x_i)$. For any $i \in \nf$ we define, analogously to $\mrs_i(r)$, a version that uses the transformed search sequence
\[
\mrs'_i(r) \coloneqq \max\{j \in [i-1] \mid x'_j \in [x'_i - r, x'_i + r]\}.
\]
As $f$ is a bijection, an item is searched for the first time at the same positions in $X$ and $X'$. So the set $\nf$ is the same for both sequences and $\mrs'_i(r)$ is well-defined for $i \in \nf$ since the previous search of $x_i$ lies in every window.

For $i \in \nf$, let $W_i \coloneqq \{x_j \mid \mrs_i \le j \le i\}$ be the \emph{working set} of $x_i$ at time $i$, which is the set of distinct items searched from the previous search of $x_i$ through time $i$. Let $w(x_i) \coloneqq |W_i|$ be the \emph{working set number} of $x_i$ at time $i$ (in this section $w$ refers to working set and not weight). Note that $x_i \in W_i$, so $w(x_i) \geq 1$. Like $\nf$, the sets $W_i$ and the numbers $w(x_i)$ depend only on the access pattern and not on the key values, and thus are the same for $X$ and $X'$.

Applying the definition of $OPT$, \eqref{e:opt}, to the permuted sequence $X'$ gives:
\[
OPT(f(X)) = \sum_{i \in \nf}
\left(
\log \min\left\{r \geq 1 \,\middle|\,
x'_{\mrs'_i(\frac{n}{r})} = x'_i\right\}
+
\log
\frac{n}{\min(x'_i,\, n-x'_i+1)}
\right)
\]

\begin{theorem}\label{t:ws}
Let $n \geq 2$ and let $f: [n] \to [n]$ be a uniformly random permutation. Then, for any access sequence $X$ of length $m \geq 2n$,
\[
\mathbb{E}_f[OPT(f(X))] = \Theta\left(m + \sum_{i \in \nf} \log w(x_i)\right).
\]
\end{theorem}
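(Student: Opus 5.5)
We split $OPT(f(X))$ into its two sums and bound the expectation of each term separately, using linearity of expectation over $i\in\nf$.

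\textbf{The second term.} For fixed $i$, $x'_i=f(x_i)$ is uniform on $[n]$. Hence $\mathbb{E}\log\frac{n}{\min(x'_i,n-x'_i+1)} = \frac1n\sum_{k=1}^n \log\frac{n}{\min(k,n-k+1)}$. This is $O(1)$ by Stirling, since $\sum_k \log(n/k)=\log(n^n/n!)=O(n)$. It is also $\Omega(1)$, since each summand is at least $\log\frac{2n}{n+1}$. So the second sum has expectation $\Theta(|\nf|)=\Theta(m)$, using $m-n\le|\nf|\le m$ and $m\ge 2n$. This gives the $m$ part of both bounds, and it absorbs any additive $O(1)$ slack per term below.

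\textbf{The first term.} Fix $i\in\nf$ and let $R_i=\min\{r\ge1 \mid x'_{\mrs'_i(n/r)}=x'_i\}$. The event $R_i> r$ holds exactly when some item $y\in W_i\setminus\{x_i\}$ has $|f(y)-f(x_i)|\le n/r$. Indeed, the window $\mrs'_i(n/r)$ lands on a different item iff some item searched strictly between $\mrs_i$ and $i$ has image within distance $n/r$. Those items are exactly $W_i\setminus\{x_i\}$, which has size $w-1$ with $w=w(x_i)$.

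For the upper bound, condition on $f(x_i)$. Each other image is uniform over the remaining $n-1$ values, of which at most $2n/r$ are within distance $n/r$. A union bound gives $\Pr[R_i>r]\le \min(1,2(w-1)n/(r(n-1)))\le\min(1,4w/r)$. Writing $\mathbb{E}\log R_i=\int_0^\infty \Pr[\log R_i>t]\,dt$, the range $t\le\log(4w)$ contributes at most $\log(4w)$. The tail contributes $\int 4w e^{-t}\,dt=O(1)$ in natural log, and a change of base only costs a constant. Thus $\mathbb{E}\log R_i\le\log w+O(1)$.

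For the lower bound, take $r=w/c$ for a suitable constant $c$, assuming $w\ge 2c$; otherwise $\log w=O(1)$ is absorbed by the $m$ term. Then $n/r=cn/w$. The $w-1$ images of $W_i\setminus\{x_i\}$ are a uniformly random $(w-1)$-subset of $[n]\setminus\{f(x_i)\}$. The neighbourhood $[f(x_i)-cn/w,f(x_i)+cn/w]$ contains at least $cn/w$ other values, since even at the boundary one side is available. The probability that none of the $w-1$ images hits it is at most $(1-\frac{cn/w}{n})^{w-1}\le e^{-c(w-1)/w}\le e^{-c/2}$, a constant below $1$. So $\Pr[R_i> w/c]\ge 1-e^{-c/2}$, and therefore $\mathbb{E}\log R_i\ge (1-e^{-c/2})\log(w/c)=\Omega(\log w)-O(1)$.

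Summing over $i\in\nf$ and combining with the second term gives both directions of $\Theta(m+\sum_{i\in\nf}\log w(x_i))$.

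\textbf{Main obstacle.} The lower bound is the delicate step. It needs the sampling-without-replacement tail estimate, and it needs care with rounding $n/r$ to integers and with windows truncated at $1$ or $n$. Taking $c=2$ together with the crude bound "at least $cn/w$ values on one side" should handle all of these.
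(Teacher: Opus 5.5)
Your proposal is correct and is essentially the paper's argument: the same split into the two sums, the same reduction of the first term to the event that some item of $W_i\setminus\{x_i\}$ lands within distance $n/r$ of $x'_i$, a union bound for the upper bound, and a without-replacement miss estimate for the lower bound; the differences are cosmetic, since you integrate $\mathbb{P}[\log R_i>t]$ continuously where the paper discretizes $r$ at powers of two, and you use the single threshold $r=w/c$ with $\log R_i\ge 0$ where the paper sums the constant probabilities $P_\ell\ge 1-e^{-1}$ over all $\ell\le \ell^*-2$. Two small repairs are needed: the window contains at least $\lfloor cn/w\rfloor\ge cn/(2w)$ other keys rather than $cn/w$, which only weakens $e^{-c/2}$ to $e^{-c/4}$; and because your per-term lower bound has the form $a\log w(x_i)-b$, with $b$ possibly larger than the constant from the second term, the final combination must also use that the first sum is nonnegative (a maximum or convex combination of the two lower bounds, as the paper does) rather than just adding them.
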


\begin{proof}
For $i \in \nf$, let
\[
M_i^f \coloneqq \min\left\{ r \geq 1 \,\middle|\, x'_{\mrs'_i(\frac{n}{r})} = x'_i \right\}.
\]
Note that $M_i^f$ is finite and for $r > n$ the window $[x'_i - \frac{n}{r}, x'_i + \frac{n}{r}]$ has radius less than $1$ and so contains no key other than $x'_i$, hence $\mrs'_i(\frac{n}{r}) = \mrs_i$ and the condition holds. By linearity of expectation,
\[
\mathbb{E}_f[OPT(f(X))] = \sum_{i \in \nf} \mathbb{E}\left[ \log M_i^f \right] + \sum_{i \in \nf} \mathbb{E}\left[ \log \frac{n}{\min(x'_i, n - x'_i + 1)} \right].
\]
We evaluate each expectation separately.

\paragraph{Expectation of the second term:}
As $f$ is a uniformly random permutation, the key $x'_i = f(x_i)$ is uniformly distributed over $[n]$, meaning $\mathbb{P}(x'_i = k) = \frac{1}{n}$ for all $k \in [n]$. So we have
\[
\mathbb{E}\left[ \log \frac{n}{\min(x'_i, n-x'_i+1)} \right] = \frac{1}{n} \sum_{k=1}^{n} \log \frac{n}{\min(k, n-k+1)}
\]
We bound this summation from both sides by positive constants. For the lower bound, every $k \in [n]$ satisfies $\min(k, n-k+1) \leq \frac{n+1}{2}$, so each of the $n$ terms is at least $\log \frac{2n}{n+1} \geq \log \frac{4}{3}$, using $n \geq 2$. For the upper bound, each value $v \in \left[1, \lceil \frac{n}{2} \rceil\right]$ occurs as $\min(k, n-k+1)$ for at most two values of $k \in [n]$, so
\[
\frac{1}{n} \sum_{k=1}^{n} \log \frac{n}{\min(k, n-k+1)} \leq \frac{2}{n} \sum_{v=1}^{n} \log \frac{n}{v} = \frac{2}{n} \log \frac{n^n}{n!} \leq 2 \log e,
\]
where the last inequality uses $n! \geq \left(\frac{n}{e}\right)^n$. Therefore the expectation is bounded above and below by positive constants:
\[
\mathbb{E}\left[ \log \frac{n}{\min(x'_i, n-x'_i+1)} \right] = \Theta(1)
\]
Summing over $i \in \nf$ yields a cost of $\Theta(|\nf|)$. As each item can be searched for the first time only once, $|\nf| \geq m - n$, and since $m \geq 2n$ we have $\frac{m}{2} \leq |\nf| \leq m$, the cumulative cost of the second term is therefore $\Theta(m)$.

\paragraph{Expectation of the first term:}

Fix $i \in \nf$. If $w(x_i) = 1$, then no item other than $x_i$ is searched between times $\mrs_i$ and $i$, so $x'_{\mrs'_i(\frac{n}{r})} = x'_i$ for every $r \geq 1$. Hence $M_i^f = 1$ and $\mathbb{E}[\log M_i^f] = 0 = \log w(x_i)$. Assume for the remainder that $w(x_i) \geq 2$.

We first reduce the failure event to a geometric one. For any radius $d \geq 0$, the search at time $\mrs_i$ is to $x_i$, whose key $x'_i$ lies in every window $[x'_i - d, x'_i + d]$, so $\mrs'_i(d) \geq \mrs_i$. Furthermore, by the definition of $\mrs_i$, no search at time in $(\mrs_i, i)$ is to $x_i$. Therefore
\[
x'_{\mrs'_i(d)} \neq x'_i \iff \text{some item of } W_i \setminus \{x_i\} \text{ is mapped by } f \text{ into } [x'_i - d, x'_i + d].
\]
In particular, the set $\{r \geq 1 \mid x'_{\mrs'_i(\frac{n}{r})} = x'_i\}$ is upward closed. As $r$ grows the window shrinks, and an item of $W_i \setminus \{x_i\}$ lying inside the smaller window also lies inside the larger one. Hence
\[
2^\ell < M_i^f \implies x'_{\mrs'_i(\frac{n}{2^\ell})} \neq x'_i \implies 2^\ell \leq M_i^f.
\]

Next we discretize $r$ using powers of two. For any real $M \geq 1$, the number of integers $\ell \geq 0$ with $2^\ell < M$ is $\lceil \log_2 M \rceil$, and the number with $2^\ell \leq M$ is $\lfloor \log_2 M \rfloor + 1$. Writing $I$ for an indicator variable and combining with the implications above,
\[
\log M_i^f \;\leq\; \sum_{\ell = 0}^{\infty} I\left[ x'_{\mrs'_i(\frac{n}{2^\ell})} \neq x'_i \right] \;\leq\; \log M_i^f + 1.
\]
The sum is finite, as for $2^\ell > n$ the window has radius less than $1$ and the indicator is $0$. Taking expectations and applying linearity of expectation,
\[
\mathbb{E}\left[ \log M_i^f \right] = \sum_{\ell = 0}^{\infty} P_\ell - \theta_i \text{ for some } \theta_i \in [0,1], \qquad \text{where } P_\ell \coloneqq \mathbb{P}\left[ x'_{\mrs'_i(\frac{n}{2^\ell})} \neq x'_i \right].
\]
By the reduction above, $P_\ell$ is the probability that at least one of the $w(x_i) - 1$ items of $W_i \setminus \{x_i\}$ is mapped within distance $\frac{n}{2^\ell}$ of $x'_i$. Conditioned on the value of $x'_i$, these items are assigned distinct keys uniformly at random, without replacement, from the $n - 1$ keys other than $x'_i$.

We now divide the summation at the threshold $\ell^* = \lfloor \log_2 w(x_i) \rfloor$. Note $\ell^* \geq 1$ as $w(x_i) \geq 2$:

$$
\sum_{\ell=0}^\infty P_\ell = \sum_{\ell=0}^{\ell^*} P_\ell + \sum_{\ell = \ell^* + 1}^\infty P_\ell
$$

We bound each of the summations:
\begin{enumerate}
    \item When $\ell \le \ell^*$: We claim $\sum_{\ell=0}^{\ell^*} P_\ell = \Theta(\log w(x_i))$. The upper bound is immediate as $P_\ell \leq 1$ and $\log_2 w(x_i) \geq 1$,
    \[
    \sum_{\ell=0}^{\ell^*} P_\ell \leq \ell^* + 1 \leq \log_2 w(x_i) + 1 \leq 2 \log_2 w(x_i).
    \]
    For the lower bound, first observe that $P_0 = 1$, the window of radius $n$ contains all of $[n]$, and $W_i \setminus \{x_i\}$ is nonempty. Now suppose $\ell \leq \ell^* - 2$, so that $2^{\ell+2} \leq w(x_i)$. The window $[x'_i - \frac{n}{2^\ell}, x'_i + \frac{n}{2^\ell}] \cap [n]$ contains at least $\min(\lfloor n/2^\ell \rfloor, n-1) \geq \frac{n}{2^{\ell+1}}$ keys other than $x'_i$ (note $2^\ell \leq w(x_i) \leq n$), so a single item of $W_i \setminus \{x_i\}$ lands in the window with probability at least $\frac{n/2^{\ell+1}}{n-1} \geq \frac{1}{2^{\ell+1}}$. As the items are placed without replacement, conditioning on earlier items missing the window only increases the fraction of remaining keys that lie inside it. The probability that all $w(x_i) - 1$ items miss is therefore at most the product of the individual miss probabilities:
    \[
    \left(1 - \frac{1}{2^{\ell+1}}\right)^{w(x_i)-1} \leq e^{-\frac{w(x_i)-1}{2^{\ell+1}}} \leq e^{-1},
    \]
    where the last inequality uses $w(x_i) - 1 \geq \frac{w(x_i)}{2} \geq 2^{\ell+1}$. Hence $P_\ell \geq 1 - e^{-1}$ for all $\ell \leq \ell^* - 2$, and
    \[
    \sum_{\ell=0}^{\ell^*} P_\ell \geq \max\left\{1,\ (1 - e^{-1})(\ell^* - 1)\right\} \geq \frac{1 - e^{-1}}{3} \log_2 w(x_i),
    \]
    where the last inequality uses $\log_2 w(x_i) \leq \ell^* + 1 \leq 3 \max\{1, \ell^* - 1\}$.

    \item When $\ell > \ell^*$: The window contains at most $\frac{2n}{2^\ell}$ keys other than $x'_i$, so a single item of $W_i \setminus \{x_i\}$ lands in the window with probability at most $\frac{2n/2^\ell}{n-1} \leq \frac{4}{2^\ell}$ (using $n \geq 2$, which follows from $w(x_i) \geq 2$). By the union bound,
    \[
    P_\ell \leq \frac{4(w(x_i) - 1)}{2^\ell},
    \]
    and summing the geometric series,
    \[
    \sum_{\ell = \ell^* + 1}^\infty P_\ell \leq \sum_{\ell = \ell^* + 1}^\infty \frac{4 w(x_i)}{2^\ell} = \frac{4 w(x_i)}{2^{\ell^*}} < 8 = O(1),
    \]
    using $w(x_i) < 2^{\ell^* + 1}$.
\end{enumerate}

Combining both ranges with the discretization error $\theta_i \in [0,1]$, we have, for $w(x_i) \geq 2$,
\[
\mathbb{E}\left[ \log M_i^f \right] = \Theta(\log w(x_i)) + O(1),
\]
and since $\log_2 w(x_i) \geq 1$, there are positive constants $c_1, C, C'$, independent of $i$, such that
\[
c_1 \log w(x_i) - C \;\leq\; \mathbb{E}\left[ \log M_i^f \right] \;\leq\; C' \log w(x_i).
\]
These bounds also hold when $w(x_i) = 1$, as then $\mathbb{E}[\log M_i^f] = 0 = \log w(x_i)$.

\paragraph{Combining the two terms:}
Let $S \coloneqq \sum_{i \in \nf} \log w(x_i)$. The upper bound is immediate:
\[
\mathbb{E}_f[OPT(f(X))] = \sum_{i \in \nf} \mathbb{E}\left[ \log M_i^f \right] + \Theta(m) \leq C' S + O(m) = O\left(m + S\right).
\]
For the lower bound, both terms of \eqref{e:opt} are nonnegative. The first by the restriction $r \geq 1$ and the second as $\min(x'_i, n - x'_i + 1) \leq n$, so $\mathbb{E}_f[OPT(f(X))]$ is at least each of the two sums separately. The second sum gives, for some constant $c_2 > 0$,
\[
\mathbb{E}_f[OPT(f(X))] \geq c_2 |\nf| \geq \frac{c_2}{2} m,
\]
and the first sum gives
\[
\mathbb{E}_f[OPT(f(X))] \geq \sum_{i \in \nf} \mathbb{E}\left[ \log M_i^f \right] \geq c_1 S - C |\nf| \geq c_1 S - C m.
\]
If $c_1 S \leq 2 C m$, then $S = O(m)$ and the first bound alone is $\Omega(m + S)$. Otherwise, $c_1 S - C m \geq \frac{c_1}{2} S$, and the maximum of the two bounds is at least their average, which is $\Omega(m + S)$. In either case,
\[
\mathbb{E}_f[OPT(f(X))] = \Omega\left(m + \sum_{i \in \nf} \log w(x_i)\right),
\]
which together with the upper bound completes the proof.
\end{proof}

\section*{Acknowledgments}

AI tools were used to critique proofs, identify bugs, mistakes, and inconsistencies. Specifically, Claude Fable and Opus were high-quality proofreaders that undoubtedly improved the paper's quality and were particularly attentive to ensuring edge cases were handled correctly.

For the two transformations presented in sections~\ref{s:popbtree} and~\ref{s:poptree}, given the tedious and non-innovative nature of these reductions, AI was used more aggressively and was given the sequence of lemmas in the manner that one would explain them to a colleague to flush out. AI then produced first drafts of the proofs quickly which were then human edited and verified. We suspect that most humans, after looking at Figure~\ref{fig:skippoplist-to-bst}, would intuitively understand the transformation from skip list to tree and find it conceivable that a constant balance factor can be maintained at every step of the way; the two appendices confirm this intuition.

In this way we have separated the paper into the entirely human-written body, proofread by AI, and the AI appendix, proofread and edited by humans. The authors feel such a mixed approach where the human-written novel ideas are separated from working out the easily believable but not so interesting is one possible paradigm as we try to work out what papers should look like in this new age.

The figure was also generated interactively with Claude over many rounds; its tikz skills are much better than the authors'.

\bibliographystyle{alpha}
\bibliography{bib}

\appendix
\section{PopBTree} \label{s:popbtree}

We now describe the way to interpret a SkipPopList as a multiway search
tree. The construction is the standard skip-list to multiway search tree of~\cite{DBLP:conf/ACMse/DeanJ07} but
specialized to our invariants.

\paragraph{Top items.}
We define the \emph{level} of an item $x$, written $\ell(x)$, as the smallest
level of the SkipPopList in which $x$ is stored. Equivalently, it is the
topmost level in the SkipPopList in which $x$ resides. We call the items whose top level is exactly $\ell$,
the \emph{top items of level $\ell$}, and write
$$
\mathit{Top}_\ell \;=\; \{\, x \in [n] \;:\; \ell(x)=\ell \,\}.
$$
Since every key is stored in
$\ell_{\max}$ and each key has a well-defined
smallest level, the sets $\mathit{Top}_1,\mathit{Top}_2,\dots,\mathit{Top}_{\ell_{\max}}$ partition $[n]$.
Moreover, the items stored at level $\ell$ are exactly
$\bigcup_{\ell' \le \ell} \mathit{Top}_{\ell'}$

\paragraph{Sentinels and virtual buckets.}
Throughout, we use two \emph{sentinels}, $0$ and $n+1$, whose top level we
define to be $0$. They are not stored in the SkipPopList and occupy no real
bucket. They are a notational device, made precise as follows. Extend each
level $\ell$ by two \emph{virtual buckets}: a bucket $-1$, permanently
occupied by the sentinel $0$, and a bucket $3^\ell$, permanently
occupied by the sentinel $n+1$. These buckets contain no keys.

Virtual bucket $-1$ of level $\ell$ is the rightmost child of virtual bucket $-1$ of level $\ell -1$, consistent with the indexing $3\cdot(-1)+2=-1$, and virtual
bucket $3^\ell$ is the leftmost child of virtual bucket $3^{\ell-1}$, as
$3\cdot 3^{\ell-1}+0=3^\ell$. Under this convention, the boundary invariants of SkipPopList
become instances of the interior. The extreme buckets being vacant says
exactly that no occupied bucket, virtual or real, is adjacent to another at
the boundary, and Invariant~\ref{i:empty} says that between any two
consecutive occupied buckets, sentinels included, there are one or two
vacant buckets. We may therefore treat sentinel endpoints exactly like stored
items in the bucket in the counting arguments below.

\paragraph{The tree $\mathcal{T}$.}
The tree $\mathcal{T}$, which we call the \emph{PopBTree}, is a rooted multiway
search tree whose keys are the $n$ items of $[n]$. Each node $N$ in the PopBTree is associated
with an open key-interval $(a,b)$, its \emph{gap}. The root's gap is
$(0,n+1)$. If a node has gap $(a',b')$ and keys $x_1<\dots<x_k$, the gaps of
its children are the intervals between consecutive elements of the sequence
$a',x_1,\dots,x_k,b'$. An endpoint of a gap is thus either a key of the parent
or inherited from the parent's own gap. Unwinding the recursion, each of $a$
and $b$ is a key of some proper ancestor of $N$ or a sentinel, and in general
the two endpoints belong to two \emph{different} ancestors. The node $N$ with interval
$(a,b)$ is built as follows:

\begin{itemize}
    \item Among the items lying strictly inside $(a,b)$, let $\ell^\star$ be the
    smallest top level attained, i.e.\ the highest level of the SkipPopList
    reached by any item of $(a,b)$. The \emph{keys} of $N$ are the top items of
    level $\ell^\star$ that lie in $(a,b)$, listed in increasing order
    $x_1 < x_2 < \dots < x_k$. We say $N$ is a node \emph{at level
    $\ell^\star$}.

    \item The keys of $N$ split $(a,b)$ into the $k+1$ subintervals
    $(a,x_1),(x_1,x_2),\dots,(x_{k-1},x_k),(x_k,b)$. Each subinterval that
    contains at least one item is covered by a child node of $N$, defined by
    the same rule. A subinterval containing no items contributes no child.
\end{itemize}

The following basic facts about gaps address a subtlety worth stressing: the
endpoints of a node's gap are in general \emph{not} keys of its parent, so
statements about them must be routed through the ancestors that contribute
them.

\begin{lemma}{\bf Gap endpoints. }\label{l:endpoints}
Let $N$ be a node of $\mathcal{T}$ at level $\ell$ with gap $(a,b)$. Then
\begin{enumerate}
    \item[(i)] $\ell(a)<\ell$ and $\ell(b)<\ell$.
    \item[(ii)] Every item strictly inside $(a,b)$ has top level at least
    $\ell$, and the items of top level exactly $\ell$ inside $(a,b)$ are
    precisely the keys of $N$.
    \item[(iii)] If $N$ is not the root, then the level of $N$ is strictly
    greater than the level of its parent. Consequently, levels increase
    strictly along every root-to-node path, and each endpoint of $(a,b)$ has
    top level at most the level of the parent of $N$.
\end{enumerate}
\end{lemma}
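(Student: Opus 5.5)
The natural approach is structural induction on the construction of $\mathcal{T}$ from the root downward, proving (i), (ii), (iii) simultaneously for each node $N$ with gap $(a,b)$.

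\textbf{Base case.} For the root, the gap is $(0,n+1)$. Both sentinels have top level $0$, and every level is at least $1$, which gives (i). By construction $\ell^\star$ is the minimum top level over all items in $(a,b)$, and the keys are exactly the top items of level $\ell^\star$ in the gap, which is (ii). Part (iii) is vacuous at the root.

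\textbf{Inductive step.} Let $P$ be the parent of $N$, at level $\ell_P$ with gap $(a',b')$ and keys $x_1<\dots<x_k$. Suppose the lemma holds for $P$. The gap $(a,b)$ of $N$ is a subinterval between consecutive elements of $a',x_1,\dots,x_k,b'$, so each endpoint is either a key of $P$ or an endpoint of $P$'s gap.
\begin{itemize}
\item A key of $P$ has top level exactly $\ell_P$.
\item An endpoint of $P$'s gap has top level $<\ell_P$ by (i) for $P$.
\end{itemize}
Hence $\ell(a),\ell(b)\le \ell_P$, which is the last claim of (iii).

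Next, $(a,b)\subseteq(a',b')$, so by (ii) for $P$ every item inside $(a,b)$ has top level at least $\ell_P$. No item in the open interval $(a,b)$ is a key of $P$, and the keys of $P$ are all the top items of level $\ell_P$ inside $(a',b')$. So every item in $(a,b)$ has top level strictly greater than $\ell_P$, and therefore $\ell^\star=\ell>\ell_P$. This is the strict increase in (iii). Combining, $\ell(a),\ell(b)\le\ell_P<\ell$, which is (i).

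Part (ii) for $N$ is then immediate from the definition: $\ell$ is the minimum top level in the gap, and the keys are the top items of level $\ell$ in the gap. Monotonicity along root-to-node paths follows by chaining the strict inequality.

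\textbf{Main obstacle.} The only subtle step is showing that items of $(a,b)$ avoid level $\ell_P$. This needs the "precisely the keys" half of (ii) for the parent, so (ii) must be carried in the induction hypothesis rather than proved separately. The sentinels also need care: they are not items, but they are assigned top level $0$, and they never lie strictly inside any gap.
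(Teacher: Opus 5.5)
Your proposal is correct and matches the paper's proof step for step: you induct from the root, observe that an endpoint of a child's gap is either a parent key (top level $\ell_P$) or a parent-gap endpoint (top level $<\ell_P$), and show the child's items have top level $>\ell_P$ because the parent's top-level-$\ell_P$ items inside its gap are exactly its keys. One small remark: (ii) holds at every node directly from the definition of that node's level and keys, so only (i) actually needs the inductive hypothesis; carrying (ii) along is harmless, but it is not the real obstacle you describe.
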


\begin{proof}
We proceed by induction along the path from the root to $N$. For the root, the endpoints of its gap are the sentinels, whose top level is
$0$. Since every node has positive level, (i) holds. Statement (ii) is exactly
the definition of the root's keys.

Now suppose the claims hold for a node $N'$ of level $\ell'$ with gap
$(a',b')$, and let $N$ be the child corresponding to a gap $(a,b)$ of $N'$. Each endpoint $a$ or $b$ is either a key of $N'$ or an endpoint of
$(a',b')$. In the first case its top level is $\ell'$. In the second case, the
induction hypothesis gives top level less than $\ell'$. Thus $\ell(a),\ell(b)\le \ell'$.

Next consider any item strictly inside $(a,b)$. Since
$(a,b)\subseteq(a',b')$, the induction hypothesis implies that its top level is
at least $\ell'$. Moreover, it cannot have top level exactly $\ell'$, because
such an item would be a key of $N'$, and the keys of $N'$ are not contained
strictly inside one of its gaps. Therefore every item inside $(a,b)$ has top
level at least $\ell'+1$.

By definition, the level of $N$ is the minimum top level among the items inside
$(a,b)$, hence $\ell\ge \ell'+1$, this proves (iii). Combining this with $\ell(a),\ell(b)\le\ell'$ gives
$\ell(a),\ell(b)<\ell$, which is exactly (i). Finally, (ii) follows directly from the definition of the keys of $N$: they
are precisely the items inside $(a,b)$ whose top level equals the minimum top
level $\ell$.
\end{proof}

\paragraph{Why this is a search tree.}
The construction is exactly the branching structure of a $B$-tree. Every node
$N$ covering $(a,b)$ with keys $x_1<\dots<x_k$ has its children living in the
gaps $(a,x_1),\dots,(x_k,b)$ between consecutive keys. The keys of $N$
separate the key-ranges of its children in the usual $B$-tree fashion. A search
descending into $N$ compares the query against $x_1,\dots,x_k$ and follows the
unique gap that contains it. Iterating from the root, every item $x$ is a key of
exactly one node, the node covering the gap in which $x$ attains the highest
level. So $\mathcal{T}$ stores each of the $n$ items exactly once and is a valid
search tree over $[n]$.

Equivalently, and this is the phrasing we will use, the nodes of $\mathcal{T}$
are obtained level by level. At level $\ell$, delete from the items stored at
level $\ell$ all those already stored at level $\ell-1$. The remaining items are exactly $\mathit{Top}_\ell$, and each maximal set of them
that is consecutive among the items stored at level $\ell$ forms one node at level $\ell$.

We now bound the number of keys stored at any node $N$ in the PopBTree using the above construction and the invariants of the SkipPopList.

\begin{lemma}{\bf Node size. }\label{l:nodesize}
Every node of $\mathcal{T}$ has at most $4$ keys and hence at most $5$ children.

\end{lemma}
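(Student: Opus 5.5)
We use the level-by-level description of $\mathcal{T}$. A node $N$ at level $\ell$ is a maximal set of elements of $\mathit{Top}_\ell$ that are consecutive among the items stored at level $\ell$. Equivalently, its keys $x_1<\dots<x_k$ are the consecutive level-$\ell$ items lying strictly between two items $a,b$ stored at level $\ell-1$ (or sentinels) that are consecutive at level $\ell-1$. By Lemma~\ref{l:endpoints}, these $a,b$ are exactly the gap endpoints of $N$. So it suffices to bound how many level-$\ell$ items can lie strictly between two consecutive level-$(\ell-1)$ items, sentinels included. The child count then follows, since a node with $k$ keys has at most $k+1$ gaps.

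\textbf{Step 1: how far apart consecutive items at level $\ell-1$ can be.} Let $a<b$ be consecutive at level $\ell-1$, occupying buckets $B_a$ and $B_b$. With the virtual buckets in place, the sentinels count as occupants of buckets $-1$ and $3^{\ell-1}$. Invariant~\ref{i:empty}, together with the no-adjacency rule, says that between $B_a$ and $B_b$ there are one or two vacant buckets. The keys strictly between $a$ and $b$ therefore lie in the right part of $B_a$, the one or two vacant buckets, and the left part of $B_b$.

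\textbf{Step 2: pass to level $\ell$.} Each level-$(\ell-1)$ bucket has three children at level $\ell$. The item $a$ is stored at level $\ell$, so it occupies the child of $B_a$ that contains it. No adjacent bucket may be occupied, so any level-$\ell$ item in the interval $(a,b)$ lies in a child bucket that is not adjacent to $a$'s child. The same holds on the $b$ side.

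\textbf{Step 3: the worst case.} Take two vacant buckets between $B_a$ and $B_b$, with $a$ in the leftmost child of $B_a$ and $b$ in the rightmost child of $B_b$. The children strictly between $a$'s bucket and $b$'s bucket number $2+6+2=10$. Occupied buckets among them must be pairwise nonadjacent and also nonadjacent to the two endpoint buckets, so at most $\lceil 8/2\rceil = 4$ of them can be occupied. This gives $k\le 4$.

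\textbf{Step 4: the virtual buckets.} When an endpoint is a sentinel, its level-$\ell$ virtual bucket is the child of the level-$(\ell-1)$ virtual bucket adjacent to the real range. This is consistent with the indexing $3\cdot(-1)+2=-1$ and $3\cdot 3^{\ell-1}=3^\ell$. So the sentinel case is the special case where the endpoint sits in the extreme position, which can only shrink the count.

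\textbf{Main obstacle.} The delicate step is Step 3. One has to check carefully that the endpoint items sit in specific children, and handle the sentinel boundary of Step 4 correctly. I would settle the count by an explicit position argument, maximizing over where $a$ and $b$ sit within their parent buckets.
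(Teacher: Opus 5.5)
Your proposal is correct and follows essentially the same route as the paper. Both arguments place the gap endpoints in consecutive occupied buckets at level $\ell-1$ with $q-p\in\{2,3\}$, pass to child buckets to get $\beta-\alpha\le 3\cdot 3+2=11$, discard the two buckets adjacent to the endpoints, and count at most $\lceil 8/2\rceil=4$ pairwise nonadjacent occupied buckets.

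The one small difference is the root. The paper handles it separately: level $1$ has a single non-extreme bucket, so the root has exactly one key. Your Step~1 applies the spacing invariant at level $\ell-1=0$, which is not a real SkipPopList level. You should either check the root directly as the paper does, or state explicitly that level $0$ is read as the two sentinels flanking one vacant real bucket.
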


\begin{proof}

Let $N$ be a node at level $\ell$ covering the gap $(a,b)$. If $N$ is the root, then $\ell=1$. The only non-extreme bucket at level $1$ is
the middle bucket, so the root has exactly one key. Thus the claim holds.

Now suppose $N$ is not the root. By Lemma~\ref{l:endpoints}(i), both endpoints
of the gap satisfy $\ell(a),\ell(b)<\ell$. Therefore each endpoint occupies a
(possibly virtual) bucket at level $\ell-1$. If the endpoint is a stored item,
it appears in the bucket containing its key by the subset property. If it is a
sentinel, it occupies its corresponding virtual bucket. These two buckets are consecutive. Indeed, if there were another occupied
bucket between them at level $\ell-1$, it would contain an item whose top level
is at most $\ell-1$. That item would lie strictly inside $(a,b)$, contradicting
Lemma~\ref{l:endpoints}(ii), which states that every item inside $(a,b)$ has top
level at least $\ell$.

Now, consider the occupied buckets at level $\ell$ whose keys lie strictly
between $a$ and $b$. Their stored items have top level exactly $\ell$, since an
item stored at level $\ell-1$ would also occupy a bucket between $a$ and $b$ at
level $\ell-1$, contradicting the fact that the endpoint buckets are
consecutive. By Lemma~\ref{l:endpoints}(ii), these items are exactly the keys of
$N$. It remains to count these buckets.

\emph{Spacing at level $\ell-1$:}
Let $a$ occupy bucket $p$ and $b$ bucket $q$ at level $\ell-1$. As $a,b$ are
consecutive occupied buckets, no two adjacent buckets are occupied and at most
two consecutive buckets are vacant (Invariant~\ref{i:empty}), so the number of
vacant buckets between them is $1$ or $2$ which gives that $q-p\in\{2,3\}$.

\emph{Passing to level $\ell$:}
Each level-$(\ell-1)$ bucket has three children at level $\ell$. Indexing the buckets
from $0$, bucket $p$ has children $3p,3p+1,3p+2$ and bucket $q$ has children
$3q,3q+1,3q+2$. Item $a$ occupies the child of $p$ that contains its key, some
bucket $\alpha=3p+r_a$ with $r_a\in\{0,1,2\}$, and likewise $b$ occupies
$\beta=3q+r_b$. (For a sentinel endpoint this holds by the virtual-bucket
convention, with $r_a=2$ for the sentinel $0$ and $r_b=0$ for the sentinel
$n+1$.) Every level-$\ell$ bucket with index strictly between $\alpha$ and
$\beta$ has all of its keys in $(a,b)$, while buckets $\alpha$ and $\beta$
themselves hold only $a$ and $b$. Hence the keys of $N$ occupy buckets in
$\{\alpha+1,\dots,\beta-1\}$, and
$$
\beta-\alpha \;=\; 3(q-p)+(r_b-r_a)\;\le\; 3\cdot 3 + 2 \;=\; 11 .
$$

\emph{Counting occupied buckets:}
Bucket $\alpha$ is occupied by $a$, so by the no-two-adjacent-occupied invariant,
bucket $\alpha+1$ is vacant. Symmetrically $\beta-1$ is vacant. Every key of $N$
therefore occupies a bucket in $[\alpha+2,\beta-2]$, a run of
$\beta-\alpha-3\le 8$ consecutive buckets in which no two adjacent buckets may be
occupied. This gives at most $\lceil 8/2\rceil = 4$ occupied buckets, so
$N$ has at most $4$ keys and, one child per non-empty gap between consecutive
keys giving, at most $5$ children.

\end{proof}

Next we define the \emph{depth} of a node in $\mathcal{T}$ to be the number of nodes on the
path from the root to that node, inclusive.
Thus the root has depth~$1$.

\begin{lemma}{\bf Depth. }\label{l:depth}
An item whose top level in the SkipPopList is $\ell$ is a key of a node at depth
$\ell$ in $\mathcal{T}$. Equivalently, locating such an item from the root visits
exactly $\ell$ nodes.
\end{lemma}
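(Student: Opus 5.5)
We must show that an item with top level $\ell$ is a key of a node at depth exactly $\ell$. The plan is to show that along every root-to-node path, node levels go up by exactly one at each step. Since the root is at level $1$ and has depth $1$, a node at level $\ell$ then has depth $\ell$. Every item $x$ is a key of exactly one node, the one whose gap contains $x$ and in which $x$ attains the minimum top level. By Lemma~\ref{l:endpoints}(ii) that node is at level $\ell(x)$, so its depth is $\ell(x)$. Descending from the root to this node visits exactly the nodes on the path, which gives the ``equivalently'' clause.

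First, the root is at level $1$. The middle bucket of level $1$ is the only non-extreme bucket there. It is important as soon as any item has been searched, and in any case Invariant~\ref{i:empty} forbids three consecutive vacant buckets, so it is occupied. Hence some item has top level $1$, and the minimum top level over $(0,n+1)$ is $1$. Lemma~\ref{l:endpoints}(iii) already gives that the level strictly increases from parent to child.

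It remains to show the increase is at most one. Let $N'$ be a node at level $\ell'$ with keys $x_1<\dots<x_k$ and gap $(a',b')$. Let $N$ be the child for a subinterval $(c,d)$ between consecutive elements of $a',x_1,\dots,x_k,b'$. Both $c$ and $d$ have top level at most $\ell'$, by Lemma~\ref{l:endpoints}(i) for $a',b'$ and by definition for the keys. So each occupies a (possibly virtual) bucket at level $\ell'$, say $p<q$.

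These buckets are consecutive among occupied buckets at level $\ell'$. The argument is the one in Lemma~\ref{l:nodesize}: any occupied bucket strictly between them would hold an item strictly inside $(c,d)\subseteq(a',b')$ with top level at most $\ell'$. By Lemma~\ref{l:endpoints}(ii) that item would be a key of $N'$, but no key of $N'$ lies strictly inside one of its own child gaps. Hence $p+1$ is a vacant bucket at level $\ell'$, with all its keys in $(c,d)$. If $\ell'<\ell_{\max}$, Lemma~\ref{l:below} says some child of bucket $p+1$ at level $\ell'+1$ is occupied by an item whose topmost level is $\ell'+1$, and its key lies in $(c,d)$. So the minimum top level in $(c,d)$ is at most $\ell'+1$, and therefore exactly $\ell'+1$. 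If $\ell'=\ell_{\max}$, every item is stored there, so no item lies strictly between consecutive occupied buckets; the subinterval is empty and there is no child.

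The expected obstacle is making sure the vacant bucket $p+1$ is a real bucket whose keys lie strictly inside $(c,d)$ when $c$ or $d$ is a sentinel. The virtual-bucket convention handles this: a virtual bucket is never vacant, so $p+1$ is a genuine bucket strictly between two occupants, and its key range is contained in $(c,d)$. With that checked, induction on depth completes the proof.
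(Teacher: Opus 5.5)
Your proof is correct and follows the paper's argument. Levels increase strictly by Lemma~\ref{l:endpoints}(iii). The vacant real bucket $p+1$, just right of the lower delimiter's bucket, is given by Lemma~\ref{l:below} an occupant of top level $\ell'+1$ inside the gap, so each step increases the level by exactly one, and induction from the root gives depth $\ell$.

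Two small nits. First, the consecutiveness detour is unnecessary: $p+1$ is vacant directly by the no-two-adjacent-occupied invariant, which you need anyway to get $p+1<q$, a step you leave implicit. Second, your remark that the level-$1$ middle bucket is important once any item has been searched is false, since a later search in an extreme bucket makes it unimportant; your fallback via Invariant~\ref{i:empty} does correctly show it is occupied.
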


\begin{proof}
By construction the keys of a node are top items of a common level, and an item
$x$ with top level $\ell$ is a key of exactly one node which lies at
level $\ell$. It therefore suffices to prove that a node at level $\ell$ has depth
$\ell$, which we do by showing that the transform never skips a level, i.e., the parent
of every non-root node at level $\ell$ is a node at level $\ell-1$.

Let $N$ be a non-root node at level $\ell$ covering the gap $(a,b)$, and let
$N'$ be its parent, a node at level $\ell'$. By Lemma~\ref{l:endpoints}(iii),
$\ell\ge\ell'+1$ and $\ell(a),\ell(b)\le\ell'$, it remains to show
$\ell\le\ell'+1$.

Each of $a,b$ occupies a (possibly virtual) bucket at level
$\ell'$, at positions $p<q$ with $q\ge p+2$, since no two adjacent buckets are
occupied. Bucket $p+1$ is therefore vacant at level $\ell'$, being adjacent to
the occupied bucket $p$, and every key of bucket $p+1$ lies strictly between
$a$ and $b$ because it exceeds every key of bucket $p$, hence exceeds $a$, and
precedes every key of bucket $q$, hence precedes $b$ (if $p$ or $q$ is
virtual the corresponding comparison against the sentinel is trivial). As
$\ell'<\ell_{\max}$, Lemma~\ref{l:below} applies to bucket $p+1$ which has an
occupied child at level $\ell'+1$, whose occupant has top level exactly
$\ell'+1$ and key inside bucket $p+1$, hence inside $(a,b)$. So the smallest
top level attained inside $(a,b)$ is at most $\ell'+1$, giving
$\ell\le\ell'+1$. Therefore $\ell=\ell'+1$, i.e.\ $N'$ is at level $\ell-1$.

The claim now follows by induction on $\ell$. The root is the unique node at
level $1$ and has depth $1$. For $\ell\ge 2$, a node at level $\ell$ has a parent
at level $\ell-1$, which has depth $\ell-1$ by induction, so the node has depth
$\ell$. Hence an item of top level $\ell$, being a key of a level-$\ell$ node,
sits at depth $\ell$.
\end{proof}

In Lemma~\ref{l:nodesize} we bounded the degree of the PopBTree. We now show that in the PopBTree all leaves are at the same depth.

\begin{lemma}{\bf Uniform depth. }\label{l:uniform}
Every node of $\mathcal{T}$ at depth less than $\ell_{\max}$ has at least one
child, so every leaf of $\mathcal{T}$ is at depth $\ell_{\max}$.
\end{lemma}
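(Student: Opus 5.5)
By Lemma~\ref{l:depth} (more precisely its proof), a node at level $\ell$ sits at depth $\ell$. So it suffices to show the following: every node $N$ at level $\ell<\ell_{\max}$ has at least one nonempty gap, and that gap's child node is then at level $\ell+1$. The second statement, ``every leaf is at depth $\ell_{\max}$'', will follow once we also note two facts. No node lies at level greater than $\ell_{\max}$, since every item's top level is at most $\ell_{\max}$. And a node at level $\ell_{\max}$ has no children, because every item inside its gap would need top level strictly greater than $\ell_{\max}$ by Lemma~\ref{l:endpoints}(ii)--(iii).

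For the main step, I reuse the bucket argument from the proof of Lemma~\ref{l:depth}. Let $N$ be at level $\ell<\ell_{\max}$ with keys $x_1<\dots<x_k$, where $k\ge1$ by construction. The first key $x_1$ is stored at level $\ell$, so it occupies some bucket $p$ at level $\ell$. By the no-two-adjacent-occupied invariant, bucket $p-1$ is vacant; under the virtual-bucket convention it is a real bucket, since the extreme bucket is vacant and the virtual bucket is occupied by a sentinel. The gap endpoint $a$ occupies a (possibly virtual) bucket at level $\ell$ strictly to the left of $p$, because $\ell(a)<\ell$ by Lemma~\ref{l:endpoints}(i). The question is whether bucket $p-1$ could be $a$'s bucket. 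It cannot, since $a$'s bucket is occupied and $p-1$ is vacant. Hence every key of bucket $p-1$ lies strictly between $a$ and $x_1$, that is, inside the gap $(a,x_1)$.

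Lemma~\ref{l:below} applied to the vacant bucket $p-1$ at level $\ell<\ell_{\max}$ gives an occupied child at level $\ell+1$. Its occupant has top level exactly $\ell+1$ and key in bucket $p-1$, hence in $(a,x_1)$. So the gap $(a,x_1)$ contains an item, and $N$ has a child. That child's level is $\ell+1$: it is at least $\ell+1$ by Lemma~\ref{l:endpoints}(iii) and at most $\ell+1$ by this occupant.

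The main subtlety I expect is the boundary case, where $a$ is the sentinel $0$ and $p$ is bucket $0$ or $1$. If $p=0$, then $x_1$ would be in an extreme bucket, which is forbidden. So $p\ge1$, and bucket $p-1\ge0$ is a real vacant bucket. The remaining work is to check carefully that its keys exceed $a$ in all cases.

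Induction on depth then finishes the argument. The root is at level $1$, and each node at level $\ell<\ell_{\max}$ has a child at level $\ell+1$. Therefore every root-to-leaf path reaches level $\ell_{\max}$, which equals depth $\ell_{\max}$.
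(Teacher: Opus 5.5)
Your proof is correct and is essentially the paper's argument. The bucket immediately left of the leftmost key $x_1$ is real, since $x_1$ avoids the extreme bucket, and vacant. It lies strictly inside the gap $(a,x_1)$, because $a$'s occupied (possibly virtual) bucket is at least two positions further left, and Lemma~\ref{l:below} then supplies an occupant of top level $\ell+1$ in that gap, giving a child of $N$; combined with the level-equals-depth fact from the proof of Lemma~\ref{l:depth}, this yields uniform leaf depth exactly as in the paper.
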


\begin{proof}
Let $N$ be a node at level $\ell<\ell_{\max}$, covering $(a,b)$, and let $x_1$ be
its leftmost key, an item of top level $\ell$, so $x_1$ occupies a bucket at
level $\ell$. To its left within $(a,b)$ there is a vacant bucket. Indeed, by
Lemma~\ref{l:endpoints} and the subset property, $a$ occupies a (possibly
virtual) bucket $p$ at level $\ell$, and since no two adjacent occupied buckets
are permitted, sentinels included, by the virtual-bucket convention, $x_1$'s
bucket has index at least $p+2$.

Let $c$ be the bucket immediately left of
$x_1$, then $c$ is real (its index is at least $p+1\ge 0$), vacant (adjacent
to the occupied bucket of $x_1$), and its keys exceed $a$ and are below $x_1$,
hence lie in $(a,b)$. As $\ell<\ell_{\max}$, Lemma~\ref{l:below}
gives $c$ an occupied child at level $\ell+1$, whose key lies in $(a,b)$ and has
top level $\ell+1$. The node formed in the gap of $N$ containing that key is a
child of $N$. Hence $N$ is not a leaf, and every leaf lies at depth $\ell_{\max}$.
\end{proof}

Together, Lemmas~\ref{l:nodesize} and~\ref{l:uniform} say that $\mathcal{T}$ is a
$B$-tree because every node has between $1$ and $4$ keys and at most $5$ children, and
all root-to-leaf paths have the same length. We use both facts below.

Just as $BB[\alpha]$ tree is a BST in which each child's subtree carries at least an $\alpha$ fraction of the weight of its parent's subtree, we
now give a similar definition for $B$-trees which we call a $BBB[\alpha]$ tree.

Fix $\alpha\in(0,\tfrac12]$. For a node $v$ of a multiway search tree, let its
\emph{weight} $w(v)$ be one plus the number of keys stored in the subtree rooted
at $v$. Equivalently, $w(v)$ is the number of external nodes (empty child slots)
of that subtree. When every child slot of a node is nonempty, its external
nodes are partitioned among the subtrees of its children, and the weights of
the children then sum to the weight of the node. In general (empty slots
being external nodes of the parent itself) the children's weights sum to at
most the weight of the node.
\begin{definition}{\bf $BBB[\alpha]$ tree. }\label{d:bbb}

A \emph{$BBB[\alpha]$ tree} is a $B$-tree in which every
node has a bounded number of children and all leaves lie at the same depth and, for every internal node $v$ and every child $c$ of $v$,
$$
w(c)\;\ge\;\alpha\, w(v).
$$
\end{definition}

\begin{lemma}{\bf Subtree size. }\label{l:subtreesize}
Let $N$ be a node of $\mathcal{T}$ at level $\ell$, and let $s(N)$ be the number of items stored in the subtree rooted at $N$. Then
$$
\frac{n}{3^{\ell-1}} \;\le\; s(N) \;\le\; \frac{4n}{3^{\ell-1}} ,
$$
so $s(N)=\Theta\!\left(n/3^{\ell}\right)$, with the two bounds differing by a factor
of $4$.
\end{lemma}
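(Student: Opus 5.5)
The plan is to identify the set of items stored in the subtree of $N$ with the set of items strictly inside its gap $(a,b)$, and then to bound the number of keys in $(a,b)$ using the bucket spacing at level $\ell-1$. The identification follows from the construction. Every node in the subtree of $N$ covers a subinterval of $(a,b)$. Conversely, every item inside $(a,b)$ is a key of exactly one node, namely the node covering the gap in which that item attains its highest level, and that node is a descendant of $N$ (or $N$ itself). Hence $s(N)$ equals the number of integers strictly between $a$ and $b$.

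For the root, $\ell=1$, the gap is $(0,n+1)$ and $s(N)=n$. Both bounds $n\le n\le 4n$ hold trivially.

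For a non-root $N$, I reuse the first part of the proof of Lemma~\ref{l:nodesize}. There $a$ and $b$ occupy consecutive occupied (possibly virtual) buckets $p<q$ at level $\ell-1$, with $q-p\in\{2,3\}$.

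\emph{Lower bound.} Bucket $p+1$ is a real vacant bucket at level $\ell-1$ whose keys all lie strictly between $a$ and $b$; this is the same argument as in Lemma~\ref{l:depth}. The bucket contains $n/3^{\ell-1}$ keys, which gives $s(N)\ge n/3^{\ell-1}$.

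\emph{Upper bound.} Every key strictly between $a$ and $b$ lies in one of three places:
\begin{itemize}
\item one of the at most two buckets strictly between $p$ and $q$, or
\item bucket $p$ to the right of $a$, or
\item bucket $q$ to the left of $b$.
\end{itemize}
This totals at most four buckets' worth of keys, namely $(q-p+1)\cdot n/3^{\ell-1}\le 4n/3^{\ell-1}$. If $p$ or $q$ is virtual, it contributes no keys, so the bound only improves.

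The one step that needs care is the level $\ell=\ell_{\max}$, where level $\ell-1=\log_3 n$ buckets are single keys. The argument still works there, since $n/3^{\ell-1}=1$ and the counting is exact. The main obstacle, mild as it is, is justifying cleanly that $s(N)$ counts exactly the items in the open gap. Once that is in place, the rest is direct counting of buckets.
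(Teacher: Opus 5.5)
Your proposal is correct and follows the paper's proof essentially step for step. You identify $s(N)$ with the number of integers strictly inside the gap $(a,b)$, place $a$ and $b$ in consecutive occupied (possibly virtual) buckets $p<q$ at level $\ell-1$ with $q-p\in\{2,3\}$, and bound $(a,b)$ from below by the full interior bucket(s) and from above by the at most four buckets $p,\dots,q$; the only cosmetic difference is that you use the single bucket $p+1$ for the lower bound, where the paper counts all $q-p-1\ge 1$ interior buckets.
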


\begin{proof}
The subtree rooted at $N$ stores exactly the data items whose key lies in the
interval $(a,b)$ that $N$ covers. Indeed, every key in $(a,b)$ has top level at
least $\ell$ and is a key of a unique node descended from $N$, while keys outside
$[a,b]$ lie outside $N$'s subtree and $a,b$ themselves are keys of ancestors of
$N$ or sentinels. Hence $s(N)=b-a-1$, the number of integers strictly between
$a$ and $b$. (For
the root, $a=0$ and $b=n+1$, giving $s(N)=n$, consistent with the bounds at
$\ell=1$.)

For a non-root node, $a$ and $b$ occupy consecutive (possibly
virtual) buckets at level $\ell-1$, exactly as in the proof of
Lemma~\ref{l:nodesize}, at positions $p<q$ with $q-p\in\{2,3\}$.
Each real bucket at level $\ell-1$ holds $n/3^{\ell-1}$ consecutive keys while
virtual buckets hold none. The interval
$(a,b)$ contains all of the $q-p-1\ge 1$ full buckets $p+1,\dots,q-1$, we have
$$
s(N)\;\ge\;(q-p-1)\,\frac{n}{3^{\ell-1}}\;\ge\;\frac{n}{3^{\ell-1}},
$$
and $(a,b)$ is contained in the union of the key ranges of at most
$q-p+1\le 4$ buckets $p,\dots,q$ (virtual buckets among these contribute no
keys, so the bound only improves), and we have
$$
s(N)\;\le\;(q-p+1)\,\frac{n}{3^{\ell-1}}\;\le\;\frac{4n}{3^{\ell-1}}.
$$
\end{proof}

We now use the above lemma to show that the PopBTree is $BBB[\alpha]$ balanced.

\begin{lemma}{\bf $\mathcal{T}$ is a $BBB[\alpha]$ tree. }\label{l:bbb}
$\mathcal{T}$ is a $BBB[\tfrac{1}{12}]$ tree.
\end{lemma}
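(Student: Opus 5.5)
The $B$-tree part of Definition~\ref{d:bbb} is already in hand. By Lemma~\ref{l:nodesize} every node has at most $5$ children, and by Lemma~\ref{l:uniform} all leaves lie at depth $\ell_{\max}$. What remains is the weight condition $w(c)\ge \tfrac{1}{12}w(v)$ for every internal node $v$ and every child $c$ of $v$. The plan is to derive it from the two-sided subtree-size estimate of Lemma~\ref{l:subtreesize}, together with the fact (Lemma~\ref{l:depth}, via the no-level-skipping argument in its proof) that a child of a level-$\ell$ node is a level-$(\ell+1)$ node.

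Fix an internal node $v$ at level $\ell$ and a child $c$, which is therefore at level $\ell+1$. Lemma~\ref{l:subtreesize} gives
\[
s(c)\ge \frac{n}{3^{\ell}}, \qquad s(v)\le \frac{4n}{3^{\ell-1}}=\frac{12n}{3^{\ell}},
\]
so $s(c)\ge \tfrac1{12}s(v)$. Passing to weights $w=s+1$ only helps, since
\[
w(c)=s(c)+1\ \ge\ \tfrac1{12}s(v)+1\ \ge\ \tfrac1{12}\bigl(s(v)+1\bigr)=\tfrac1{12}w(v).
\]

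The root needs a separate check, because the lower bound $s(N)\ge n/3^{\ell-1}$ in Lemma~\ref{l:subtreesize} is stated for all nodes, while the upper bound at $\ell=1$ reads $s\le 4n$ and the actual value is $s=n$. This case only improves the estimate, so the ratio bound still holds. I would also confirm that the lower bound for $c$ is legitimate: $c$ is a non-root node, so its gap endpoints occupy consecutive level-$\ell$ buckets with at least one full bucket strictly between them, exactly as in the proof of Lemma~\ref{l:subtreesize}.

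The main obstacle is making sure the child really sits at level exactly $\ell+1$ and not deeper. If a child could sit at level $\ell+2$, its lower bound would drop by a further factor of $3$ and the constant would degrade to $\tfrac1{36}$. This is exactly what the no-skip argument in Lemma~\ref{l:depth} rules out, so I would cite it explicitly. Beyond that, the proof is a direct combination of the two size bounds, and the constant $12=4\cdot 3$ comes from the factor-$4$ slack in Lemma~\ref{l:subtreesize} times the factor-$3$ branching between consecutive levels.
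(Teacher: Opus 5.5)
Your proposal is correct and takes essentially the same approach as the paper. The paper also gets the $B$-tree structure from Lemmas~\ref{l:nodesize} and~\ref{l:uniform}, places every child at level exactly $\ell+1$ via the no-skip argument of Lemma~\ref{l:depth}, and compares the lower bound of Lemma~\ref{l:subtreesize} for the child with its upper bound for the parent to get the factor $12$. Your separate root check is unnecessary, since Lemma~\ref{l:subtreesize} already holds at $\ell=1$, but it does no harm.
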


\begin{proof}
By Lemma~\ref{l:nodesize} every node has at most $5$ children, and by
Lemma~\ref{l:uniform} all leaves lie at the common depth $\ell_{\max}$, so
$\mathcal{T}$ is a $B$-tree. It remains to check the weight condition of
Definition~\ref{d:bbb} with $\alpha=\tfrac1{12}$.

Let $N$ be an internal node at level $\ell$ and let $c$ be a child of $N$. Since
no level is skipped (Lemma~\ref{l:depth}), $c$ is at level $\ell+1$. Applying
Lemma~\ref{l:subtreesize}, the lower bound to $s(c)$ and the upper bound to
$s(N)$, gives
$$
w(c)=s(c)+1\ \ge\ \frac{n}{3^{\ell}}+1 ,
\qquad
w(N)=s(N)+1\ \le\ \frac{4n}{3^{\ell-1}}+1=\frac{12n}{3^{\ell}}+1 .
$$
Writing $x=n/3^{\ell}>0$,
$$
\frac{w(c)}{w(N)}\ \ge\ \frac{x+1}{12x+1}\ \ge\ \frac{1}{12},
$$
the last step because $12(x+1)=12x+12\ge 12x+1$. Thus $w(c)\ge\tfrac1{12}\,w(N)$
for every node $N$ and every child $c$, and $\mathcal{T}$ satisfies
Definition~\ref{d:bbb} with $\alpha=\tfrac1{12}$.
\end{proof}

\paragraph{Weak bounds during restoration.}
The lemmas so far describe $\mathcal{T}$ when all SkipPopList invariants
hold between searches. During the restoration that follows a search,
the structural invariants are transiently violated, but by
Lemma~\ref{l:weakinv}, after every unit-cost operation and at every level
$\ell<\ell_{\max}$, no two adjacent buckets are occupied, and at most seven
consecutive buckets are vacant, and level $\ell_{\max}$ never changes at all
(Lemma~\ref{l:bottom}). Moreover, the extreme buckets remain vacant at every
moment because demotions only vacate buckets, the promotion of $x_i$ stops below the
extreme buckets by definition, and a gap repair fills only the middle of
three consecutive vacant buckets. The virtual-bucket convention therefore remains in
force. Since the construction of $\mathcal{T}$ and Lemma~\ref{l:endpoints}
use only the subset property, $\mathcal{T}$ is well defined at every instant.

\begin{lemma}{\bf Weak structural bounds. }\label{l:weakstruct}
At every moment during the execution of the restoration method, that is,
after every unit-cost operation, the tree $\mathcal{T}$ read off from the
current SkipPopList contents satisfies:
\begin{enumerate}
    \item Every level other than level $1$ stores at least one item, so the
    root of $\mathcal{T}$ is at level $1$ or $2$.
    \item Every non-root node has at most $12$ keys, the root has at most
    $4$ keys, and every node has at most $13$ children.
    \item The level of a child exceeds the level of its parent by at most two.
    \item Every node $N$ at level $\ell$ has subtree size
    $n/3^{\ell-1} \le s(N) \le 9n/3^{\ell-1}$.
    \item Every gap of every node at level $\ell<\ell_{\max}$ contains at
    least one key, so every non-leaf node has exactly one more child than it
    has keys.
    \item For every node $N$ and every child $c$ of $N$,
    $w(c)\ge\tfrac{1}{81}\,w(N)$.
\end{enumerate}
\end{lemma}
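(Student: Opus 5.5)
The plan is to redo the arguments of Lemmas~\ref{l:endpoints}, \ref{l:nodesize}, \ref{l:depth}, \ref{l:uniform}, \ref{l:subtreesize} and \ref{l:bbb} under the weakened hypotheses of Lemma~\ref{l:weakinv}. At every level $\ell<\ell_{\max}$, at every instant, we have: no two adjacent occupied buckets; at most seven consecutive vacant buckets; extreme buckets vacant, so the virtual-bucket convention still holds; and level $\ell_{\max}$ unchanged (Lemma~\ref{l:bottom}). Lemma~\ref{l:endpoints} uses only the subset property, so it remains valid verbatim. The central quantity is again the spacing $q-p$ between consecutive occupied (possibly virtual) buckets at a level, which is now in $\{2,\dots,8\}$ instead of $\{2,3\}$.

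\emph{Items 1 and 3.} For item~1, each level $\ell\ge 2$ has $3^\ell\ge 9>7$ real buckets, so some bucket is occupied. Level~1 may be empty, and then the root is at level~2. For item~3, take a child $N$ of $N'$ at level $\ell'$, with gap endpoints at buckets $p<q$ at level $\ell'$. Bucket $p+1$ is vacant and its keys lie in $(a,b)$. Lemma~\ref{l:below} is not available, since it relied on Invariant~\ref{i:empty}. Instead I argue directly: the $9$ grandchildren of bucket $p+1$ at level $\ell'+2$ are consecutive real buckets, so by the seven-vacancy bound one of them is occupied (if $\ell'+2\le\ell_{\max}$; at $\ell_{\max}$ use Lemma~\ref{l:bottom}). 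Its occupant lies in $(a,b)$, so its top level is at most $\ell'+2$, which gives the jump of at most two. The same argument replaces Lemma~\ref{l:uniform}, by applying it to the vacant bucket left of the first key, or to any bucket in a gap. This gives item~5: every gap between consecutive keys contains a vacant level-$\ell$ bucket, since adjacent occupied buckets are forbidden, and that bucket's descendants contain an item.

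\emph{Items 2 and 4.} For a non-root $N$ at level $\ell$, its endpoints occupy consecutive buckets at some level $\ell''<\ell$. The cleanest choice is the level of the parent. This is where I expect the main obstacle: with a jump of two, I must make sure the endpoints really are consecutive at a level I can control, and I must count carefully. I will take the endpoints at level $\ell-1$, which works because both have top level $<\ell$ and nothing in $(a,b)$ has top level $<\ell$. Then $q-p\le 8$. Passing to level $\ell$, the keys lie in a window of length $\beta-\alpha-3\le 3\cdot 8+2-3=23$ buckets with no two adjacent occupied, giving at most $12$ keys and $13$ children. For the root at level~1 there are $3$ buckets, while at level~2 the extremes are vacant, leaving $7$ buckets and hence at most $4$ keys. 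For item~4, $s(N)=b-a-1$ lies between $(q-p-1)$ and $(q-p+1)\le 9$ full level-$(\ell-1)$ buckets, which gives the stated bounds. The root case needs $n\le s\le 9n$, which holds trivially.

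\emph{Item 6.} If the child $c$ is at level $\ell+1$ or $\ell+2$, then $w(c)\ge n/3^{\ell+1}+1$ and $w(N)\le 9n/3^{\ell-1}+1=81n/3^{\ell+1}+1$. This gives the ratio $\ge \frac{x+1}{81x+1}\ge\frac1{81}$, exactly as in Lemma~\ref{l:bbb}.
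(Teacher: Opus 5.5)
Your proposal is correct and follows the paper's proof essentially step for step. The shared steps are the $3^\ell\ge 9>7$ count for item 1, reading the gap endpoints at level $\ell-1$ rather than at the parent's level to get $q-p\le 8$ for items 2 and 4, and the same $\frac{1}{81}$ computation for item 6. The differences are cosmetic. For item 3 you use the nine level-$(\ell'+2)$ grandchildren of a single vacant bucket, whereas the paper first checks the at least three buckets at level $\ell'+1$. For item 5 the detour through descendants is unnecessary, since a real bucket at level $\ell\le\log_3 n$ already contains $n/3^\ell\ge 1$ keys.
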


\begin{proof}
(1) A level $\ell\ge 2$ storing no item would be a run of $3^\ell\ge 9>7$
consecutive vacant buckets, contradicting Lemma~\ref{l:weakinv} (for
$\ell=\ell_{\max}$, Lemma~\ref{l:bottom}). Hence the smallest top level
attained in $(0,n+1)$ is $1$ or $2$.

(2) Repeat the count of Lemma~\ref{l:nodesize} with the weak spacing. The
endpoints of a non-root node's gap occupy consecutive occupied (possibly
virtual) buckets at the level above the node. The argument there uses only
Lemma~\ref{l:endpoints} and the subset property so the vacant run between
them now has length at most $7$, hence $q-p\le 8$, giving
$\beta-\alpha\le 3\cdot 8+2=26$ and at most $\lceil 23/2\rceil=12$ occupied
buckets in $[\alpha+2,\beta-2]$.

The root's keys are the occupied
buckets of its level, which is $1$ or $2$ by (1). Level $1$ has one
non-extreme bucket and level $2$ has seven, of which at most
$\lceil 7/2 \rceil=4$ can be occupied, no two being adjacent.

(3) Let $(u,v)$ be a nonempty gap of a node $N$ at level $\ell$. As in (2),
$u$ and $v$ occupy consecutive occupied buckets at level $\ell$, at positions
$s<t$ with $t-s\le 8$, and at level $\ell+1$ they occupy buckets
$\sigma=3s+r_u<\tau=3t+r_v$. The $\tau-\sigma-1\ge 3(t-s)-3\ge 3$ buckets
strictly between $\sigma$ and $\tau$ have all their keys in $(u,v)$. If one
of them is occupied, its occupant lies in $(u,v)$ and has top level exactly
$\ell+1$ because top level at most $\ell$ would contradict
Lemma~\ref{l:endpoints}(ii). The gap's node is then at level $\ell+1$. If all
are vacant which is impossible when $\ell+1=\ell_{\max}$, then by
Lemma~\ref{l:bottom} at most two consecutive buckets are vacant, so assume
$\ell+2\le\ell_{\max}$. Then, at least $3\cdot 3=9$ children of these buckets are at level $\ell+2$, likewise
have all their keys in $(u,v)$, and $9>7$ consecutive vacant buckets are
forbidden (by Lemma~\ref{l:weakinv}, or by Lemma~\ref{l:bottom} if
$\ell+2=\ell_{\max}$), so one of them is occupied. Its occupant has top level
exactly $\ell+2$, its parent bucket at level $\ell+1$ being vacant. In either
case the smallest top level in $(u,v)$ is $\ell+1$ or $\ell+2$.

(4) Both endpoints of $N$'s gap have top level below $\ell$
(Lemma~\ref{l:endpoints}) and hence occupy (possibly virtual) buckets at
level $\ell-1$ directly, whatever the level of $N$'s parent. As in (2) these
are consecutive occupied buckets with $q-p\in[2,8]$. The interval $(a,b)$
contains $q-p-1\ge1$ real buckets between them and is contained in
the key ranges of $q-p+1\le 9$ buckets $p,\dots,q$, so
$n/3^{\ell-1}\le s(N)\le 9n/3^{\ell-1}$.

(5) A gap of a node at level $\ell<\ell_{\max}$ is delimited by occupants of
buckets at level $\ell$ at distance at least $2$. The real bucket between
them holds $n/3^{\ell}\ge 1$ keys (as $\ell\le\log_3 n$), all strictly inside
the gap.

(6) Let $c$ be a child of $N$, with $N$ at level $\ell$. By (3), $c$ is at
level $\ell+1$ or $\ell+2$, so by (4), $w(c)=s(c)+1\ge n/3^{\ell+1}+1$, while
$w(N)\le 9n/3^{\ell-1}+1=27\,n/3^{\ell}+1$. Writing $x=n/3^{\ell}$,
$$
81\left(\frac{x}{3}+1\right)=27x+81\ \ge\ 27x+1,
$$
so $w(c)\ge\tfrac1{81}\,w(N)$.
\end{proof}

In particular $\mathcal{T}$ satisfies the weight condition of
Definition~\ref{d:bbb} with $\alpha=\tfrac1{81}$ at \emph{every} moment of
the execution, not merely between searches. The one property that can fail
transiently is uniform leaf depth. By (3) a child may sit two levels below
its parent, so the leaves, all of which are at level $\ell_{\max}$, may
transiently lie at depths between $\lceil\ell_{\max}/2\rceil$ and
$\ell_{\max}$. We return to this point when routing the pointer during
restoration.

During the restoration the pointer must visit a scattered set of nodes for the
clearing, promotion, and repair of nodes using only parent and child moves.
All the nodes lie near the search path in \emph{bucket} space, and we show
that the subtree connecting them to the root has total size
$O(\ell_i(x_i))$. A depth-first traversal of the nodes then visits every node
cheaply. Crucially, the size bound rests only on the weak structural bounds
of Lemma~\ref{l:weakstruct}, so it holds at every moment of the restoration.

\begin{lemma}{\bf Connecting subtree. }\label{l:tour}
Let $c\ge 2$ be a constant. Fix a search $x_i$, let
$\ell_i=\ell_i(x_i)$ be its maximum level at the start of the search, and for
each level $\lambda$ let $B_\lambda$ be the bucket containing the key $x_i$
at level $\lambda$. Let the \emph{window} $W_\lambda$ be the union of the
key ranges of the buckets at distance at most $c$ from $B_\lambda$. At any
moment during the search or its restoration, let $T_S$ consist of the root
together with every node whose level $\lambda$ satisfies
$\lambda\le\ell_i+2$ and whose gap intersects $W_\lambda$. Then $T_S$ is a
subtree of $\mathcal{T}$ containing the root, with $O(1)$ nodes at each
level and $|T_S|=O(\ell_i)$ in total.
\end{lemma}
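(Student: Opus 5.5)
We prove two things: that $T_S$ is closed under taking parents, and that each level contributes $O(1)$ nodes. Since $T_S$ only contains nodes of level at most $\ell_i+2$, summing over levels gives $|T_S| = O(\ell_i+2)=O(\ell_i)$. Everything uses only the weak structural bounds of Lemma~\ref{l:weakstruct} and Lemma~\ref{l:endpoints}, which hold at every instant of the restoration. So the argument is uniform in time.

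\emph{Closure under parents.} Let $N\in T_S$ be a non-root node at level $\lambda$ with gap $(a,b)$ meeting $W_\lambda$, and let $N'$ be its parent, at level $\lambda'<\lambda$ by Lemma~\ref{l:endpoints}(iii). Then $\lambda'\le \ell_i+2$ automatically. The gap of $N'$ contains $(a,b)$, so it meets $W_\lambda$. We must show it meets $W_{\lambda'}$.

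Buckets are nested: the level-$\lambda$ buckets at distance at most $c$ from $B_\lambda$ lie inside the level-$\lambda'$ buckets at distance at most $\lceil c/3^{\lambda-\lambda'}\rceil+1\le c$ from $B_{\lambda'}$ (using $c\ge2$). Hence $W_\lambda\subseteq W_{\lambda'}$, and the gap of $N'$ meets $W_{\lambda'}$. The root is in $T_S$ by definition, so $T_S$ is a subtree containing the root.

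\emph{$O(1)$ nodes per level.} The gaps of distinct nodes at the same level $\lambda$ are disjoint open intervals, because nodes at the same level are never ancestors of one another, by strictly increasing levels. So every such gap meeting the interval $W_\lambda$ either lies inside $W_\lambda$ or contains one of its two endpoints. At most two gaps contain an endpoint.

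For the gaps inside $W_\lambda$, use the lower bound of Lemma~\ref{l:weakstruct}(4): each node at level $\lambda$ has $s(N)\ge n/3^{\lambda-1}$, i.e.\ at least three level-$\lambda$ buckets' worth of keys. Meanwhile $W_\lambda$ has at most $(2c+1)n/3^\lambda$ keys, so at most $(2c+1)/3$ disjoint gaps fit inside it. Hence level $\lambda$ has at most $(2c+1)/3+2=O(1)$ nodes of $T_S$.

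\emph{Main obstacle.} The delicate point is the nesting claim $W_\lambda\subseteq W_{\lambda'}$ near the boundary, together with the convention of virtual buckets and sentinels. At an extreme position, the buckets at distance at most $c$ are truncated; truncation only shrinks $W_\lambda$, and the containment argument is unchanged. A second check is that the bottom level $\ell_{\max}$, whose buckets are defined differently, fits the same nesting. It does: its buckets are children of single-key buckets, so windows still nest. If $\ell_i+2>\ell_{\max}$ the level cap is vacuous, and the count is still $O(\ell_{\max})=O(\ell_i+2)$.
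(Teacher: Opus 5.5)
Your proof is correct and takes essentially the same route as the paper. Nesting of windows ($W_\lambda\subseteq W_{\lambda'}$ for $\lambda'<\lambda$, using $c\ge 2$) gives closure under parents, and disjointness of same-level gaps combined with the lower bound $s(N)\ge n/3^{\lambda-1}$ of Lemma~\ref{l:weakstruct}(4) gives at most $(2c+1)/3+2$ nodes of $T_S$ per level, exactly as in the paper; you additionally spell out details the paper leaves implicit (why same-level gaps are disjoint, the two gaps straddling the window's endpoints, and nesting across a transient multi-level skip in one step rather than by iterating the one-level inclusion).
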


\begin{proof}
Bucket $B_{\lambda-1}$ is the parent of $B_\lambda$,
and each level-$(\lambda-1)$ bucket spans three level-$\lambda$ buckets, so
the buckets at distance at most $c$ from $B_\lambda$ are contained in the
buckets at distance at most $\lceil (c+2)/3\rceil\le c$ from $B_{\lambda-1}$
(using $c\ge 2$), hence $W_\lambda\subseteq W_{\lambda-1}$.

Now, let $N\in T_S$ be a non-root node at level
$\lambda$ with parent $N'$ at level $\lambda'<\lambda$. The gap of $N'$
contains the gap of $N$, which intersects
$W_\lambda\subseteq W_{\lambda'}$, so $N'\in T_S$. Being closed under taking
parents and containing the root, $T_S$ is a subtree.

Finally, distinct nodes at level $\lambda$ have
disjoint gaps, and by Lemma~\ref{l:weakstruct}(4), which holds at every
moment and each such gap contains at least $n/3^{\lambda-1}$ keys, i.e.\ spans
at least three level-$\lambda$ bucket lengths in key space. The window
$W_\lambda$ spans $2c+1$ bucket lengths, so at most $(2c+1)/3+2=O(1)$ gaps
of level-$\lambda$ nodes can intersect it. Summing over at most
$\ell_i+2$ levels gives $|T_S|=O(\ell_i)$.
\end{proof}

A depth-first traversal of $T_S$ from the root, descending only into
children that belong to $T_S$, moves the pointer along each edge of $T_S$ at
most twice and hence visits every node of $T_S$ in $O(\ell_i)$ pointer
moves,
only unit-cost operations, not computation.
The bound holds when the
traversal is interleaved with \emph{edits}, the constant-size updates of
$\mathcal{T}$ defined before Lemma~\ref{l:localupdate} below, provided each
edit modifies
only the pointer's current node, its children, or its ancestors on the
current traversal stack, and replaces $O(1)$ nodes by $O(1)$ nodes whose
gaps lie in the same windows. The traversal simply continues on the current
tree. Over the whole restoration only $O(\ell_i)$ nodes
ever belong to $T_S$ ($O(\ell_i)$ initially, plus $O(1)$ per promotion or
demotion), so the total number of pointer moves remains $O(\ell_i)$.

We maintain $\mathcal{T}$ so that it always mirrors the current SkipPopList. This means that the
keys of the level-$\ell$ nodes are exactly the top items of level $\ell$. A single
promotion or demotion of the model changes the top level of one item by one, and
we show that the matching update of $\mathcal{T}$ can be implemented by rewriting the keys and child pointers of $O(1)$ nodes, all within one pointer hop of the pointer; we call such a constant-size update an \emph{edit}.

\begin{lemma}{\bf Constant-time promotion and demotion. }\label{l:localupdate}
Suppose that, at any moment during a search or its restoration, the pointer
rests on the node $N$ of $\mathcal{T}$ that stores an item $x$ as a key, the
node at $x$'s current top level. Then the promotion or demotion of $x$, as
defined in the skip-list cost model, can be mirrored in $\mathcal{T}$ in
$O(1)$ time, touching only nodes within one pointer hop of $N$.
\end{lemma}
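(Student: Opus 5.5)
We will work directly from the level-by-level description of $\mathcal{T}$: the nodes at level $\ell$ are the maximal runs of $\mathit{Top}_\ell$ that are consecutive among the items stored at level $\ell$. Changing the top level of $x$ from $\ell$ to $\ell-1$ (promotion) or to $\ell+1$ (demotion) alters $\mathit{Top}_\ell$ and one of $\mathit{Top}_{\ell\pm1}$ by the single item $x$. It leaves every other level's stored set unchanged. Hence only the runs that contain $x$, or that become adjacent because of $x$, can change. We will identify these runs explicitly as $N$, its parent, and children of $N$ or of its parent, and show how to rewrite them.

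\textbf{Promotion} ($x\in\mathit{Top}_\ell$ moves to $\mathit{Top}_{\ell-1}$). Let $N$ have keys $y_1<\dots<y_k$, with $x=y_j$, and let $P$ be the parent of $N$, whose gap separating keys are $a,b$. At level $\ell$, removing $x$ from $\mathit{Top}_\ell$ splits the run of $N$ into $L=\{y_1,\dots,y_{j-1}\}$ and $R=\{y_{j+1},\dots,y_k\}$; either part may be empty. At level $\ell-1$, $x$ becomes stored between $a$ and $b$, so $x$ is inserted as a key of $P$ between $a$ and $b$. The gap $(a,b)$ of $P$ is replaced by $(a,x)$ and $(x,b)$, whose children are the nodes with keys $L$ and $R$. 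The children of $N$ in the gaps $(y_{j-1},x)$ and $(x,y_{j+1})$ become the rightmost child of $L$ and the leftmost child of $R$ respectively. If $L$ (resp.\ $R$) is empty, the child in gap $(a,x)$ (resp.\ $(x,b)$) is instead $N$'s child in $(y_{j-1},x)$ (resp.\ $(x,y_{j+1})$), or absent. This is a level-skipping situation that the weak bounds of Lemma~\ref{l:weakstruct}(3) permit. The same bookkeeping applies to the root, with a new root node when $\ell=1$ is impossible; promotion is not allowed at level~1. All rewritten nodes are $N$, $P$, two new nodes, and reattached children of $N$, each within one hop of $N$.

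\textbf{Demotion} ($x\in\mathit{Top}_\ell$ moves to $\mathit{Top}_{\ell+1}$). This is the inverse operation. At level $\ell$, $x$ is deleted from $N$, splitting $N$ into $L$ and $R$ (or deleting $N$ if $k=1$), with the gaps $(y_{j-1},x)$ and $(x,y_{j+1})$ merged into $(y_{j-1},y_{j+1})$. At level $\ell+1$, $x$ now becomes stored and joins the runs of the children $C_1$ and $C_2$ of $N$ in those two gaps. These runs are merged into a single node $C_1\cup\{x\}\cup C_2$, which is the unique child in the merged gap. If a gap held no child, or held a child at level $\ell+2$ in the transient regime, we treat it as an empty run and keep its subtree as the child adjacent to $x$. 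The deleted $x$ must also be consistent at $N$'s level. Since $x$ left level $\ell$, the runs $L$ and $R$ at level $\ell$ are no longer consecutive only if some level-$\ell$ stored item separates them, and that item is exactly $x$. Thus $L$ and $R$ become two children of $P$, replacing $N$ in $P$'s gap, and $P$ gains one key slot split. Again only $N$, $P$, and $N$'s children are touched.

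\textbf{Main obstacle.} The main obstacle is verifying that these local rewrites are exactly what the global level-by-level definition yields. This includes checking the degenerate cases where $L$, $R$, $C_1$, or $C_2$ are empty. It also includes checking that during restoration, where a child may lie two levels down, no node beyond one hop is affected. We will handle this by arguing from Lemma~\ref{l:endpoints} that the runs at every level other than $\ell-1,\ell,\ell+1$ are literally unchanged, and that at those three levels the changed runs are contained in the gap of $P$. With at most $13$ children per node by Lemma~\ref{l:weakstruct}(2), every edit rewrites $O(1)$ keys and pointers, giving $O(1)$ time.
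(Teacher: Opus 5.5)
\textbf{Demotion.} Your demotion case contains a genuine error. When $x$ is demoted from level $\ell$, it is no longer stored at level $\ell$. Nothing then separates $y_{j-1}$ from $y_{j+1}$ at that level: they become consecutive stored items, both still in $\mathit{Top}_\ell$ (with $y_0,y_{k+1}$ read as the endpoints of $N$'s gap). So the remaining keys of $N$ still form one maximal run.

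Hence $N$ does \emph{not} split into $L$ and $R$. It loses the key $x$ and stays in the same slot of $P$. Its two gaps flanking $x$ merge into the single gap $(y_{j-1},y_{j+1})$, and the merged node $C_1\cup\{x\}\cup C_2$ hangs in that gap. This is exactly what your own intermediate sentence says: the only possible separator is $x$, and it is gone. It is also what your claim that the gaps merge into $(y_{j-1},y_{j+1})$ presupposes.

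Your conclusion that ``$L$ and $R$ become two children of $P$, replacing $N$ in $P$'s gap, and $P$ gains one key slot'' contradicts this. $P$ receives no new key in a demotion, and without a separating key, two children cannot occupy one gap of $P$. So the edit you describe does not produce the PopBTree of the updated SkipPopList; it is not even a search tree. $P$ is affected only when $k=1$. Then $N$ disappears, and the merged node takes $N$'s slot under $P$, or becomes the root if $N$ was the root, a case you do not address.

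\textbf{Promotion.} In the promotion case you implicitly assume $P$ is at level $\ell-1$, so that $x$ becomes a key of $P$. The lemma is claimed at every moment of the restoration, however, and there Lemma~\ref{l:weakstruct}(3) allows $N$'s parent to sit at level $\ell-2$. This genuinely occurs during gap repair, when clearing demotions have vacated every level-$(\ell-1)$ bucket between the delimiters $a,b$ of $N$'s gap.

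In that case $x$, whose new top level is $\ell-1$, cannot be a key of $P$, whose keys have top level $\ell-2$. At level $\ell-1$ the run containing $x$ is $\{x\}$ alone, bounded by $a$ and $b$. So $x$ must form a new single-key node at level $\ell-1$ in $N$'s former slot under $P$, with $L$ and $R$ (or their replacements when empty) as its children.

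\textbf{Assessment.} Both corrections are local and stay within one pointer hop of $N$, so the lemma's conclusion survives. As written, though, your demotion rewrite is incorrect and your promotion rewrite omits a case the lemma explicitly covers. Apart from these points, your run-based framing matches the paper's gap-based argument.
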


\begin{proof}
Let $x$ have top level $\ell$. Every node mentioned below has $O(1)$ keys and
children by Lemma~\ref{l:weakstruct}(2), which holds at every moment of the
restoration. We do not assume the full invariants.

\emph{Demotion:} ($\ell\to\ell+1$; possible only for $\ell<\ell_{\max}$).
Within $N$, the key $x$ is flanked by the two gaps $(u,x)$ and $(x,v)$, where
$u$ and $v$ are the keys of $N$ neighboring $x$ or, if $x$ is the leftmost or
rightmost key, the corresponding endpoint of $N$'s gap (possibly a sentinel).
Removing $x$ from level $\ell$ merges these two gaps into the single gap
$(u,v)$. Its node is a new node $c$. Since $x$ is stored at level $\ell+1$ by
the subset property now has top level $\ell+1$, and no other item of
$(u,v)$ has top level below $\ell+1$. So node $c$ is at level $\ell+1$ and
has $x$ among its keys. Concretely, $c$ is assembled from $O(1)$ existing
pieces.

The child of $N$ in the gap $(u,x)$ exists, the gap containing at
least one item by Lemma~\ref{l:weakstruct}(5), as $\ell<\ell_{\max}$. If
that child is at level $\ell+1$, its keys and child subtrees form the
portion of $c$ left of $x$. If it is at level $\ell+2$ (a transient level
skip, Lemma~\ref{l:weakstruct}(3)), it instead becomes the child of $c$
hanging in the gap immediately left of $x$. Symmetrically on the right. Finally $x$ is deleted from $N$. If $x$ was the
only key of $N$, then $N$ ceases to exist and $c$ takes $N$'s place as a
child of $N$'s parent. If $N$ was the root, $c$ becomes the root, consistent
with Lemma~\ref{l:weakstruct}(1). In every case a constant number of nodes,
each within one pointer hop of $N$, are read and rewritten. Moving the
pointer down onto $c$ finishes the operation, matching the cost model, in
which a demotion moves the pointer down one level.

\emph{Promotion:} ($\ell\to\ell-1$; possible only for $\ell>1$, and performed
by the maintenance method only after the destination bucket and its neighbors
have been cleared). Here $M:=N$ stores $x$ at level $\ell$. Inserting $x$ into the level
above turns $x$ into a delimiter, splitting $M$ at $x$. The keys of $M$ left
of $x$, with their child subtrees, form $M_L$, and those right of $x$ form
$M_R$. If $M_L$ (respectively $M_R$) has no keys, it is not a node, and its
place is taken directly by the leftmost (rightmost) child subtree of $M$, one
level further down, or by nothing if that side of $M$'s gap holds no items.
If $M$ is the root, it has no parent, and the promotion is into level $1$.
Note that the root not being at level $1$ forces it to be at level $2$ with level $1$
empty (Lemma~\ref{l:weakstruct}(1)). So after the promotion, $x$ is the
unique item of top level $1$, and $x$ becomes the new root, a single-key
node at level $1$ with $M_L$ and $M_R$ as its children. Only $M$ and the
$O(1)$ pointers among its pieces are touched.

Otherwise, let $N'$ be the
parent of $M$.
What happens above then depends on the level of $N'$. If $N'$ is at level
$\ell-1$, then $x$, with new top level $\ell-1$, lies strictly inside $N'$'s
gap and becomes a key of $N'$. The child slot held by $M$ is replaced by the
key $x$ flanked by two slots holding $M_L$ and $M_R$. If instead $N'$ is at
level $\ell-2$ (a transient skip, as arises during gap repair), then $x$
becomes a new single-key node at level $\ell-1$ occupying $M$'s slot as a
child of $N'$, with $M_L$ and $M_R$ as its children. Only $M$, $N'$, and the
$O(1)$ pointers between them are touched. Moving the pointer up onto the node
now containing $x$ finishes the operation, matching the cost model.

In both directions a constant number of nodes, each of constant size, are
split or merged, and only a constant number of keys and child pointers are
rewritten, all within one pointer hop of where the pointer began. Hence each
operation runs in $O(1)$ time.
\end{proof}

That the nodes stay of constant size throughout a maintenance pass is not a
bookkeeping accident but an instance of Lemma~\ref{l:weakstruct}(2): after
every unit-cost operation every node has at most $12$ keys, and the bound of
$4$ from Lemma~\ref{l:nodesize} is restored together with the invariants.

Next, to run the maintenance with a pointer that cannot move between siblings, we
need to know that all its operation sites lie in the connecting subtree.

\begin{lemma}{\textbf{Locality.  }}\label{l:locality}
There is an absolute constant $c_0$ such that, for every search, every
unit-cost operation of the restoration method acts on an item whose bucket,
at the level of the operation, is within distance $c_0$ of the bucket
containing the searched key at that level, and every operation takes place
at a level at most $\ell_i(x_i)$. Consequently, at every moment, every node
of $\mathcal{T}$ at which an operation must be performed, and every node
created, destroyed, or modified, belongs to the connecting subtree $T_S$ of
Lemma~\ref{l:tour} with constant $c=\max(c_0,2)$.
\end{lemma}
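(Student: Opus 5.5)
The proof will go through the three phases of the restoration method in turn: clearing demotions, promotion of $x_i$, and gap repair. In each phase I will bound the bucket distance of every operated-on item from $B_\lambda$, the bucket holding the key $x_i$ at the level $\lambda$ of the operation, and also bound the level of the operation. Then I translate these bucket-space facts into membership in $T_S$ using the definition in Lemma~\ref{l:tour}.

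\textbf{Clearing demotions.} A demotion of $y$ at level $\lambda$ happens only while $y$'s bucket contains, or is adjacent to, $B_\lambda$. So its distance is at most $1$. Every demotion acts at some level in $[t,\ell_i(x_i))$, since the conflicting levels end at $c_y<\ell_i(x_i)$. After demoting, $y$ lands at level $\lambda+1\le \ell_i(x_i)$, which is still inside the range allowed by $T_S$ (levels up to $\ell_i+2$).

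\textbf{Promotion.} The promotion of $x_i$ acts only on $x_i$ itself, at distance $0$, and at levels from $\ell_i(x_i)$ upward.

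\textbf{Gap repair.} This is the step I expect to need the most care. A repair fills the middle bucket $m$ of three consecutive vacant buckets at a level $\lambda$ where a demotion occurred. By the structural fact in the proof of Lemma~\ref{l:weakinv}, every vacancy created at level $\lambda$ lies in $W_\lambda$. Before the search, runs of vacant buckets had length at most two. Hence any run of three vacant buckets must meet $W_\lambda$, and its middle bucket is within distance about $3$ of $B_\lambda$. The promoted item comes from a child of $m$ at level $\lambda+1$. Child buckets of buckets within distance $3$ of $B_\lambda$ are within distance about $3\cdot 3+2=11$ of $B_{\lambda+1}$. The item is promoted from level $\lambda+1\le\ell_i(x_i)$, because demotions occur only below $\ell_i(x_i)$. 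Taking the maximum over the three phases gives something like $c_0=12$.

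There is one subtle point. The ``bottom-up order'' of repairs could in principle cascade: a repair at level $\lambda+1$ is a promotion and never creates vacancies, so it causes no repair at a lower level. A repair at $\lambda$ removes nothing from $\lambda+1$, since promotion keeps the item stored there. So no cascading repairs arise, and each repair level is a demotion level.

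\textbf{Translation to $T_S$.} An operation on item $z$ at level $\lambda$ touches the node storing $z$ as a key, whose level is $z$'s top level $\lambda$. The gap of that node contains $z$, whose key lies in a bucket within distance $c_0$ of $B_\lambda$, so the gap meets $W_\lambda$. The node's level is at most $\ell_i(x_i)+1\le \ell_i+2$, so it lies in $T_S$. The edits of Lemma~\ref{l:localupdate} create, destroy, or modify only the node itself, its parent (an ancestor, hence in $T_S$ by subtree closure), and new nodes with gaps containing $z$ or split from these. A new node's gap either contains a neighbouring key within the same window or is a sub-gap adjacent to $z$. For a split $M_L$ or $M_R$, I have to argue that its gap endpoint is $z$ itself, so the gap touches $z$'s bucket (or its neighbour) and thus $W_\lambda$ at its level. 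Its level is at most one or two more than $z$'s top level, hence at most $\ell_i+2$. This is exactly why Lemma~\ref{l:tour} allowed levels up to $\ell_i+2$. Finally, $c\ge 2$ is required there, which is why $c=\max(c_0,2)$.
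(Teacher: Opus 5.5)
Your proposal is correct and follows essentially the paper's route. You bound the distance by $1$ for clearing demotions and by $0$ for the promotion of $x_i$. For a gap repair you bound the filled bucket by a constant (your run-length argument gives $3$, the paper uses $4$, and in fact $2$ suffices because the run of three contains a demotion-vacated bucket within distance $1$), push it one level down via $3d+2$, and translate to $T_S$ exactly as the paper does, through the node holding the operated item, its parent, and the $O(1)$ pieces its edit assembles.

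The only slip is harmless. Demotion cascades are not confined to $[t,\ell_i(x_i))$: an item occupying the bucket next to the extreme bucket holding $x_i$'s key may be demoted at levels $<t$. Every demotion level is still less than $\ell_i(x_i)$, which is all your argument uses.
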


\begin{proof}
First the buckets, read off the maintenance method. A clearing cascade
demotes an item $y$ that occupies, at some level $\lambda<\ell_i(x_i)$, the
bucket containing $x_i$ or a bucket adjacent to it. The cascade acts at the
levels from $y$'s top level down to $\lambda$. At any such level $\lambda'$,
the bucket of $y$ is an ancestor of its bucket at level $\lambda$, and since
ancestors of adjacent buckets are equal or adjacent, the bucket of $y$ at
level $\lambda'$ is within distance one of the bucket of $x_i$'s key there.
The promotion of $x_i$ acts on the buckets containing $x_i$
itself---distance zero. A gap repair fills the middle of three consecutive
vacant buckets, at least one of which was vacated by a demotion and is
therefore within one of $x_i$'s bucket, so the filled bucket is within four
of $x_i$'s bucket. The item promoted into it comes from a child of the
filled bucket, within $3\cdot 4+2=14$ of $x_i$'s bucket one level down.
Taking $c_0=14$ covers every case.

Now pass to $\mathcal{T}$. An operation on an item whose bucket is within
$c_0$ of $B_\lambda$ is performed, per Lemma~\ref{l:localupdate}, at the
node holding that item as a key. That node's gap contains the item's key and
hence intersects the window $W_\lambda$, so the node is in $T_S$ (its level
is at most $\ell_i+1\le \ell_i+2$). The $O(1)$ nodes rewritten by the
operation, the node itself, its parent, and the replacements assembled from
its children, have gaps meeting the window at their own levels, all at most
$\ell_i+2$, and so also lie in $T_S$.
\end{proof}

The algorithm may start from any SkipPopList satisfying the invariants; the initial PopBTree, and as we later describe the PopTree, are read off from it.

\begin{theorem}\label{t:tree-opt}
{\bf PopBTree. }
There is an online algorithm that maintains the PopBTree $\mathcal{T}$ over
$[n]$, using a single pointer that moves only from a node to its parent or to
one of its children, such that: after every search $\mathcal{T}$ is a
$BBB[\tfrac1{12}]$ tree, after every unit-cost operation $\mathcal{T}$
satisfies the weight condition of Definition~\ref{d:bbb} with
$\alpha=\tfrac1{81}$, and any search sequence $X$ of length $\Omega(n\log n)$
is processed in total time $O(OPT(X))$.
\end{theorem}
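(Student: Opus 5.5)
The plan is to simulate, step by step, the SkipPopList algorithm of Lemmas~\ref{l:maintain} and~\ref{l:skipcost} on the PopBTree $\mathcal{T}$ read off from the current SkipPopList. The algorithm starts from any SkipPopList satisfying the invariants and keeps $\mathcal{T}$ mirroring it at all times. The balance claims then follow from earlier lemmas. After each search the SkipPopList invariants hold again by Lemma~\ref{l:maintain}, so Lemma~\ref{l:bbb} shows that $\mathcal{T}$ is a $BBB[\tfrac1{12}]$ tree. At every intermediate moment Lemma~\ref{l:weakinv} applies, so Lemma~\ref{l:weakstruct}(6) gives the weight condition with $\alpha=\tfrac1{81}$. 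It remains to implement each search and its restoration, using only parent/child moves, in $O(\ell_i(x_i))$ pointer moves and edits.

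\textbf{The search.} Descend from the root, comparing $x_i$ against the $O(1)$ keys of each node (Lemma~\ref{l:weakstruct}(2)) and following the unique gap containing $x_i$. By Lemma~\ref{l:depth}, $x_i$ is reached after visiting exactly $\ell_i(x_i)$ nodes, since the invariants hold at the start of the search.

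\textbf{The restoration.} Run a depth-first traversal of the connecting subtree $T_S$ of Lemma~\ref{l:tour}, with $c=\max(c_0,2)$ and $c_0$ from Lemma~\ref{l:locality}. Each clearing demotion, promotion of $x_i$, and gap-repair promotion is performed as an edit via Lemma~\ref{l:localupdate} at the node holding the affected item. By Lemma~\ref{l:locality}, that node and every node the edit creates, destroys, or modifies lie in $T_S$, and each edit touches only nodes within one hop of the pointer. The remark following Lemma~\ref{l:tour} then bounds the total pointer movement by $O(\ell_i)$. The edits themselves are $O(\ell_i)$ in number by the operation count in Lemma~\ref{l:skipcost}, each costing $O(1)$.

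\textbf{Main obstacle: ordering the edits.} The SkipPopList prescribes an order (clearing demotions, then the promotion of $x_i$, then bottom-up gap repair), and the DFS must visit the sites in a compatible order. I would handle this by performing three separate traversals of $T_S$, one per phase, each costing $O(\ell_i)$.
\begin{itemize}
\item Clearing cascades proceed top-down per item, which matches a downward DFS; each demotion also moves the pointer down onto the merged node, as Lemma~\ref{l:localupdate} requires.
\item The promotion of $x_i$ walks up its ancestor path.
\item Gap repair is bottom-up, so it is done on the return leg of the DFS (post-order).
\end{itemize}
One check is needed: between phases, $T_S$ is recomputed on the current tree, and its size remains $O(\ell_i)$. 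This holds because Lemma~\ref{l:tour} is stated "at any moment" and rests only on Lemma~\ref{l:weakstruct}. Transient level skips (Lemma~\ref{l:weakstruct}(3)) are already covered by the case analysis in Lemma~\ref{l:localupdate}.

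\textbf{Total cost.} Each search and its restoration cost $O(\ell_i(x_i))$, so the total is $O(\sum_i \ell_i(x_i))$. This is $O(OPT(X))$ by Corollary~\ref{cor:depth}, using the hypothesis that $|X|=\Omega(n\log n)$.
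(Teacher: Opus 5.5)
Your proposal is correct and follows the paper's proof essentially step for step. It keeps $\mathcal{T}$ equal to the tree of the current SkipPopList, uses the same balance citations (Lemmas~\ref{l:maintain}, \ref{l:bbb}, \ref{l:weakinv}, \ref{l:weakstruct}(6)), bounds the search by Lemma~\ref{l:depth}, makes three phase-by-phase passes over $T_S$ with edits from Lemma~\ref{l:localupdate} located by Lemma~\ref{l:locality}, and sums via Corollary~\ref{cor:depth}; the only differences are cosmetic (the paper runs each clearing cascade to completion on reaching its top and explicitly descends to $x_i$'s node before the promotion pass, while you leave these implicit but specify post-order for the repair pass, which the paper leaves unspecified).
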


\begin{proof}
The algorithm keeps $\mathcal{T}$ equal to the tree of the current SkipPopList,
by Lemma~\ref{l:bbb} this is a $BBB[\tfrac1{12}]$ tree between searches.
Consider
the $i$th search, for $x_i$, and let $\ell_i=\ell_i(x_i)$ be its top level just
before the search. We bound the work on this search by $O(\ell_i)$.

\emph{Search.} Starting at the root, repeatedly compare $x_i$ against the $O(1)$
keys of the current node and descend into the child whose interval contains $x_i$,
until $x_i$ is found. By Lemma~\ref{l:depth} the node storing $x_i$ sits at depth
$\ell_i$, so this is $\ell_i$ parent-to-child moves with $O(1)$ work each,
i.e.\ $O(\ell_i)$ time, and uses no sibling moves.

\emph{Restoration.} The algorithm then mirrors in $\mathcal{T}$ the
SkipPopList maintenance that reinstates the invariants after searching
$x_i$. That maintenance performs $O(\ell_i)$ promotions and demotions in the
prescribed order, clearing demotions, then the promotion of $x_i$, then gap
repairs, as tallied in the proof of Lemma~\ref{l:skipcost}. By
Lemma~\ref{l:locality}, every operation site
lies at every moment in the connecting subtree $T_S$ of Lemma~\ref{l:tour}.
The pointer is routed in three passes, one per phase, respecting the order
of the method.

\emph{Clearing pass.} A depth-first traversal of $T_S$. Whenever the pointer
reaches a node holding an item $y$ that must be cleared, with the node's
level being $y$'s top level, it executes $y$'s entire cascade: each demotion
is an $O(1)$-time edit at the pointer (Lemma~\ref{l:localupdate}) that drops
the pointer one level, onto the node into which $y$ was merged, precisely
where the next demotion of $y$ takes place, since a demotion is permitted
only at an item's top level. When the cascade ends the pointer climbs back
up and resumes the traversal. The climb is paid for by the cascade's own
operations. Because the traversal visits a node before its descendants, and
merges performed by demotions only combine nodes below the pointer, every
cascade top is reached and every cascade runs top-down as the model
requires.

\emph{Promotion pass.} The pointer returns to the root and descends to the
node holding $x_i$, of depth at most its level $\ell_i$
(Lemma~\ref{l:weakstruct}(3)), then promotes $x_i$ one level at a time, the
pointer riding up with each promotion (Lemma~\ref{l:localupdate}), until the
bucket above is extreme or level $1$ is reached.

\emph{Repair pass.} A second depth-first traversal of $T_S$. At each repair
site the pointer stands on the node, one level below the vacancy, holding
the item to be promoted, and performs the promotion in $O(1)$ time
(Lemma~\ref{l:localupdate}). The edit touches only the pointer's node and
its parent, which is on the traversal stack. Repairs do not cascade, as a
promotion vacates nothing.

Each pass moves the pointer $O(\ell_i)$ times: the two traversals run over a
subtree that has $O(\ell_i)$ nodes at every moment (Lemma~\ref{l:tour}) and
is edited only at the pointer, its children, and its stacked ancestors
(Lemma~\ref{l:localupdate}), and the promotion pass is a single descent and
ascent. Hence the restoration costs $O(\ell_i)$. Once it finishes the
invariants hold (Lemma~\ref{l:maintain}), so by Lemma~\ref{l:bbb}
$\mathcal{T}$ is again a $BBB[\tfrac1{12}]$ tree, and after every
intermediate unit-cost operation $\mathcal{T}$ satisfies the weight
condition with $\alpha=\tfrac1{81}$, by Lemma~\ref{l:weakstruct}(6).

 Summing, the algorithm spends $O\!\left(\sum_i \ell_i(x_i)\right)$
time, which by Corollary~\ref{cor:depth} is $O(OPT(X))$ whenever
$|X|=\Omega(n\log n)$.
\end{proof}

\section{PopTree}\label{s:poptree}

The PopBTree described in the previous subsection, although it achieves optimal cost for any search sequence $X$, is not a BST. In this section we describe how we turn it into a binary search tree, the
\emph{PopTree}, that for any search sequence $X$ has cost $O(OPT(X))$.

To convert the PopBTree into PopTree, we expand each PopBTree node into a right-leaning chain, and
show four things:
\begin{enumerate}
    \item The expansion preserves depth up to a constant factor.

    \item Each unit of PopBTree work becomes $O(1)$ unit-cost operations of the BST
model, i.e., moving  a single pointer to the parent, left child, or right child,
and single rotations of the pointed node with its parent.

\item The PopTree is a $BB[\tfrac1{20}]$ tree whenever the SkipPopList invariants hold.

 \item The PopTree is a $BB[\alpha_1]$ tree, for an absolute constant $\alpha_1>0$, after every unit-cost operation.
\end{enumerate}

\paragraph{The right-chain expansion.}
Let $N$ be a PopBTree node with keys $k_1<k_2<\dots<k_m$ (so $1\le m\le 4$ by
Lemma~\ref{l:nodesize}) and child subtrees $c_0,c_1,\dots,c_m$, where $c_t$ holds
the keys between $k_t$ and $k_{t+1}$, with $c_0$ the keys below $k_1$ and $c_m$
those above $k_m$. Replace $N$ by the \emph{right chain} on its keys: $k_1$ at the
top, and $k_{t+1}$ the right child of $k_t$ for $1\le t<m$. Hang the (recursively
converted) child subtrees on this chain by making $c_{t-1}$ the left child of
$k_t$ for $1\le t\le m$, and $c_m$ the right child of $k_m$. The result is a binary
search tree on the keys of $N$ and its children: from $k_t$ the right subtree holds
keys larger than $k_t$ and the left subtree holds the interval $(k_{t-1},k_t)$.
Applying this expansion at every node turns the PopBTree $\mathcal{T}$ into a binary
search tree, the \emph{PopTree}, denoted as $\mathcal{P}$ (see Figure \ref{fig:skippoplist-to-bst}).

\begin{lemma}{\bf Depth is preserved. }\label{l:poptree-depth}
An item at level $\ell$ of the PopBTree lies at depth between $\ell$ and $4\ell$
in the PopTree. In particular an item at PopBTree depth $\ell_i(x_i)$ is at PopTree
depth $\Theta(\ell_i(x_i))$.
\end{lemma}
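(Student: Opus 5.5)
We argue by induction along the PopBTree root-to-node path, tracking how the right-chain expansion places the keys of each node relative to the keys of its parent. By Lemma~\ref{l:depth}, an item $x$ of top level $\ell$ is a key of a node $N_\ell$ at depth $\ell$. There is a unique PopBTree path $N_1,N_2,\dots,N_\ell=N$ from the root to that node, where $N_{j+1}$ is the child of $N_j$ hanging in some gap $c_t$ of $N_j$. In the PopTree this child subtree is attached as the left child of the chain key $k_{t+1}$ of $N_j$, or as the right child of $k_m$ when $t=m$. Hence the PopTree root-to-$x$ path is the concatenation, over $j=1,\dots,\ell$, of a segment that enters the chain of $N_j$ at its top key $k_1$ and walks down the chain to the attaching key (or to $x$ itself when $j=\ell$). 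After that it takes one edge into the top of the chain of $N_{j+1}$.

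We then count the nodes on each segment. The segment inside $N_j$ visits between $1$ and $m_j$ keys of that chain, where $m_j$ is the number of keys of $N_j$. By Lemma~\ref{l:nodesize}, $m_j\le 4$. The top key is always visited, since every path entering the chain enters at $k_1$. So the number of nodes on the PopTree path to $x$, counted inclusively, lies between $\sum_{j=1}^{\ell}1=\ell$ and $\sum_{j=1}^{\ell}4=4\ell$. This gives depth between $\ell$ and $4\ell$, adjusting by the off-by-one convention: the PopBTree depth counts nodes inclusively, and the PopTree depth of the statement should be read the same way, or else one subtracts one on both sides.

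The only point needing care is the claim that distinct PopBTree levels contribute disjoint, consecutive segments. In other words, the expansion never interleaves keys of $N_j$ with keys of $N_{j+1}$ along the path. This follows directly from the construction, because each recursively converted child subtree is hung in its entirety as a single child of one chain key. The second sentence of the statement then follows immediately, since $\ell\le$ depth $\le 4\ell$ is $\Theta(\ell)$. I expect no real obstacle here. The main thing to check is that the bound of $4$ keys from Lemma~\ref{l:nodesize} applies, which it does because the statement concerns the PopBTree when the invariants hold. During restoration one would instead use the bound of $12$ keys from Lemma~\ref{l:weakstruct}(2), which gives a $12\ell$ analogue.
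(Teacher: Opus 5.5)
Your proof is correct and follows essentially the same route as the paper: the PopTree root-to-item path passes through the chains of the $\ell$ PopBTree nodes on the root-to-node path (Lemma~\ref{l:depth}). It enters each chain at its top and visits between $1$ and $m\le 4$ of its keys (Lemma~\ref{l:nodesize}), so it has between $\ell$ and $4\ell$ nodes. Your note that PopTree depth must be counted inclusively, in nodes rather than edges, for the lower bound $\ell$ to hold is a fair clarification of a convention the paper leaves implicit.
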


\begin{proof}
The root-to-item path in $\mathcal{P}$ passes through the chains of the $\ell$
PopBTree nodes on the item's root-to-node path. It enters each chain at its top and
walks down at most $m\le 4$ chain nodes before stepping into a child (or reaching
the item), so its length is at most $4\ell$ and at least $\ell$, one node per chain.
\end{proof}

\begin{lemma}{\bf Constant-time simulation by rotations. }\label{l:poptree-sim}
Each unit-cost PopBTree operation,
is realized on the PopTree by $O(1)$ unit-cost operations of the BST model.
The PopTree after the $O(1)$ unit-cost operations is exactly the
right-chain expansion of the updated PopBTree.
\end{lemma}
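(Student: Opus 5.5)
The plan is to go through the unit-cost PopBTree operations one at a time and, for each, exhibit an explicit constant-length sequence of BST operations. The operations are: a parent move, a child move, and the two edits of Lemma~\ref{l:localupdate}, namely demotion (merge) and promotion (split). The key structural fact used throughout is that every PopBTree node has at most $12$ keys and at most $13$ children at every moment (Lemma~\ref{l:weakstruct}(2)). Hence every chain in the right-chain expansion has length at most $12$, and every configuration involving a node, its parent and its children is described by $O(1)$ binary nodes.

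\textbf{Pointer moves.} We identify the PopBTree pointer at node $N$ with the BST pointer at the top $k_1$ of $N$'s chain.
\begin{itemize}
\item To move to child $c_t$, walk right along the chain $t$ times, or $m-1$ times for $c_m$. Then step to the left child, or for $c_m$ to the right child. This reaches the top of $c_t$'s chain in at most $13$ moves.
\item To move to the parent, go up. While the current node is a right child whose parent lies in the same chain, keep going up. This takes at most $13$ moves. Membership in the chain is determined by the PopBTree structure, and computation is free in the model.
\end{itemize}

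\textbf{Edits.} The approach is uniform for both edits. An edit of Lemma~\ref{l:localupdate} replaces a constant-size configuration $C$ (a node $N$, possibly its parent, and its children's top chains) by another configuration $C'$ on the same keys. The subtrees hanging below $C$ are unchanged as atomic units. Both the expansion of $C$ and the expansion of $C'$ are BSTs on the same $O(1)$ keys with the same external subtrees in the same left-to-right order.
\begin{itemize}
\item Any two BSTs on $k$ keys with identical external leaves are related by at most $2k$ rotations, by the standard rotate-to-root argument.
\item Each rotation can be performed with the pointer at the lower node, reachable within the configuration in $O(1)$ moves.
\item Hence each edit costs $O(1)$ unit operations, and the resulting tree is exactly the expansion of the updated PopBTree.
\end{itemize}
Concretely, a demotion removes $x$ from $N$'s chain and concatenates the left-child chain, $x$, and the right-child chain into one chain. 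A promotion cuts $M$'s chain at $x$ and inserts $x$ into $N'$'s chain, or makes $x$ a single-key node. To be concrete I would write these explicitly as a short sequence of rotations along the chains, e.g. rotating $x$ up to the top of $N$'s chain segment and then straightening.

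\textbf{Main obstacle.} The hard part is not the counting but verifying that the rotation sequence matches the unit-cost model exactly. The rotations must only ever be of the pointed node with its parent, and the pointer must end where the PopBTree pointer is supposed to be, namely on the top of $c$'s chain after a demotion, or on the node containing $x$ after a promotion. If a level skip occurred, the promotion should land on $x$'s new single-key node. The transient level skips of Lemma~\ref{l:weakstruct}(3) do not affect the argument, since they only change which subtrees are external leaves of $C$.

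I would postpone the balance analysis of intermediate rotation states to the next lemma, which handles item 4. However, I would order the rotations so that each intermediate tree is itself the expansion of some configuration of bounded-size nodes. This keeps the later $BB[\alpha_1]$ argument manageable.
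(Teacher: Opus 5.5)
Your proposal is correct. For the pointer moves it matches the paper, but for the two edits it takes a different route.

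\textbf{The paper's route.} It gives an explicit rotation trace. For a promotion of $x=k_j$ in $M$, it rotates $x$ with its parent $j-1$ times and checks by induction that $x$ lands in $M$'s old top position with subtrees exactly $\exp(M_L)$ and $\exp(M_R)$. It then does one more rotation with $v$ if $M$ hung as the left child of a chain node $v$ of its parent's chain, and none otherwise. A demotion is the reverse trace, each step being a child move, one rotation, and a move back to $x$.

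\textbf{Your route.} You observe that an edit rewrites a connected, parent-closed region of $O(1)$ BST nodes into another BST shape on the same keys, with the same external subtrees in the same order. You then invoke the bounded rotation distance between any two such shapes.

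\textbf{Trade-off.} Your argument is shorter and uniform: the level-skip, empty-$M_L$/$M_R$ and root cases only require checking that keys and external subtrees agree before and after. The paper's trace yields an exact count (at most $m$ rotations, with the pointer riding on $x$). It also gives an explicit description of every intermediate tree as a chain-of-blocks expansion, which is what the proof of Lemma~\ref{l:poptree-always} cites.

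\textbf{Two small corrections.}
\begin{itemize}
\item The $2k$ bound does not come from rotate-to-root, which can need $\Theta(k^2)$ rotations. It comes from converting both shapes to a right chain, at most $k-1$ rotations each. This is harmless here since $k=O(1)$.
\item Your plan to order the rotations so that intermediate trees are block expansions is unnecessary. Every child subtree of a region node contains an external slot. By Lemma~\ref{l:weakstruct}(4),(5), each external slot has weight at least a constant fraction of the region's total weight, since the region spans $O(1)$ levels and missing slots occur only at leaf level, where all weights are $O(1)$. So every intermediate shape is already $BB[\alpha']$ for some constant $\alpha'$.
\end{itemize}
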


\begin{proof}
\emph{Pointer moves:} Each PopBTree node is a chain of $O(1)$ PopTree nodes,
at every moment (Lemma~\ref{l:weakstruct}(2)). To move to a child of $N$,
descend within $N$'s chain to the relevant key and step to its left or right
child. To move to the parent of $N$, ascend to the top of $N$'s chain and
take one step further, both $O(1)$ moves, and no rotations.

\emph{Promotion:} Let $x=k_j$ be a key of the chain of $M$, whose keys are
$k_1<\dots<k_m$ with child subtrees $c_0,\dots,c_m$, and let the pointer rest
on $x$. Rotate $x$ with its parent, $j-1$ times. We claim that afterwards
$x$ occupies the position where the top of $M$'s chain stood, with left
subtree exactly the expansion of $M_L$ (the keys $k_1,\dots,k_{j-1}$ with
children $c_0,\dots,c_{j-1}$) and right subtree exactly the expansion of
$M_R$ (the keys $k_{j+1},\dots,k_m$ with children $c_j,\dots,c_m$). The
right subtree of $x$ is the chain $k_{j+1},\dots,k_m$ and is never touched.
For the left, induct on the rotations: when $x$ is rotated above $k_t$ (for
$t=j-1,\dots,1$), the node $k_t$ becomes the left child of $x$, keeps its
left subtree $\exp(c_{t-1})$, and receives as right subtree the previous
left subtree of $x$, which by induction is the expansion of the block
$k_{t+1},\dots,k_{j-1}$ with children $c_t,\dots,c_{j-1}$. This results in the
expansion of the block $k_t,\dots,k_{j-1}$ with children
$c_{t-1},\dots,c_{j-1}$. At $t=1$ this is $\exp(M_L)$.

It remains to move $x$ one level up. If $M$ hung as the left child of a
chain node $v$ of its parent's chain, one further rotation of $x$ with $v$
makes $x$ the predecessor of $v$ in that chain, with left subtree
$\exp(M_L)$, while $v$ receives $\exp(M_R)$ as its left subtree. This is exactly the
expansion of the PopBTree after the promotion, in which $x$ is inserted as a
key before $v$ with $M_L,M_R$ in the two flanking slots. If $M$ hung as the
right child of the last chain node $u$ of its parent's chain, no further
rotation is needed as $x$ is already the right child of $u$ and extending the
chain by the key $x$ with subtrees $\exp(M_L),\exp(M_R)$ gives the correct
expansion. In total at most $m\le O(1)$ rotations, each of the pointed node
with its parent, with the pointer riding on $x$ throughout. The degenerate
cases of Lemma~\ref{l:localupdate} when $M_L$ or $M_R$ empty, the promotion
creating a new single-key node one level up during a transient skip, or $M$
being the root, this correspond to shorter instances of the same rotation trace. In the root case no further rotation is needed, as $x$ already stands at the
root position after the $j-1$ rotations, with subtrees $\exp(M_L)$ and
$\exp(M_R)$.

\emph{Demotion:} This is the inverse rotation sequence. The pointer rests on
$x$, a chain node whose PopBTree demotion merges the two blocks flanking
$x$. Repeatedly, the pointer steps to the child of $x$ that must rise, the
top of the flanking block's chain and rotates it up, then steps back to
$x$. Each such iteration is one child move, one rotation, and one move back, and
undoes one step of the promotion trace read backwards. After $O(1)$
iterations the tree is the expansion of the PopBTree after the demotion, and
the pointer stands on $x$, now a chain node of the merged node one level
down, matching the cost model. If the demotion removes the last key of $x$'s
old node, nothing special occurs in the PopTree because there are no empty chains,
only the absence of one, and the merged chain hangs where the old chain
hung.
\end{proof}

We next show that the PopTree is a $BB[\alpha]$ tree.

\begin{lemma}{\bf The PopTree is $BB[\alpha]$. }\label{l:poptree-bb}
The PopTree $\mathcal{P}$ is a $BB[\tfrac1{20}]$ tree.
\end{lemma}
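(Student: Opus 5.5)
The plan is to verify the lower half of Definition~\ref{d:balanced}, $w(x)\ge \alpha\, w(y)$ for every PopTree node $x$ with parent $y$, with $\alpha=\tfrac1{20}$. Missing children count as weight-one nodes. The upper half then comes for free: the weight of $y$ is the sum of the weights of its two children, and both children satisfy the lower bound, so $w(x)=w(y)-w(\text{sibling})\le (1-\alpha)w(y)=\beta w(y)$.

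First I will translate PopTree weights into PopBTree subtree sizes. For a PopBTree node $N$ with chain $k_1,\dots,k_m$, the PopTree subtree rooted at the chain top $k_1$ contains exactly the items of $N$'s PopBTree subtree, so its weight is $s(N)+1$. Every chain node $k_t$ roots a sub-chain whose subtree is a subset of this, so $w(k_t)\le s(N)+1$. Moreover, each subtree $\exp(c_t)$ hung on the chain has weight $s(c_t)+1$.

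Then I will check the three kinds of parent–child edges in turn, with $N$ at level $\ell$.
\begin{enumerate}
\item[(a)] \emph{Chain edge.} The child is $k_{t+1}$ and the parent is $k_t$. The child's subtree contains $\exp(c_m)$, so $w(k_{t+1})\ge s(c_m)+1$.
\item[(b)] \emph{Hanging edge.} The child is $\exp(c_{t-1})$ under $k_t$, or $\exp(c_m)$ under $k_m$. Its weight is $s(c)+1$ for the corresponding PopBTree child $c$.
\item[(c)] \emph{Missing child.} This happens only at leaves of $\mathcal{T}$.
\end{enumerate}

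In cases (a) and (b), the PopBTree child $c$ exists. When $\ell<\ell_{\max}$, every gap of $N$ contains a key: the argument of Lemma~\ref{l:weakstruct}(5) applies verbatim under the full invariants. By Lemma~\ref{l:depth}, $c$ is at level $\ell+1$, and Lemma~\ref{l:subtreesize} gives
\[
s(c)\ge \frac{n}{3^{\ell}}, \qquad s(N)\le \frac{12n}{3^{\ell}}.
\]
The same computation as in Lemma~\ref{l:bbb} then yields the ratio $\frac{x+1}{12x+1}\ge\frac1{12}$, where $x=n/3^{\ell}$.

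For case (c), $N$ is at level $\ell_{\max}=\log_3 n+1$ by Lemma~\ref{l:uniform}. Lemma~\ref{l:subtreesize} gives $s(N)\le 4n/3^{\log_3 n}=4$, so every chain node in $N$ has weight at most $5$. A missing child has weight $1\ge\tfrac15 w(y)$.

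All three cases give ratio at least $\tfrac1{12}\ge\tfrac1{20}$, which proves the lemma with room to spare.

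The step I expect to need the most care is case (a). The danger is a chain node deep in a chain whose own subtree is small compared with $N$'s. The resolution is to lower-bound the child's weight by the rightmost hanging subtree $\exp(c_m)$ alone, and to upper-bound the parent's weight by the whole of $N$. I would also check that the PopTree root (the single key of the level-1 root node) and the boundary gaps next to the sentinels are covered by the same lemmas. Lemmas~\ref{l:endpoints} and~\ref{l:subtreesize} already handle sentinel endpoints via virtual buckets, so no extra argument should be needed.
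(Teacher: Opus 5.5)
Your argument is correct, and it in fact proves the stronger $BB[\tfrac1{12}]$.

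\textbf{Same decomposition as the paper.} For an internal node $N$, every child subtree of a chain node contains a full $\exp(c_t)$. With $S=n/3^\ell$, that subtree has weight at least $S+1$ by Lemma~\ref{l:subtreesize} at level $\ell+1$.

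\textbf{Where you differ: the bound on the parent.} The paper writes $w(u)=\sum_{t=j-1}^{m} w(c_t)$ and applies the per-child upper bound $4S+1$ to up to five terms, giving $20S+5$ and hence $\tfrac1{20}$. You instead use $w(u)\le w(k_1)=s(N)+1\le 12S+1$, applying Lemma~\ref{l:subtreesize} to $N$ itself. This is tighter because the children's sizes are jointly capped by $s(N)$ rather than each separately by $4S$. Your proof shows that the $BBB[\tfrac1{12}]$ property of Lemma~\ref{l:bbb} passes unchanged through the right-chain expansion. The paper's per-child summation uses only bounds at level $\ell+1$, but it loses a constant.

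\textbf{One slip to tidy.} In case (a) at a leaf of $\mathcal{T}$, $c_m$ does not exist, so the lower bound via $\exp(c_m)$ is unavailable there. Your claim that ``in cases (a) and (b) the PopBTree child $c$ exists'' is only justified for $\ell<\ell_{\max}$. The reasoning in your case (c) already closes this: every chain node at leaf level has weight at most $5$, and every child has weight at least $1$. So state that this bound handles every edge inside a leaf chain, not just the missing children.
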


\begin{proof}
Give each node $u$ of $\mathcal{P}$ the weight $w(u)$. We show that each child subtree of $u$  has weight at least
$\tfrac1{20}\,w(u)$, which is the $BB[\tfrac1{20}]$ condition.

Let $u=k_j$ be a PopTree node lying in the chain of a PopBTree node $N$ at level
$\ell$, with keys $k_1,\dots,k_m$ and children $c_0,\dots,c_m$. The subtree of $u$
consists of the keys $k_j,\dots,k_m$ together with the child subtrees
$c_{j-1},\dots,c_m$, whose external nodes are disjoint, so
$w(u)=\sum_{t=j-1}^{m} w(c_t)$,
a sum of $m-j+2\le 5$ terms. The left child subtree of $u$ is the converted
$c_{j-1}$, of weight $w(c_{j-1})$, and the right child subtree is $k_{j+1},\dots,k_m$
with $c_j,\dots,c_m$ when $j<m$ (weight $\sum_{t=j}^m w(c_t)$), and the converted
$c_m$ when $j=m$.

\emph{Internal node $N$ ($\ell<\ell_{\max}$):} Every gap of an internal node has a
nonempty child. Its two delimiters occupy consecutive occupied buckets at
level $\ell$, sentinels included, by the virtual-bucket convention, which
covers the outer gaps of the root and the leftmost and rightmost
nodes, and are separated by a vacant bucket, which by Lemma~\ref{l:below}
has an occupied child one level down whose occupant lies in the gap. Hence
all $c_0,\dots,c_m$ are nonempty, each a level-$(\ell+1)$ subtree
(Lemma~\ref{l:depth}), so by Lemma~\ref{l:subtreesize},
$$
S+1\ \le\ w(c_t)\ \le\ 4S+1 ,\qquad S:=\frac{n}{3^{\ell}}\ \ge\ 1 .
$$
Then $w(u)\le(m-j+2)(4S+1)\le 5(4S+1)=20S+5$, while each child subtree contains at
least one $c_t$ and so has weight at least $S+1$. Therefore each child subtree has
weight ratio
$$
\frac{S+1}{20S+5}\ \ge\ \frac1{20}.
$$

\emph{Leaf node $N$ ($\ell=\ell_{\max}$).} Here $u$'s subtree contains only the $\le 4$
keys $k_j,\dots,k_m$, so $w(u)\le 5$, while every child subtree has weight at least
$1$ so $1\ge\tfrac1{20}\cdot 5$, so the condition holds here too.

Thus every node of $\mathcal{P}$ has both child subtrees of weight at least
$\tfrac1{20}\,w(u)$, so $\mathcal{P}$ is a $BB[\tfrac1{20}]$ tree.
\end{proof}

In a binary tree the weights of the two children of a node sum to the
weight of the node, where missing children are counted with weight one. Hence the lower
bound $w(\text{child})\ge\alpha\,w(u)$ for both children implies the upper
bound $w(\text{child})\le(1-\alpha)\,w(u)$. Thus Lemma~\ref{l:poptree-bb}
says precisely that the PopTree satisfies Definition~\ref{d:balanced} with
$\alpha=\tfrac1{20}$ and $\beta=\tfrac{19}{20}$.

\begin{lemma}{\bf Balance at every step. }\label{l:poptree-always}
There is an absolute constant $\alpha_1>0$, our crude accounting gives
$\alpha_1=\tfrac1{729}$, such that after every unit-cost operation of the
PopTree algorithm, including every single rotation inside the rotation
sequences of Lemma~\ref{l:poptree-sim}, the PopTree is a $BB[\alpha_1]$
tree.
\end{lemma}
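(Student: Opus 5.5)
We split the unit-cost operations of the PopTree algorithm into two kinds and handle them separately. The first kind consists of operations at which the PopTree is exactly the right-chain expansion of the current PopBTree: every pointer move, and the end of each rotation sequence of Lemma~\ref{l:poptree-sim}. The second kind consists of the intermediate rotations inside a promotion or demotion trace. Pointer moves do not change weights, so the only states that need checking are the trees produced by rotations.

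\textbf{Step 1: trees that are exact expansions.} We redo the computation of Lemma~\ref{l:poptree-bb} with the weak bounds of Lemma~\ref{l:weakstruct} replacing the strong ones. Let $u=k_j$ be a chain node of a PopBTree node $N$ at level $\ell<\ell_{\max}$. By Lemma~\ref{l:weakstruct}(5) every gap of $N$ has a nonempty child, so $w(u)=\sum_{t\ge j-1} w(c_t)$. This sum has at most $13$ terms by Lemma~\ref{l:weakstruct}(2). Each child $c_t$ sits at level $\ell+1$ or $\ell+2$ by part (3), so part (4) gives
\[
\frac{S}{3}+1\le w(c_t)\le 9S+1, \qquad S=\frac{n}{3^\ell}.
\]
Every child subtree of $u$ contains at least one $c_t$. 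Hence its weight ratio is at least
\[
\frac{S/3+1}{13(9S+1)}\ge \frac{1}{351}.
\]
For leaf nodes, $w(u)\le 13$ while each child has weight $1$, which gives ratio at least $1/13$. So exact expansions are $BB[\tfrac1{351}]$, comfortably within $\tfrac1{729}$.

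\textbf{Step 2: intermediate rotations.} This is the main obstacle. Take a promotion trace in which $x=k_j$ has been rotated above $k_{j-1},\dots,k_t$. The part of the tree that is not an expansion is contained in $O(1)$ nodes: $x$, its new left child, and the untouched chain above. Each of these nodes has a subtree that is a union of at most $13$ consecutive child blocks $c_s$ of $M$ together with some keys of $M$. Its children are again unions of one or more whole blocks $c_s$; the one exception is the final rotation into the parent chain, where a child can be a whole subtree of the parent's chain. In every case the same ratio argument applies. Any child subtree contains a whole $c_s$ of weight at least $S/3+1$, or a whole block of the parent of weight at least $S'/3+1$ with $S'=3S$. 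Any node's subtree lies within the subtree of the parent PopBTree node, whose weight is at most $9S'+1=27S+1$ by Lemma~\ref{l:weakstruct}(4).

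The worst ratio then comes from comparing a level-$(\ell+2)$ block with a subtree of the parent node $N'$ at level $\ell-1$. Accounting crudely, this costs a factor of $9$ for the size ratio between $\ell-1$ and $\ell+1$, a factor of $81$ in total, which is where $1/729=1/9^3$ comes from. We aim to bound each such ratio by
\[
\frac{S/3+1}{27(9S+1)}\ge \frac1{729}.
\]
The demotion trace is the promotion trace read backwards, so its intermediate trees are the same family of shapes and the same bound applies.

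\textbf{Step 3: the degenerate cases.} We check that the degenerate cases of Lemma~\ref{l:localupdate} are shorter traces of the same shape, so they inherit the bound. These are: $M_L$ or $M_R$ empty, a new single-key node created during a transient skip, and $M$ being the root. The delicate point is an empty side. There a missing child of weight $1$ can appear, but only at a node whose subtree is a leaf-level block of size at most $13$, so its ratio stays above $1/13$. The remaining checks are routine bookkeeping, case by case.
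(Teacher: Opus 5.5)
Your proposal follows essentially the paper's argument. You bound each BST node's subtree weight from above by a containing PopBTree node via Lemma~\ref{l:weakstruct}(4), bound each child subtree from below by a whole PopBTree child block at most two levels down via Lemma~\ref{l:weakstruct}(3)--(5), and handle leaf-level blocks by their constant size. The paper does this uniformly for all nodes, getting $(x/3+1)/(243x+1)\ge\tfrac1{729}$, rather than splitting off exact expansions as your Step~1 does.

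There is one slip. Lemma~\ref{l:weakstruct}(4) bounds the parent $N'$ at level $\ell-1$ by $27S'+1=81S+1$, not $9S'+1=27S+1$. Your final denominator $27(9S+1)$ dominates this, and even the $243S+1$ bound for a node two levels up, so the $\tfrac1{729}$ bound still follows.
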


\begin{proof}
Pointer moves do not change the tree, so it suffices to consider the trees
occurring between consecutive rotations. By Lemma~\ref{l:poptree-sim} each
such tree is the right-chain expansion of the current PopBTree in which at
most one $O(1)$-size region is part-way through an edit, and by the
inductive description in that proof, every BST node of the region is a chain
node of a block.

Let $u$ be any BST node at such a moment, a chain node of a block of a
PopBTree node at some level $\ell$, and write $x=n/3^\ell$. The subtree of
$u$ is contained in the subtree of the topmost PopBTree node involved, whose
level is at least $\ell-2$ (an edit spans at most the levels
$\ell-2,\dots,\ell+2$, by Lemma~\ref{l:weakstruct}(3) and the descriptions
in Lemma~\ref{l:localupdate}), so by Lemma~\ref{l:weakstruct}(4),
$$
w(u)\ \le\ \frac{9n}{3^{\ell-3}}+1\ =\ 243x+1 .
$$
Each of the two subtrees of $u$ either contains a full child subtree of a
PopBTree node of the region, of level at most $\ell+2$, and then by
Lemma~\ref{l:weakstruct}(4) has weight at least $n/3^{\ell+1}+1=x/3+1$. Otherwise
it contains no PopBTree child subtree at all, which by
Lemma~\ref{l:weakstruct}(5) forces the block, and $u$'s whole subtree, to
lie at leaf level, where every weight is bounded by an absolute
constant (a merged leaf block has $O(1)$ keys by
Lemma~\ref{l:weakstruct}(2)), comfortably below $729$, so that the
external or key-only subtree, of weight at least $1$, meets the ratio. In
the former case
$$
\frac{x/3+1}{243x+1}\ \ge\ \frac{1}{729},
$$
since $729(x/3+1)=243x+729\ge 243x+1$.

\end{proof}

We now present our main theorem.

\begin{theorem}{\bf PopTree.}\label{t:poptree}
There is an online algorithm maintaining a binary search tree over $[n]$, the
PopTree, that uses only the unit-cost operations of the $BB[\alpha]$-algorithm
model
such that the tree is a
$BB[\tfrac1{729}]$ tree after every unit-cost operation, is a $BB[\tfrac1{20}]$ tree after
every search, and processes any search sequence $X$ of length
$\Omega(n\log n)$ in total time $O(OPT(X))$.
\end{theorem}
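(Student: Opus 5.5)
The plan is to assemble Theorem~\ref{t:poptree} from the PopBTree result, Theorem~\ref{t:tree-opt}, and the four PopTree lemmas. The algorithm is defined as follows. Start from the right-chain expansion $\mathcal{P}$ of the PopBTree $\mathcal{T}$ read off any SkipPopList satisfying the invariants. Then run the online PopBTree algorithm of Theorem~\ref{t:tree-opt}, replacing each of its unit-cost PopBTree operations (parent move, child move, promotion edit, demotion edit) by the $O(1)$ BST operations given by Lemma~\ref{l:poptree-sim}. That lemma also guarantees that after each simulated operation the BST is exactly the right-chain expansion of the current $\mathcal{T}$. By induction over the operations, $\mathcal{P}$ therefore mirrors $\mathcal{T}$ at every PopBTree-step boundary. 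Online-ness is inherited, since the choice of operations depends only on the current SkipPopList and the current search.

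The cost bound comes next. The search phase descends $\mathcal{T}$ to depth $\ell_i(x_i)$. In $\mathcal{P}$ this is a root-to-$x_i$ walk of length at most $4\ell_i(x_i)$ by Lemma~\ref{l:poptree-depth}, or equivalently $O(1)$ BST moves per PopBTree move. The restoration phase uses $O(\ell_i(x_i))$ PopBTree operations by the proof of Theorem~\ref{t:tree-opt}, and each becomes $O(1)$ BST operations by Lemma~\ref{l:poptree-sim}. One point must be checked: pointer moves in the BST model are parent, left child and right child, so navigating within a chain of a PopBTree node costs $O(1)$. This holds because chains have length at most $12$ at every moment by Lemma~\ref{l:weakstruct}(2). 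The total is $O(\sum_i \ell_i(x_i))$, which is $O(OPT(X))$ by Corollary~\ref{cor:depth} when $m=\Omega(n\log n)$. The search itself is legitimate because the pointer touches $x_i$ during the descent.

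The balance claims are read off directly. After a search the SkipPopList invariants hold by Lemma~\ref{l:maintain}, so $\mathcal{P}$ is the expansion of an invariant-satisfying PopBTree and is $BB[\tfrac1{20}]$ by Lemma~\ref{l:poptree-bb}. The remark following that lemma supplies the upper side $w(\text{child})\le\beta w(u)$ of Definition~\ref{d:balanced}. After every unit-cost operation, including intermediate rotations inside a simulated edit, the tree is $BB[\tfrac1{729}]$ by Lemma~\ref{l:poptree-always}, again with the upper side following from the fact that the two children's weights sum to the parent's weight. Since $\tfrac1{729}\le\tfrac1{20}$, the $BB[\tfrac1{20}]$ states also satisfy the weaker invariant, so the model's requirement of validity at all times holds.

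The main obstacle is making sure the intermediate states of Lemma~\ref{l:poptree-sim} are genuinely covered by Lemma~\ref{l:poptree-always}. In particular, the demotion's rotation sequence must not pass through a state that is not a partial expansion, and transient level skips must involve only the levels $\ell-2,\dots,\ell+2$. I would handle this by stating explicitly that each intermediate tree in both rotation traces is the expansion with a single block region partially rotated. This is what the inductive description in the proof of Lemma~\ref{l:poptree-sim} shows for promotion, and for demotion by reading that same trace backwards. With that in place, Lemma~\ref{l:poptree-always} applies verbatim.
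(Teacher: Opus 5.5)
Your proposal is correct and follows the paper's route. You run the PopBTree algorithm of Theorem~\ref{t:tree-opt} through the rotation simulation of Lemma~\ref{l:poptree-sim}, which gives $O(\ell_i(x_i))$ work per search and hence $O(OPT(X))$ by Corollary~\ref{cor:depth}, and you read the two balance claims off Lemma~\ref{l:poptree-always} and Lemma~\ref{l:poptree-bb}. Your extra checks are the chain length of at most $12$ during restoration (Lemma~\ref{l:weakstruct}(2)) and that the demotion trace passes only through partial expansions; the paper leaves these implicit inside those lemmas rather than stating them in the theorem's proof.
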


\begin{proof}
Run the PopBTree algorithm of Theorem~\ref{t:tree-opt} on the right-chain
expansion. By Lemma~\ref{l:poptree-sim} each of its pointer moves,
promotions, and demotions is realized by $O(1)$ unit-cost operations of the
BST model
and carries only a
constant-factor overhead. Hence the work for the $i$th search, which is
$O(\ell_i(x_i))$ on the PopBTree (Theorem~\ref{t:tree-opt}), remains
$O(\ell_i(x_i))$ on the PopTree, consistent with Lemma~\ref{l:poptree-depth},
whereby the search descends to PopTree depth $O(\ell_i(x_i))$. Summing over the
sequence and applying Corollary~\ref{cor:depth},
$$
\sum_i O(\ell_i(x_i))\ =\ O(OPT(X))
\qquad\text{for } |X|=\Omega(n\log n).
$$

 After every unit-cost operation the PopTree is a
$BB[\tfrac1{729}]$ tree by Lemma~\ref{l:poptree-always}, as the model demands of
a valid $BB[\tfrac1{729}]$-algorithm. Between searches the SkipPopList
invariants hold, so the PopBTree is a $BBB[\tfrac1{12}]$ tree by
Lemma~\ref{l:bbb} and its expansion is a $BB[\tfrac1{20}]$ tree by
Lemma~\ref{l:poptree-bb}. Thus the whole sequence is executed in
$O(OPT(X))$ time by an algorithm that is at all times a valid
$BB[\tfrac1{729}]$-algorithm.
\end{proof}

\end{document}